\definecolor{sz}{rgb}{0.1,0.2,0.6}
\definecolor{blue}{rgb}{0.1,0.2,0.5}
\definecolor{brown}{rgb}{0.6,0.6,0.2}
\theoremstyle{plain}
\newtheorem{theorem}{Theorem}
\newcommand{\newtheoremwithcrefformat}[2]{%
  \newtheorem{#1}[lemma]{#2}%
  \crefformat{#1}{##2\MakeUppercase#1~##1##3}%
  \Crefformat{#1}{##2\MakeUppercase#1~##1##3}%
}
\newcommand{\newseptheoremwithcrefformat}[2]{%
  \newtheorem{#1}{#2}%
  \crefformat{#1}{##2\MakeUppercase#1~##1##3}%
  \Crefformat{#1}{##2\MakeUppercase#1~##1##3}%
}
\theoremstyle{nonumberplain}
\newtheorem{aloneproof}{Proof}
\newtheorem{proof}{Proof}
\newtheorem{clproof}{Proof}
\newcommand{\AND}{\mathrm{AND}}
\newcommand{\OR}{\mathrm{OR}}
\newcommand{\NOT}{\mathrm{NOT}}
\newcommand{\TRUE}{\mathrm{TRUE}}
\newcommand{\FALSE}{\mathrm{FALSE}}
\newcommand{\inpv}{\mathrm{in}}
\newcommand{\outv}{\mathrm{out}}
\newcommand{\backconn}{\mathrm{bconn}}
\newcommand{\dom}{\mathrm{dom}}
\newcommand{\wcol}{\mathrm{wcol}}
\newcommand{\adm}{\mathrm{adm}}
\newcommand{\WReach}{\mathrm{WReach}}
\newcommand{\CCC}{\mathscr{C}}
\newcommand{\dlogtime}{$\mathsf{dlogtime}$}
\newcommand{\linFPT}{\mathsf{linFPT}}
\newcommand{\FPT}{\mathsf{FPT}}
\newcommand{\XP}{\mathsf{XP}}
\newcommand{\AC}{$\mathsf{AC}$}
\newcommand{\NC}{$\mathsf{NC}$}
\newcommand{\paraAC}[1]{\mathrm{para}\textrm{-}\mathsf{AC}^{#1}}
\newcommand{\paraNC}[1]{\mathrm{para}\textrm{-}\mathsf{NC}^{#1}}
\newcommand{\Oh}{\mathcal{O}}
\newcommand{\Cc}{\mathscr{C}}
\newcommand{\Pp}{\mathcal{P}}
\newcommand{\Qq}{\mathcal{Q}}
\newcommand{\Rr}{\mathcal{R}}
\newcommand{\topnabla}{\widetilde{\nabla}}
\newcommand{\I}{\mathbb{I}}
\newcommand{\N}{\mathbb{N}}
\renewcommand{\phi}{\varphi}
\renewcommand{\epsilon}{\varepsilon}
\newcommand{\FO}{\mathrm{FO}}
\newcommand{\MSO}{\mathrm{MSO}}
\newcommand{\ar}{\mathrm{ar}}
\newcommand{\from}{\colon}
\newcommand{\eps}{\varepsilon}
\renewcommand{\leq}{\leqslant}
\renewcommand{\geq}{\geqslant}
\newcommand{\As}{\mathbb{A}}
\newcommand{\Bs}{\mathbb{B}}
\newcommand{\Cs}{\mathbb{C}}
\newcommand{\Ss}{\mathbb{S}}
\newcommand{\Ts}{\mathbb{T}}
\newcommand{\Us}{\mathbb{U}}
\newcommand{\basic}{\mathsf{type}}
\newcommand{\Basic}{\mathbf{Types}}
\newcommand{\wLambda}{\widehat{\Lambda}}
\newcommand{\wphi}{\widehat{\varphi}}
\newcommand{\lcd}{\mathsf{lcd}}
\newcommand{\parent}{\mathsf{parent}}
\newcommand{\Classes}{\mathcal{U}}
\newcommand{\clr}{\mathsf{class}}
\title{Parameterized circuit complexity of model checking\\ first-order logic on sparse structures
\thanks{
The work of M.\ Pilipczuk and S.\ Siebertz is supported by the National Science Centre of 
Poland via POLONEZ grant agreement UMO-2015/19/P/ST6/03998, 
which has received funding from the European Union's Horizon 2020 research and 
innovation programme (Marie Sk\l odowska-Curie grant agreement No.\ 665778).
The work of Sz.~Toru{\'n}czyk is supported by the National Science Centre of Poland grant 2016/21/D/ST6/01485.
}}
\author{
Micha\l~Pilipczuk \qquad
\qquad Sebastian Siebertz
\qquad Szymon Toru{\'n}czyk\\[0.3cm]
Institute of Informatics, University of Warsaw, Poland\\[0.1cm]
\texttt{\{michal.pilipczuk,siebertz,szymtor\}@mimuw.edu.pl}}
\begin{document}

\maketitle
\begin{abstract}
\noindent 
We prove that for every class $\Cc$ of graphs with effectively bounded expansion,
given a first-order sentence $\varphi$ and an $n$-element structure $\As$ whose Gaifman graph belongs to~$\Cc$, the question whether
$\varphi$ holds in $\As$ can be decided by a family of \AC-circuits of size $f(\varphi)\cdot n^c$ and depth $f(\varphi)+c\log n$, where $f$ is a computable function and $c$ is a universal constant.
This places the model-checking problem for classes of bounded expansion in the parameterized circuit complexity class $\paraAC{1}$.
On the route to our result we prove that the basic decomposition toolbox for classes of bounded expansion, including orderings with bounded weak coloring numbers and low treedepth decompositions,
can be computed in $\paraAC{1}$.
\end{abstract}

\begin{picture}(0,0) \put(395,-258)
{\hbox{\includegraphics[scale=0.25]{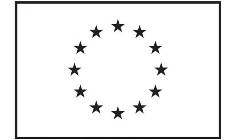}}} \end{picture} 
\vspace{-0.8cm}


\section{Introduction}\label{sec:intro}

\paragraph*{Model-checking on sparse structures.} 
We study the model-checking problem for first-order logic ($\FO$): given a relational structure $\As$ and a first-order sentence $\varphi$ over the vocabulary of~$\As$, decide whether $\varphi$ holds in $\As$.
A naive algorithm for this problem recursively browses through all evaluations of each quantified variable and runs in time $n^{\Oh(|\varphi|)}$, where $n$ is the size of the universe of $\As$.
Thus, the running time is polynomial for every fixed $\varphi$, but the degree of the polynomial depends on $\varphi$. In the language of parameterized complexity,
this means that the model-checking problem for first-order logic on arbitrary structures is in the complexity class $\XP$ when parameterized by the input formula $\varphi$.
This is traditionally put in contrast to the class $\FPT$ (for {\em{fixed-parameter tractable}})
where we require the existence of an algorithm with running time $f(\varphi)\cdot n^c$ for a computable function $f$ and universal constant
$c$; thus, the degree of the polynomial factor has to be independent of the parameter. See~\cite{platypus,DowneyF13,FlumG06} for an introduction to parameterized complexity.

\pagebreak
In general structures we do not hope for an $\FPT$ algorithm for model-checking $\FO$, because the problem is complete for the class $\mathsf{AW}[\star]$ (cf.~\cite{FlumG06}). Already the problem of deciding the  existence of a clique of size $k$ in a given graph of size $n$, which is easily expressible by a first-order formula with~$k$ existential quantifiers, is $\mathsf{W}[1]$ hard in general, so believed not to be
$\FPT$ when parametrized by $k$, i.e., solvable by an algorithm with running time $f(k)\cdot n^c$ for some computable~$f$ and fixed $c$.
However, it was realized that on {\em{sparse}} structures, i.e.\  those whose Gaifman graph is sparse,
efficient parameterized algorithms for model-checking $\FO$ exist. Starting with the result of Seese~\cite{see96}, who gave an
FPT algorithm on structures with universally bounded degree, a long line of research focused on showing fixed-parameterized tractability 
of the problem for more and more general classes of sparse structures: of bounded local treewidth~\cite{frick2001deciding} (this includes planar and bounded-genus structures), 
excluding a fixed minor~\cite{FlumG01}, and locally excluding a minor~\cite{dawar2007locally}.

This line of research naturally converged to studying abstract notions of sparsity: classes of {\em{bounded expansion}} and {\em{nowhere dense}} classes.
These two concepts form foundations of a deep and rapidly developing theory of sparse graph classes, first introduced and pursued
by Ne\v{s}et\v{r}il and Ossona de Mendez~\cite{nevsetvril2008grad,nevsetvril2010first,nevsetvril2011nowhere,NesetrilM08a}, which by now has found multiple applications in combinatorics, algorithm design, and logic.
We refer the reader to the book of Ne\v{s}et\v{r}il and Ossona de Mendez~\cite{sparsity} for a comprehensive overview of the field as of 2012,
and to the lecture notes of the first two authors for a compact and updated exposition of the basic toolbox~\cite{notes}.

Formally, a graph~$H$ is a {\em{depth-$r$ minor}} of a graph~$G$ if~$H$ can be obtained from a subgraph of~$G$ by contracting mutually disjoint connected subgraphs of radius at most $r$.
A class of graphs~$\Cc$ has {\em{bounded expansion}} if there is a function $f\colon \N\to \N$ such that for every $r\in \N$,
in every depth-$r$ minor of a graph from $\Cc$ the ratio between the number of edges and the number
of vertices is bounded by $f(r)$. More generally, $\Cc$ is {\em{nowhere dense}} if there is a function $t\colon \N\to \N$ such that no graph from $\Cc$ admits the clique $K_{t(r)}$ as a depth-$r$ minor. 
Every class of bounded expansion is nowhere dense, but the converse does not necessarily hold~\cite{nevsetvril2011nowhere}.
Class~$\Cc$ has {\em{effectively bounded expansion}}, respectively is {\em{effectively nowhere dense}}, if the respective function~$f$ or $t$ as above is computable.
These definitions are naturally generalized to classes of relational structures by considering the Gaifman graph of a structure.

Many classes of sparse graphs studied in the literature have (effectively) bounded expansion. These include: planar graphs, graphs of bounded maximum degree, graphs of bounded treewidth, and more generally, graphs excluding a fixed (topological) minor.
A notable negative example is that classes with bounded {\em{degeneracy}}, equivalently with bounded {\em{arboricity}},
do not necessarily have bounded expansion, as there we have only a finite bound on the edge density in subgraphs (aka depth-$0$ minors).

By the result of Dvo\v{r}\'ak, Kr\'al', and Thomas~\cite{dvovrak2013testing}, the $\FO$ model-checking
problem on any class~$\Cc$ of effectively bounded expansion admits a linear $\FPT$ algorithm, 
i.e.\  with running time $f(\varphi)\cdot n$
for computable~$f$. Grohe, Kreutzer, and the second author~\cite{grohe2014deciding} lifted this result to any effectively nowhere dense class $\Cc$; here, the dependence on the structure size $n$ is {\em{almost linear}}, i.e.\ 
of the form $n^{1+\eps}$ for any $\eps>0$. As observed by Dvo\v{r}\'ak et al.~\cite{dvovrak2013testing}, the result of Grohe et al.~\cite{grohe2014deciding} is the final answer as long as subgraph-closed classes
are concerned: on any subgraph-closed class $\Cc$ that is not nowhere dense, the $\FO$ model-checking problem is already $\mathsf{AW}[\star]$-complete. Conceptually, this means that the notion of nowhere denseness
exactly characterizes classes of inputs where sparsity-based arguments can lead to efficient parameterized algorithms for deciding first-order definable properties.

\paragraph*{Parameterized circuit complexity.} 
In this paper we take a different angle on the parameterized complexity of model-checking $\FO$, namely that of {\em{circuit complexity}}.
A fundamental fact from descriptive complexity is that $\FO$ is essentially equivalent to $\mathsf{AC}^0$ (c.f.~\cite{0095988} for a precise statement).
In particular, every fixed first-order expressible property can be checked by a family of \AC-circuits of polynomial size and
constant depth. More precisely, provided the property is expressed by a first-order sentence $\varphi$, the circuit for $n$-element inputs has
size $n^{\Oh(|\varphi|)}$ and depth $\Oh(|\varphi|)$. Viewing this via the standard interpretation of circuits as an abstraction for parallel algorithms, this is a highly parallelizable algorithm performing total
(sequential) XP work. Obviously, in general we cannot expect the problem to be solvable by circuits of $\FPT$ size (i.e.,  of size $f(\varphi)\cdot n^c$ for computable~$f$ and constant $c$), 
as evaluating such circuits would yield a sequential $\FPT$ algorithm, implying $\FPT=\mathsf{AW}[\star]$. However, the question for known classes for which $\FO$ model-checking is $\FPT$ persists: how, and in what sense,
can we solve $\FO$ model-checking on these classes using circuits of $\FPT$ size and low depth? Viewing circuits again as a model for parallelization, this would correspond to a well-parallelizable $\FPT$ algorithm.

Curiously, even though the complexity-theoretical foundations of parameterized complexity are expressed using circuit complexity, the question of what are the appropriate analogues of standard circuit complexity classes in
parameterized complexity was not systematically studied up to very recently, when Elberfeld et al.~\cite{ElberfeldST15} and Bannach et al.~\cite{BannachST15} introduced an appropriate definitional layer and gave several foundational results.
Slightly informally, a parameterized problem is in the class $\paraAC{i}$ (where $i>0$) if it can be solved by an (appropriately uniform)
family of \AC-circuits $(C_{n,k})_{n,k\in \N}$, where the circuit $C_{n,k}$ solves the problem on inputs of size $n$ and parameter value $k$,
such that each $C_{n,k}$ has size $f(k)\cdot n^c$ and depth $f(k)+c\cdot \log^i n$, for a computable function $f$ and universal constant $c$.
The classes $\paraNC{i}$ are defined similarly using \NC-circuits. The class $\paraAC{0}$ is defined slightly differently: we require the depth to be bounded by a universal constant, independent of the parameter.
By allowing the depth to be bounded by a function of the parameter we obtain the larger class $\paraAC{0\uparrow}$. We give the formal definitions and fix our notation in Section~\ref{sec:prelims}.

In~\cite{BannachST15} Bannach et al.\ showed how the technique of {\em{color coding}} can be implemented in $\paraAC{0}$, leading to a first batch of results for several parameterized problems.
Later, Bannach and Tantau~\cite{BannachT16} showed that model-checking mona\-dic second-order logic ($\MSO$) can be done in $\paraAC{0\uparrow}$ on structures of bounded treedepth
and in $\paraNC{2+\eps}$ for any \mbox{$\eps>0$} on structures of bounded treewidth; here, the parameter is both the formula and the treedepth, respectively the treewidth of the input structure.
This is a parameterized circuit complexity analogue of the classic  theorem of Courcelle stating that model-checking $\MSO$ is fixed-parameter tractable when parameterized by the formula and the tree\-width of the input
structure. Recent advances show descriptive relations between parameterized circuit complexity and fragments of $\FO$ with a bounded number of variables~\cite{ChenFH17}, 
and applications to kernelization~\cite{BannachT18}.

In this light, it is natural to ask about the parameterized circuit complexity of model checking~$\FO$ on sparse structures. Investigating this question is precisely the goal of this work.
%
Our main result is encompassed by the following theorem.

\begin{theorem}\label{thm:main}
Suppose $\Cc$ is a graph class with effectively bounded expansion and let $\Sigma$ be a relational vocabulary of arity~$2$. 
Then the following problem parameterized by $\varphi\in \FO[\Sigma]$ is in $\paraAC{1}$: given a $\Sigma$-structure $\As$ whose Gaifman graph belongs to $\Cc$, determine whether $\As\models \varphi$.
\end{theorem}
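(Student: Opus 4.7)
The plan is to implement the locality-based quantifier-elimination scheme of Dvo\v{r}\'ak, Kr\'al', and Thomas for $\FO$ on bounded expansion classes by means of circuits: iteratively convert each existential quantifier $\exists y\,\psi(\bar x, y)$ into a bounded quantifier $\exists y \in K(\bar x)$, where $K(\bar x)$ is a ``kernel'' of candidate witnesses of size depending only on $\psi$, extractable locally from the neighborhood of $\bar x$. After $|\varphi|$ rounds we are left with evaluating a quantifier-free formula on an enriched structure whose Gaifman graph still has bounded expansion, and in fact can be taken to have bounded treedepth.

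The first phase of the construction uses the $\paraAC{1}$ decomposition toolbox claimed earlier in the paper (and advertised in the abstract). For an $r$ equal to a computable function of $\varphi$, one computes in $\paraAC{1}$ a vertex ordering of the Gaifman graph with weak $r$-coloring numbers bounded by some $g(r)$, together with a low $r$-treedepth coloring; from these one derives, in constant additional depth, the sets $\WReach_r[v]$ of bounded size for every vertex $v$. These local data drive every subsequent step of the algorithm.

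The second phase is the quantifier-elimination loop. For each existential subformula, the kernel $K(\bar x)$ is defined from the weakly reachable sets of the vertices in $\bar x$, together with one representative per \emph{projection profile}, i.e.\ per local isomorphism type of interaction between a potential witness $y$ and the already-materialized neighborhood of $\bar x$. By the projection-profile lemma characteristic of bounded expansion classes, only boundedly many profiles are realized; enriching the vocabulary by unary predicates marking each profile allows one to rewrite $\varphi$ with one fewer quantifier, over a structure that still has bounded expansion with parameters computable from $\varphi$. Once the formula becomes quantifier-free, we are on a structure of bounded treedepth, so the final evaluation falls into the regime of Bannach and Tantau and can be performed in $\paraAC{0\uparrow}$, hence in $\paraAC{1}$.

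The principal obstacle will be controlling circuit depth. The $\paraAC{1}$ budget permits depth $f(\varphi)+c\log n$ with a \emph{universal} constant $c$, so iterated quantifier elimination cannot afford $\Oh(\log n)$ extra depth per round---that would compound to $f(\varphi)\cdot \log n$, which is outside $\paraAC{1}$. I therefore need to perform the heavy combinatorial work (orderings, low treedepth colorings, $\WReach_r$-sets) once, in a single $\Oh(\log n)$-depth preprocessing pass, and then implement each quantifier-elimination round as a circuit of depth depending only on $\varphi$. A secondary technical hurdle is to ensure, round after round, that the intermediate enriched structures lie in a bounded expansion class whose parameters, and whose associated decompositions, can be refreshed without triggering further $\log n$-depth computation; this should be achievable by computing the decompositions for the worst-case $r$ once, upfront, and reusing them throughout all subsequent rounds.
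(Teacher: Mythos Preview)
Your plan is essentially the paper's plan, and you have correctly identified the one genuinely delicate point: a naive composition of the quantifier-elimination rounds costs $f(\varphi)\cdot\log n$ depth and only gives $\paraAC{1\uparrow}$; to land in $\paraAC{1}$ one must precompute all the needed decompositions (low treedepth colorings for every $p$ up to a bound computable from $\varphi$) once, in a single $\Oh(\log n)$-depth pass, and then carry out each elimination round in $\paraAC{0\uparrow}$. The paper states exactly this caveat and resolves it exactly this way.

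The technical route diverges somewhat. The paper does not use weak reachability sets, projection profiles, or witness kernels directly; instead it computes a treedepth-$p$ coloring, builds a DFS forest of bounded depth on every $p$-tuple of color classes, encodes the induced substructure as a labeling of that forest, and performs quantifier elimination purely on labeled forests via explicit ``lcd-reduced'' formulas. The resulting enriched structure is a \emph{skeleton}: a union of boundedly many bounded-depth forests over the same universe, guarded by the original Gaifman graph. Your profile-and-kernel approach (closer to the original Dvo\v{r}\'ak--Kr\'al'--Thomas formulation) is a legitimate alternative and should also go through, but the paper's forest-based presentation has the advantage that each elimination step is a concrete relabeling of a bounded-depth forest, which is transparently $\paraAC{0\uparrow}$.

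One inaccuracy to flag: after quantifier elimination you are \emph{not} left with a structure of bounded treedepth---the enriched structure still only has bounded expansion (its Gaifman graph is a subgraph of the original). What you are left with is a quantifier-free (in the paper: lcd-reduced) formula, and evaluating such a formula on a single tuple is trivially $\paraAC{0\uparrow}$; there is no need to invoke Bannach--Tantau's $\MSO$ machinery here. The paper handles the sentence case by introducing one dummy free variable so that Theorem~\ref{thm:qe-main} applies, and then evaluates the resulting one-variable lcd-reduced formula on an arbitrary element.
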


\pagebreak Unraveling the definitions, Theorem~\ref{thm:main} states that model-checking $\FO$ on $\Sigma$-structures with Gaifman graphs belonging to $\Cc$ can be done using a family of \AC{} circuits $(C_{n,\varphi})_{n\in \N,\varphi\in \FO[\Sigma]}$,
where $C_{n,\varphi}$ verifies the satisfaction of $\varphi$ on structures with $n$ elements, and each $C_{n,\varphi}$ has size $f(\varphi)\cdot n^c$ and depth $f(\varphi)+c\log n$, for a computable function $f$ and universal
constant~$c$. Viewing circuits as an abstraction for parallel algorithms, this means that the problem can be solved in parallel time $f(\varphi)+c\log n$ and performing total work $f(\varphi)\cdot n^c$.
Hence Theorem~\ref{thm:main} can be regarded as a parallelized variant of the result of Dvo\v{r}\'ak et al.~\cite{dvovrak2013testing}.

The assumption in Theorem~\ref{thm:main} that $\Sigma$ has arity $2$ allows us to abstract away the question of how the input is represented, as we simply assume that each
relation in $\As$ is encoded on input as a one- or two-dimensional boolean table. 
In the presence of higher-arity relations, the choice of an encoding could influence the statements about bounds on circuit sizes in a technical way. We prefer to avoid
 these issues and simply assume that there are no higher-arity relations.

\paragraph*{Our techniques.} We prove Theorem~\ref{thm:main} by analyzing the existing approach to proving fixed-parameter tractability in the sequential case.
Essentially, the idea is to first compute a suitable decomposition of the input structure, and then leverage this decomposition to give a quantifier elimination procedure for first-order logic.
A suitable decomposition has the form of a {\em{low treedepth coloring}} that uses a bounded number of colors; 
it is known that such colorings exist for graphs from classes of bounded expansion~\cite{nevsetvril2008grad}. 
%
Efficient algorithms for computing low treedepth colorings provided by Ne\v{s}et\v{r}il and Ossona de Mendez~\cite{nevsetvril2008grad} allowed
Dvo\v{r}\'ak et al.~\cite{dvovrak2013testing} to give an efficient quantifier elimination procedure that reduces every first-order formula to a quantifier-free formula
at the cost of extending the structure by adding new unary relations and unary functions, which however do not change the Gaifman graph. From this, an algorithm for model-checking follows.
Later, Grohe and Kreutzer~\cite{grokre11} gave a new presentation of the quantifier elimination procedure, which is conceptually quite different from the original argument of~\cite{dvovrak2013testing}.
In particular it reduces every formula to an existential formula instead of quantifier-free one, but the extension of the structure does not use  function symbols.

Our work toward the proof of Theorem~\ref{thm:main} is divided into two parts.
First, we prove that a low treedepth coloring of the graph can be computed in $\paraAC{1}$.
Second, using this result we revisit the quantifier elimination procedure and show that it can be implemented in $\paraAC{1}$.

%
%


For computing a low treedepth coloring, on high level we follow the standard approach: first find a vertex ordering of the given graph with bounded {\em{weak coloring number}}, 
and then apply a coloring procedure on this vertex ordering to get a low treedepth coloring.
Classic implementations of both these steps are sequential, however we show that both of them can be performed in $\paraAC{1}$.
For the first step, the main idea is to construct the ordering by extracting the vertices not one by one, as a sequential algorithm would do, but in much larger chunks.
Namely, we perform $\log n$ rounds where in each round at least half of the remaining vertices are extracted and ordered, which directly translates to a construction of a $\paraAC{1}$ circuit family.
This comes at a price in the quality of the obtained ordering; in other words, we trade the approximation factor of the algorithm for its parallelization.
For the second step, we follow the classic divide-and-conquer approach to parallel coloring graphs of bounded maximum degree. Then we extend this to graphs of bounded degeneracy by
applying essentially the same technique of dividing the graph into $\log n$ parts, each inducing a graph of bounded maximum degree.

\pagebreak
We remark that the approach explained above is heavily inspired by the existing body of work on {\em{distributed algorithms}}
for sparse graphs. 
%
We relied on ideas from Barenboim and Elkin~\cite{BarenboimE10} who, among other results,
gave an $\Oh(\log n)$-time distributed algorithm that, given a graph of degeneracy $d$, finds its proper coloring with $\Oh(d^2)$ colors\footnote{Barenboim and Elkin use the parameter {\em{arboricity}} which differs
from degeneracy by multiplicative factor at most~$2$.}.
In particular, Barenboim and Elkin showed how to compute an approximate degeneracy ordering of such a graph in $\Oh(\log n)$ communication rounds by extracting half of the remaining vertices in each round;
our algorithm for this step is a circuit implementation of this procedure, lifted to the weak coloring number instead of degeneracy.
Using the results of~\cite{BarenboimE10} and the approach via fraternal augmentations, 
Ne\v{s}et\v{r}il and Ossona de Mendez~\cite{NesetrilM16} gave a logarithmic-time distributed algorithm that, given a graph from a fixed class of bounded expansion,
computes its treedepth-$p$ coloring using a constant number of colors, for any constant~$p$.


For the quantifier elimination procedure, we essentially revisit the existing approach and show that it can be implemented in $\paraAC{1}$.
This requires technical attention in several places, but conceptually there is no new ingredient.
Our argument is roughly based on the exposition of Grohe and Kreutzer~\cite{grokre11}. However, we obtain a stronger final form, similar to that of Dvo\v{r}\'ak et al.~\cite{dvovrak2013testing},
replace the usage of $\FO$ types with an explicit combinatorial argument in the spirit of marking witnesses as in~\cite{dvovrak2013testing}, and streamline the presentation.

\paragraph*{Additional results.}
We believe that our approach to the proof of Theorem~\ref{thm:main} has an additional benefit in that we implement most of the basic algorithmic toolbox for classes of bounded expansion in $\paraAC{1}$.
This can be re-used for problems other than model-checking first-order logic.

For instance, consider the problem of computing the smallest {\em{distance-$r$ dominating set}} in a given graph $G$, which is a subset of vertices $D$ such that every vertex of $G$ is at distance
at most $r$ from some vertex of $D$. This problem often serves as a benchmark for sparsity methods, and it was considered in the theory of sparse graphs from the points of view of
parameterized algorithms~\cite{DawarK09}, approximation~\cite{AmiriMRS17,Dvorak13}, and kernelization~\cite{drange2016kernelization,eickmeyer2016neighborhood,siebertz2016polynomial}.
In particular, Amiri et al.~\cite{AmiriMRS17} have recently used the results of Ne\v{s}et\v{r}il and Ossona de Mendez~\cite{NesetrilM16} to give a distributed logarithmic-time 
constant-factor approximation algorithm for the distance-$r$ dominating set problem on any class of bounded expansion. 
By combining their ideas with our constructions of orderings with bounded weak coloring numbers, we immediately obtain
the following approximation result.

\begin{theorem}\label{thm:domset-main}
Suppose $\Cc$ is a graph class with effectively bounded expansion.
Then there exists a computable function $\alpha\colon \N\to \N$ such that 
the following problem parameterized by $r\in \N$ is in $\paraAC{1}$: given a graph $G\in \Cc$, compute a distance-$r$ dominating set in $G$ of size at most $\alpha(r)\cdot \dom_r(G)$.
Here, $\dom_r(G)$ denotes the size of a smallest distance-$r$ dominating set in $G$.
\end{theorem}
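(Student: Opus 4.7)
The plan is to combine the $\paraAC{1}$ machinery developed earlier in this paper with the combinatorial reduction of Amiri et al.~\cite{AmiriMRS17} from distance-$r$ domination on classes of bounded expansion to a set-cover-type instance of bounded frequency. Their distributed approach already fits into $O(\log n)$ rounds, and its basic primitives are exactly the logarithmic-round Nešetřil–Ossona de Mendez style ordering computation whose circuit analogue is one of the main contributions of this paper. So at a high level, I would take their algorithm, replace each distributed primitive by its $\paraAC{1}$ counterpart, and unroll the rounds into circuit layers.

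Concretely, the first step would invoke the $\paraAC{1}$ construction (established in the paper) that, given $G\in\Cc$ and radius $r$, outputs a vertex ordering $\sigma$ of $V(G)$ with $\wcol_{2r}(G,\sigma)\le\beta(r)$ for some computable function $\beta$ depending only on $\Cc$. Using $\sigma$, I would in parallel compute for every vertex $v$ its weak-reachability set $\WReach_{2r}[G,\sigma,v]$, together with the $r$-neighborhood bookkeeping used by Amiri et al. Each such set has size at most $\beta(r)$ and is determined by a bounded-radius BFS of depth $2r$, so all of these quantities are computable in constant depth (depending on $r$) and polynomial size, hence within the $\paraAC{1}$ budget already during the preprocessing phase.

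The second step is to implement the approximation routine itself. The reduction of~\cite{AmiriMRS17} yields a set system of frequency at most $\beta(r)$ together with a constant-factor rounding scheme that runs in $O(\log n)$ distributed rounds; each round consists of local operations such as comparing ranks, selecting argmax over bounded neighborhoods, and propagating status bits. Each individual round is implementable in $\paraAC{0}$ by a constant-depth, polynomial-size subcircuit, because every local decision depends on at most $\beta(r)$ values. Stacking the $O(\log n)$ rounds into a single circuit then yields total depth $f(r)+c\log n$ and size $f(r)\cdot n^c$, exactly the $\paraAC{1}$ shape. Setting $\alpha(r)$ to the approximation ratio guaranteed by~\cite{AmiriMRS17}, which is a computable function of $\beta(r)$, gives the required bound $|D|\le\alpha(r)\cdot\dom_r(G)$.

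The main obstacle is bookkeeping rather than conceptual: one must verify that every sub-step of Amiri et al.'s distributed procedure is genuinely a local, bounded-frequency operation whose round-by-round behavior translates to a constant-depth, polynomial-size gate family, and that composing these over $\Theta(\log n)$ rounds does not blow up the size beyond $f(r)\cdot n^c$. Handling this carefully relies on the fact that all nontrivial "neighborhoods" in the reduction are already bounded by $\beta(r)$, so global aggregation never occurs inside a single round; only the $O(\log n)$ outer iterations contribute logarithmic depth.
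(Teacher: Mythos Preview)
Your high-level plan is workable, but it is considerably more elaborate than what the paper actually does, and the extra machinery buys nothing here.

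The paper's proof shares your first step exactly: use Theorem~\ref{thm:wcol-apx} to compute in $\paraAC{1}$ a vertex ordering $\sigma$ with $\wcol_{2r}(G,\sigma)\le g(2r,d)$, where $d$ is a computable bound depending on $r$ and $\Cc$. The divergence is in the second step. You propose to implement the full $O(\log n)$-round bounded-frequency set-cover rounding of Amiri et al.\ as a stack of constant-depth layers. The paper instead extracts from~\cite{AmiriMRS17} a single combinatorial observation (stated here as Theorem~\ref{thm:distributed}): if for each vertex $u$ one lets $d(u)$ be the $\sigma$-smallest element of $\WReach_r[G,\sigma,u]$, then $D=\{d(u):u\in V(G)\}$ is already a distance-$r$ dominating set of size at most $\wcol_{2r}(G,\sigma)\cdot\dom_r(G)$. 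After computing the weak-reachability relation via Lemma~\ref{lem:wreach} (depth $O(\log r)$), selecting each $d(u)$ and forming $D$ is a single constant-depth step. There is no iterative rounding, no $O(\log n)$ outer loop, and hence no depth-accounting subtlety at all.

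Your approach would likely go through, but note that your claim ``each round is in $\paraAC{0}$'' is exactly the kind of bookkeeping you flag as the main obstacle: if any round ends up with depth depending on $r$, the stacked circuit lands in $\paraAC{1\uparrow}$ rather than $\paraAC{1}$. The paper's route sidesteps this entirely, since the only logarithmic-depth component is the ordering computation itself.
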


\paragraph*{Organization.} In Section~\ref{sec:prelims} we establish notation and recall known results.
In Section~\ref{sec:orderings} we show how to compute vertex orderings with low weak coloring numbers,
while in Section~\ref{sec:low-td} we show how to construct low treedepth colorings.
Section~\ref{sec:mc} contains the proof of the main result, Theorem~\ref{thm:main}.
We conclude in Section~\ref{sec:conclusions} with some final remarks and open problems.
\section{Preliminaries}\label{sec:prelims}

%
All graphs considered in this paper are simple, i.e. do not contain self-loops or multiple edges connecting the same pair of vertices.
We use standard graph notation, see e.g.~\cite{platypus}. 

\paragraph*{Classes of bounded expansion.}
As mentioned in the introduction, a graph $H$ is a {\em{depth-$r$ minor}} of a graph $G$ if $H$ can be obtained from a subgraph of $G$ by contracting mutually disjoint connected subgraphs of radius at most $r$.
Graph $H$ is a {\em{depth-$r$ topological minor}} of~$G$ if there is a subgraph of~$G$ that is an {\em{$\leq 2r$-subdivision}} of $H$, that is, can be obtained from~$H$ by replacing each edge by a path
of length at most $2r+1$. It is easy to see that if $H$ is a depth-$r$ topological minor of $G$, then $H$ is also a depth-$r$ minor of $G$.

A graph $H$ is a {\em{depth-$r$ minor}} of $G$ if one can find a {\em{depth-$r$ minor model}} $(I_u)_{u\in V(H)}$  of $H$ in~$G$,
where $I_u$ for $u\in V(H)$ are pairwise vertex-disjoint connected subgraphs of $G$ of radius at most $r$ such that for each $uv\in E(H)$ there is an edge between a vertex of $I_u$ and a vertex of~$I_v$.
In other words, in the standard definition of a minor model we restrict the {\em{branch sets}} $I_u$ to have radius at most $r$.
Similarly, $H$ is a {\em{depth-$r$ topological minor}} of $G$ if there exists a {\em{depth-$r$ topological minor model}} $\mu$ of $H$ in $G$:
such $\mu$ is a mapping that sends vertices of $H$ to pairwise different vertices of $G$ and edges of $H$ to paths of length at most $2r+1$ in $G$ such that $\mu(uv)$ has endpoints $\mu(u)$ and $\mu(v)$ for each
$uv\in E(H)$, and paths in $\mu(E(H))$ pairwise may share only the endpoints.

The {\em{edge density}} of a graph $G$ is the ratio between the number of edges and the number of vertices in $G$, i.e., $\frac{|E(G)|}{|V(G)|}$.
For a graph $G$, by $\nabla_r(G)$ we denote the maximum over all depth-$r$ minors $H$ of $G$ of the edge density of $H$. 
Similarly, $\topnabla_r(G)$ denotes the maximum edge density in a depth-$r$ topological minor
of $G$. As a depth-$r$ topological minor is also a depth-$r$ minor, we have $\topnabla_r(G)\leq \nabla_r(G)$. However, it is known that $\nabla_r(G)$ is also bounded from above
by a computable function of $r$ and $\topnabla_r(G)$~\cite{dvorak2007asymptotical}. 
For a graph class $\Cc$, we denote $\nabla_r(\Cc)=\sup_{G\in \Cc} \nabla_r(G)$.
 and $\topnabla_r(\Cc)=\sup_{G\in \Cc} \topnabla_r(G)$.
A graph class~$\Cc$ has {\em{bounded expansion}} if $\nabla_r(\Cc)$ is finite for all $r\in \N$, equivalently if $\topnabla_r(\Cc)$ is finite for all $r\in \N$.
It has {\em{effectively bounded expansion}} if there is a computable (from~$r$) upper bound on $\nabla_r(\Cc)$, equivalently on $\topnabla_r(\Cc)$.

\paragraph*{Parameterized circuit complexity.} 
We explain the definitional layer of parameterized circuit complexity basing our notation on Bannach et al.~\cite{BannachST15}, 
though prior foundational work on parameterized circuit complexity was done by Elberfeld et al.~\cite{ElberfeldST15}.
An {\em{\AC-circuit}} $C$ is a directed acyclic graph with node set consisting of input, conjunction ($\AND$), disjunction ($\OR$), and negation ($\NOT$) gates.
There are no restrictions on the fan-in or fan-out of the gates.
One or more sources of $C$ are designated as the output gates, and both input and output gates of $C$ are ordered. 
If $(u_1,\ldots,u_n)$ and $(v_1,\ldots,v_m)$ are the input and output gates of $C$, respectively, then~$C$ evaluates a function from $\{0,1\}^n$ to $\{0,1\}^m$ defined as follows:
given input $x=(x_1,\ldots,x_n)$, set the value of each input gate $u_i$ to $x_i$, evaluate the gates of the circuit in a bottom-up manner naturally, and define the output $y$ to be
the sequence of values computed in gates $(v_1,\ldots,v_m)$. 
The {\em{depth}} of a circuit is the length of a longest path from an input gate to an output gate.
The {\em{size}} of a circuit is the number of its gates.

\pagebreak
A {\em{parameterized transformation}} is a function $F\colon \{0,1\}^\star\to \{0,1\}^\star$ together with a polynomial-time computable function $\kappa\colon \{0,1\}^\star\to \I$, called {\em{parameterization}}. 
Here $\I$ is some indexing set for parameters and we assume that its elements
can be encoded as binary strings. 
Typically in parameterized complexity we have $\I=\N$ --- the parameter is just an integer --- but it will be convenient to assume larger generality, as some our circuit
families will be indexed by first-order sentences on graphs or tuples of integers. 
A {\em{parameterized problem}} is just a parameterized transformation with the output always belonging to $\{0,1\}$, for false and true, respectively.

A parameterized transformation is in the class $\FPT$ if there is an algorithm that computes it in time $f(k)\cdot n^c$ on inputs of size $n$ and parameter value $k$, where $f$ is a computable function and $c$ is
a universal constant. It is in class $\linFPT$ if moreover $c=1$ and $\kappa(\cdot)$ is linear-time computable.

For an indexing set $\I$, we may consider a family $(C_{n,k})_{n\in \N,k\in \I}$ of \AC-circuits, where each $C_{n,k}$ has exactly $n$ inputs.
We say that such a family is {\em{\dlogtime-uniform}} if there exists an algorithm that given $n\in \N$, $k\in \I$, and $i\in \N$, all encoded in binary, computes the $i$-th bit of the encoding of $C_{n,k}$ 
in time $f(k)+\Oh(\log i+\log n)$, for some computable function $f$. All circuit families in this paper are \dlogtime-uniform; this will always follow from the construction in a straightforward manner, so
we refrain from providing technical details in order not to obfuscate the main ideas.

For $i>0$, we say that a parameterized transformation $(F,\kappa)$ is in $\paraAC{i}$ if there exists a \dlogtime-uniform family of circuits $(C_{n,k})_{n\in \N,k\in \I}$ 
such that 
\begin{itemize}
\item $C_{n,k}$ has size $f(k)\cdot n^{\Oh(1)}$ and depth $f(k)+\Oh(\log^i n)$ for some computable function $f\colon \I\to \N$;
\item for each $x\in \{0,1\}^\star$, the output of $C_{|x|,\kappa(x)}$ applied to $x$ is $F(x)$.
\end{itemize}

Note that the above definition implicitly assumes that for all $x,x'$ with $|x|=|x'|$ and $\kappa(x)=\kappa(x')$, the outputs $F(x)$ and $F(x')$ have the same length, as this length must be equal to the number
of outputs of $C_{|x|,\kappa(x)}$. 
Bannach et al.~\cite{BannachST15} also define a larger class $\paraAC{i\uparrow}$ by relaxing the restriction on the depth from $f(k)+\Oh(\log^i n)$ to $f(k)\cdot \log^i n$, for a computable function $f$.
It is easy to see that $\paraAC{i}\subseteq \paraAC{i\uparrow}\subseteq \paraAC{i+\varepsilon}$ for any $\varepsilon>0$.
In fact, in this paper we would be able to simplify some arguments if we only wanted to prove containment in $\paraAC{1\uparrow}$.

For $i=0$, the classes $\paraAC{0}$ and $\paraAC{0\uparrow}$ are defined slightly differently: 
in $\paraAC{0}$ we require that the depth of the circuits is $\Oh(1)$, i.e.~bounded by a universal constant independent of the parameter, while in $\paraAC{0\uparrow}$ we allow the depth of $C_{n,k}$ to
be bounded by $f(k)$, for a computable function $f$.

Let us briefly elaborate on the differences between the classes $\paraAC{i}$ and $\paraAC{i\uparrow}$.
Both definitions are natural candidates for what a parameterized analogue of $\mathsf{AC}^i$ should be, as in both cases the class becomes $\mathsf{AC}^i$ whenever $k$ is fixed to be a constant.
However, bounding the depth by $f(k)+c\cdot \log n^i$ instead of $f(k)\cdot \log n^i$ gives better guarantees when transforming the circuit to a formula (a circuit with maximum fan-out $1$).
For instance, every $\paraNC{1}$ circuit (where we restrict fan-in to be at most $2$) can be unravelled to an equivalent $\paraNC{1}$ formula, which is the analogue of a well-known property of $\mathsf{NC}^1$, 
but this is no longer the case for $\paraNC{1\uparrow}$, because
the formula size would be $n^{f(k)}$ instead of $f(k)\cdot n^{\Oh(1)}$. Similarly, every $\paraAC{0}$ circuit can be unravelled to an equivalent $\paraAC{0}$ formula, but this property is not shared
by $\paraAC{0\uparrow}$. Nevertheless, classes $\paraAC{i\uparrow}$ are still worth studying due to encompassing many natural algorithms.
This state reflects the situation in parameterized analogues of nondeterministic logspace, where the difference between bounding the space by $f(k)+c\log n$ and by $f(k)\cdot \log n$ has dramatic implications for
determinization results. We refer to the work of Elberfeld et al.~\cite{ElberfeldST15} for a broader exposition of these connections.

\paragraph{Graph problems.}
Typically, throughout this paper the input to a parameterized transformation will be a graph $G$ on~$n$ vertices. In this case we will always assume that the input is encoded as the $n\times n$ binary adjacency 
matrix of the graph; thus the circuit computing the transformation needs to have~$n^2$ inputs. Abusing the above notation somewhat, for parameterized circuit classes,
we will interpret the $|x|$ for an encoding $x$ of $G$ as the number $n$ of vertices of $G$, instead of the actual length $n^2$ of $x$. Thus, the domain of a parameterized transformation defined on graphs consists
of all words of length $n^2$ for some integer $n$, interpreted as adjacency matrices, and each circuit $C_{n,k}$ will actually have $n^2$ inputs. In case the input to the transformation consists of a graph together
with some additional piece of information (e.g.~additional numerical parameters, a coloring of the graph, or an ordering of its vertices), 
the appropriate encoding of this additional information (the form of which will be specified later) 
is provided via extra input gates. In all cases, the circuit $C_{n,k}$ will be responsible for the treatment of instances with~$n$ vertices and parameter value $k$.

In case of problems parameterized by additional numerical parameters (e.g. ``given a graph~$G$ with maximum degree at most $d$, where $d$ is the parameter''), we assume that the numerical parameters are
appended to the input and the parameterization function $\kappa$ just extracts this part of the input and presents it as the parameter. Also, we do not assume that a circuit needs to check whether the
input satisfies the stated constraint. 
For instance, if we say that a circuit computes some output given a graph whose treedepth is at most $h$, then we only state that the output is computed correctly
provided the input graph has treedepth at most $h$, while we do not assert anything about the output of the circuit on graphs of higher treedepth. 


\paragraph{Basic circuit constructions.}
One of the fundamental results for parameterized circuit complexity is that counting up to a threshold parameter $d$ can be done in $\paraAC{0}$. 
More precisely, consider the problem {\sc{Threshold}} parameterized by $d$: given a word $x\in \{0,1\}^\star$, determine whether $x$ has at most $d$ ones.
A naive construction of a constant-depth circuit would be to check every $d$-tuple of inputs, but this would result in a circuit of size $\Omega(n^d)$.
However, Bannach et al.~\cite{BannachST15} showed that {\sc{Threshold}} in fact is in $\paraAC{0}$ using a parallel implementation of color coding. 

\begin{theorem}[Lemma 3.3 of~\cite{BannachST15}]\label{thm:threshold}
{\sc{Threshold}} is in $\paraAC{0}$.
\end{theorem}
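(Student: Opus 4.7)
The plan is to derandomize the color-coding technique. Given the parameter $d$, the circuit is based on an $(n,d+1)$-perfect hash family: a collection $\mathcal{H} = \{h_1,\ldots,h_t\}$ of functions $h_j\colon [n] \to [d+1]$ with the property that for every $(d+1)$-subset $S \subseteq [n]$, some $h_j \in \mathcal{H}$ is injective on $S$. Standard constructions (for instance, the splitter-based approach of Naor, Schulman, and Srinivasan) yield such families with $t = f(d) \cdot O(\log n)$ for a computable function~$f$, and moreover admit a \dlogtime-uniform implementation: given $j$ and $i$, the value $h_j(i)$ can be computed in time $f(d) + O(\log j + \log i + \log n)$.

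Given this family, the circuit is straightforward. On input $x = (x_1,\ldots,x_n)$, for each $j \in [t]$ and $c \in [d+1]$ I form
\[
y_{j,c} = \bigvee_{i \colon h_j(i) = c} x_i,
\]
indicating that color class $c$ under $h_j$ contains at least one input bit set to~$1$. Then for each $j$ I form
\[
z_j = \bigwedge_{c \in [d+1]} y_{j,c},
\]
and finally the circuit outputs $\neg \bigvee_{j \in [t]} z_j$. This circuit has depth $4$ (a $\NOT$ gate over a disjunction over $j$, a conjunction over $c$, and a disjunction over $i$ with $h_j(i)=c$) and size $g(d) \cdot n \log n$ for a computable $g$, placing it in $\paraAC{0}$.

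Correctness follows from the defining property of the hash family. If $x$ has at most $d$ ones, then for every $h_j$ at least one color class contains no one, whence every $z_j = 0$ and the output is~$1$. Conversely, if $x$ has at least $d+1$ ones, fix any $(d+1)$-subset~$S$ of positions carrying a one; by the perfect hash property some $h_j$ is injective on~$S$, so every color $c \in [d+1]$ is attained by some $i \in S$, giving $y_{j,c} = 1$ for all $c$, hence $z_j = 1$, and the output is $0$.

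The main obstacle is ensuring that the chosen perfect hash family is genuinely \dlogtime-uniform with the claimed bounds on size and on the cost of evaluating $h_j(i)$. This is the delicate point: probabilistic existence is easy, but explicit constructions with logarithmic-depth bit-indexing are needed. Fortunately this ingredient is provided off-the-shelf by the derandomization literature, so once invoked, the remaining steps are routine.
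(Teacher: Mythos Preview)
The paper does not give its own proof of this statement; it is imported verbatim as Lemma~3.3 of~\cite{BannachST15}, with only the one-line remark that the proof there proceeds ``using a parallel implementation of color coding.'' Your proposal is correct and is exactly that approach: the $(n,d{+}1)$-perfect hash families you invoke are the same objects as the universal coloring families that the paper later recalls (Theorem~3.2 of~\cite{BannachST15}, quoted in the proof of Lemma~\ref{lem:backconn}), and your constant-depth test ``some $h_j$ spreads the ones over all $d{+}1$ colors'' is the intended argument. So there is nothing to contrast here --- your write-up is essentially a reconstruction of the cited proof.
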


Note that by our definition of $\paraAC{0}$, the circuits in the family provided by Theorem~\ref{thm:threshold} have depth bounded by a universal constant, independent of $d$.
From Theorem~\ref{thm:threshold} we can immediately derive the following corollary; henceforth, a subset of vertices of an $n$-vertex graph will be encoded in our circuits 
as its characteristic (binary) vector using $n$ gates.

\begin{corollary}\label{cor:high-deg}
The following transformation parameterized by $d$ is in $\paraAC{0}$: given a graph~$G$, compute the set of vertices of $G$ that have degree at most $d$.
\end{corollary}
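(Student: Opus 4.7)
\begin{aloneproof}[Proof proposal]
The plan is to reduce the problem to $n$ independent invocations of \textsc{Threshold} run in parallel, one per vertex. Recall that the input graph $G$ is encoded by its $n\times n$ adjacency matrix, so for every vertex $v\in V(G)$ the $n$ bits in the $v$-th row of this matrix form exactly the characteristic vector of the neighborhood of $v$; in particular, the Hamming weight of this row equals $\deg_G(v)$. Consequently, the predicate ``$\deg_G(v)\le d$'' is nothing other than the \textsc{Threshold} predicate with threshold parameter $d$, applied to the $v$-th row.

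I would therefore build the circuit $C_{n,d}$ as follows. Apply \cref{thm:threshold} to obtain a $\mathsf{dlogtime}$-uniform family $(T_{n,d})_{n,d}$ of $\AC$-circuits for \textsc{Threshold} with size $f(d)\cdot n^{\Oh(1)}$ and depth $\Oh(1)$ (with the depth bound independent of $d$). For each $v\in\{1,\dots,n\}$, instantiate a fresh copy $T^{(v)}$ of $T_{n,d}$, wire its $n$ input gates to the $n$ gates of the adjacency matrix corresponding to row $v$, and designate the output of $T^{(v)}$ as the $v$-th bit of the overall output. The resulting circuit $C_{n,d}$ has size $n\cdot f(d)\cdot n^{\Oh(1)}=f(d)\cdot n^{\Oh(1)}$ and depth $\Oh(1)$, and its $n$-bit output is precisely the characteristic vector of the set of vertices of degree at most $d$, as required by the encoding convention for vertex subsets.

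It remains to check $\mathsf{dlogtime}$-uniformity of the family $(C_{n,d})_{n,d}$, but this is immediate from the uniformity of $(T_{n,d})$: given $n$, $d$, and the index of a requested gate of $C_{n,d}$, one decodes in logarithmic time which copy $T^{(v)}$ the gate belongs to and then queries the corresponding gate of $T_{n,d}$ via the uniformity algorithm of \cref{thm:threshold}; the wiring between the adjacency-matrix input gates and the copies $T^{(v)}$ is a trivial arithmetic bijection computable in $\Oh(\log n)$ time. There is no real obstacle in this proof; the only point that needs care is that \cref{thm:threshold} guarantees depth bounded by a universal constant independent of the parameter $d$, which is what allows $C_{n,d}$ itself to satisfy the stricter depth requirement of $\paraAC{0}$ rather than merely $\paraAC{0\uparrow}$.
\end{aloneproof}
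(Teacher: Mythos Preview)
Your proof is correct and follows exactly the same approach as the paper: apply the \textsc{Threshold} circuit of \cref{thm:threshold} to each row of the adjacency matrix in parallel. Your version simply spells out the size, depth, and uniformity details that the paper leaves implicit.
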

\begin{proof}
For every vertex $u$ of $G$, apply the circuit given by Theorem~\ref{thm:threshold} to the row of the adjacency matrix of $G$ corresponding to $u$.
\end{proof}

Another primitive in our algorithms will be counting distances in graphs up to a 
fixed threshold $r\in \N$.
This is encapsulated in the following lemma.

\begin{lemma}\label{lem:distances}
There exists a \dlogtime-uniform family of \AC-circuits $(D_{n,r})_{n,r\in \N}$ such that each~$D_{n,r}$,
given a $n$-vertex graph~$G$, outputs the $n\times n$ boolean matrix encoding, for each pair of vertices $u,v$ of $G$, whether the distance between $u$ and $v$ in $G$ is at most $r$.
Each $D_{n,r}$ has size $n^{\Oh(1)}$ and depth $\Oh(\log r)$.
\end{lemma}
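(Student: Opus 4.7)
The plan is to express the desired output as the Boolean $r$-th power of the adjacency matrix with self-loops added, and to compute this power by iterated squaring. Let $A$ be the input adjacency matrix and set $A' := A \vee I$ entrywise. Because every vertex carries a self-loop in $A'$, any walk of length at most $k$ in $G$ can be padded with self-loops to a walk of length exactly $k$, so the Boolean matrix power $(A')^k$ has a $1$ in position $(u,v)$ iff $\dist_G(u,v) \leq k$. Hence the circuit $D_{n,r}$ should output $(A')^{r'}$ for $r' := \min(r, n-1)$; the cap is harmless since in an $n$-vertex graph any two vertices at finite distance are at distance at most $n-1$, so $(A')^{r} = (A')^{r'}$.

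To compute $(A')^{r'}$ I would combine iterated squaring with a balanced product indexed by the binary expansion of $r'$. Forming $A'$ from $A$ takes constant depth. Then inductively set $B_0 := A'$ and $B_{i+1} := B_i \cdot B_i$ for $i = 0, 1, \ldots, \lceil \log r' \rceil - 1$, where the product is Boolean matrix multiplication $(X \cdot Y)_{uv} = \bigvee_w (X_{uw} \wedge Y_{wv})$. Each such multiplication is a depth-$2$ subcircuit (a layer of $n$-fan-in $\AND$ gates feeding a layer of $n$-fan-in $\OR$ gates) of size $\Oh(n^3)$, so producing all the $B_i$'s takes depth $\Oh(\log r)$ and size $\Oh(n^3 \log r')$. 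Writing $r' = \sum_{i \in S} 2^i$, I would then multiply the matrices $\{B_i : i \in S\}$ together via a balanced binary tree of Boolean matrix multiplications. Since $|S| \leq \lceil \log r' \rceil + 1$, the tree has depth $\Oh(\log \log r')$, which is absorbed into the $\Oh(\log r)$ bound. The total size is $\Oh(n^3 \log r') = n^{\Oh(1)}$ with a universal constant (either $r' < n$ or $r' = n - 1$). Uniformity is immediate, since the wiring of Boolean matrix multiplication, of iterated squaring, and of the balanced product governed by the bits of $r'$ is an entirely explicit function of $n$ and $r$, computable bit-by-bit in $\dlogtime$.

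There is no real obstacle here; the only point worth flagging is that one must genuinely use the binary expansion of $r'$ rather than simply running $\lceil \log r' \rceil$ squarings and stopping, because the latter would yield $(A')^{2^{\lceil \log r' \rceil}}$, which records pairs at distance up to $2^{\lceil \log r' \rceil}$ and can therefore be almost twice as permissive as the required threshold $r$.
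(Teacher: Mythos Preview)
Your proposal is correct and takes essentially the same approach as the paper: add self-loops to the adjacency matrix and compute its $r$-th Boolean power by iterated squaring, noting that one Boolean matrix product is a constant-depth $\Oh(n^3)$-size subcircuit. The paper's proof is a two-line sketch that simply invokes ``iterative squaring'' without spelling out the binary-expansion product or the cap $r' = \min(r,n-1)$; your version is more careful on both points, and your final remark about why naive squaring to $(A')^{2^{\lceil \log r\rceil}}$ would overshoot the threshold is a genuine subtlety the paper elides.
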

\begin{proof}
Let $A$ be the adjacency matrix of $G$ with ones added on the diagonal.
Then it suffices to compute $A^r$, the $r$th boolean power of $A$, that is, its $r$th power in the $(\OR,\AND)$-semiring over $\{0,1\}$.
Observe that the distance between $u$ and $v$ in $G$ is at most $r$ if and only if the entry of $A^r$ in the intersection of the $u$-column and $v$-row is equal to $1$.
Observe that $A^r$ can be computed from $A$ by a circuit of size $n^{\Oh(1)}$ and depth $\Oh(\log r)$ using the iterative squaring algorithm.
\end{proof}

\section{Bounded degeneracy}\label{sec:orderings}

In this section we study the case of graphs of degeneracy $d$. Those are graphs which can be linearly ordered in such a way that every vertex has at most $d$ smaller neighbors. 
It is well known that a class $\CCC$ of graphs has  degeneracy
bounded by some constant $d$ if and only if there is a number $c$ such that in every subgraph $H$ of a graph $G\in\CCC$, the  ratio between the number of edges and number of vertices in $H$ is bounded by $c$.
Therefore, bounded degeneracy is similar to bounded expansion, but we only bound the density of depth-$0$ minors, i.e.~subgraphs. Many proof techniques concerning bounded expansion classes stem from the techniques for classes of bounded degeneracy. This 
is no different in our paper. In this section, we  prove some parallelized variants of known results 
for classes of bounded degeneracy.
Specifically,
it is known that  graphs of degeneracy $d$ admit a proper  coloring using $d+1$ colors, and in Lemma~\ref{lem:deg-apx-col},  we show that a  coloring using $\Oh(d^2)$ colors can be computed in $\paraAC{1}$. 
In the following section, we extend this result to classes of bounded expansion. 

%


\paragraph*{Definition and basic properties.}
A {\em{vertex ordering}} of a graph $G$ is any ordering $\sigma=(v_1,\ldots,v_n)$ of the vertices of $G$. A vertex ordering $\sigma$ can be also understood via the linear order $\leq_\sigma$ on $V(G)$ imposed by it:
$u\leq_\sigma v$ iff $u=v$ or~$u$ appears earlier than $v$ in $\sigma$. Whenever a vertex ordering of an $n$-vertex graph is represented in a circuit construction, we assume that it is represented as the 
$n\times n$ boolean matrix encoding the order $\leq_\sigma$. 
%
The {\em{degeneracy}} of the ordering~$\sigma$ is the least $d$ such that every vertex $v\in V(G)$ has at most $d$ neighbors $u$ satisfying 
$u<_\sigma v$. The {\em{degeneracy}} of a graph is the smallest possible degeneracy of its vertex orderings.

The following basic lemma is well-known. In fact, the property expressed in it is commonly used as the base definition of degeneracy.

\begin{proposition}\label{prop:subgraph}
The degeneracy of a graph $G$ is equal to the smallest integer $d$ such that every subgraph of $G$ contains a vertex of degree at most $d$.
\end{proposition}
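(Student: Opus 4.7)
The plan is to prove equality by establishing two inequalities between the degeneracy $d_1$ of $G$ (the minimum degeneracy of a vertex ordering) and the quantity $d_2$ defined as the least integer such that every subgraph of $G$ contains a vertex of degree at most $d_2$.

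For the direction $d_2 \leq d_1$, I would fix a vertex ordering $\sigma$ of $G$ witnessing $d_1$, and consider any subgraph $H$ of $G$. Looking at the $\leq_\sigma$-maximum vertex $v$ of $V(H)$, every neighbor of $v$ in $H$ is in particular a neighbor of $v$ in $G$ that is strictly smaller than $v$ in $\sigma$. By the definition of the degeneracy of $\sigma$, there are at most $d_1$ such neighbors, so $v$ has degree at most $d_1$ in $H$. Since $H$ was arbitrary, $d_2 \leq d_1$.

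For the reverse direction $d_1 \leq d_2$, I would build an ordering greedily in reverse. Set $G_n := G$ and iteratively, for $i = n, n-1, \ldots, 1$, pick a vertex $v_i$ of $G_i$ of degree at most $d_2$ (which exists by the hypothesis applied to the subgraph $G_i$), assign it position $i$ in the ordering, and set $G_{i-1} := G_i - v_i$. In the resulting ordering $\sigma = (v_1, \ldots, v_n)$, the set of neighbors of $v_i$ that precede it in $\sigma$ is exactly the set of neighbors of $v_i$ that were still present in $G_i$ at the moment $v_i$ was picked, which has size at most $d_2$. Hence $\sigma$ has degeneracy at most $d_2$, and so $d_1 \leq d_2$.

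Neither step presents any real obstacle; the argument is elementary and well-known. The only subtlety worth stating carefully is the bookkeeping in the greedy construction, namely that ``neighbors still present in $G_i$'' coincides exactly with ``neighbors occupying earlier positions in the final ordering $\sigma$,'' which is immediate from the fact that later positions are filled first in this reverse construction.
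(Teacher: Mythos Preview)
Your proof is correct and is the standard argument. Note that the paper does not actually prove this proposition: it is stated as well-known (``the property expressed in it is commonly used as the base definition of degeneracy'') and no proof is given, so there is nothing to compare against beyond observing that your argument is the expected one.
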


By Proposition~\ref{prop:subgraph} it is clear that degeneracy is a monotone graph parameter, i.e., the degeneracy of a subgraph of a graph~$G$ is never larger than the degeneracy of $G$.
Further, it also yields a greedy polynomial-time algorithm for computing the degeneracy of a given graph $G$: starting with the whole graph $G$, repeatedly remove a vertex of the smallest degree from the current graph,
until there are no more vertices left. The reversal of the order of removing the vertices is then a vertex ordering of $G$ with optimum degeneracy. 
To see this, observe that suppose $d$ is the largest degree of a removed vertex
encountered during the procedure. On one hand, clearly the obtained vertex ordering has degeneracy $d$. On the other hand, at the moment when we removed a vertex of degree $d$, 
the currently considered subgraph of $G$ had minimum degree $d$, which certifies that the degeneracy of $G$ cannot be smaller than $d$ by Proposition~\ref{prop:subgraph}.

We use the following fact that a graph of degeneracy $d$ has a linear number of edges, and moreover there are few vertices with degrees significantly larger than~$d$.

\begin{proposition}\label{prop:sparse}
An $n$-vertex graph $G$ of degeneracy at most~$d$ has at most $dn$ edges. Moreover, for every real $c\geq 1$, $G$ has less than $\frac{n}{c}$ vertices of degree larger than $2cd$.
\end{proposition}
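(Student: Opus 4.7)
The plan is to deduce both assertions directly from the existence of a degeneracy ordering, with the second part being a simple averaging argument on top of the edge bound from the first part.

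For the first claim, I would fix a vertex ordering $\sigma = (v_1,\ldots,v_n)$ of $G$ of degeneracy at most $d$, which exists by the definition of the degeneracy of a graph. For each $i$, let $b_i$ denote the number of neighbors $v_j$ of $v_i$ with $j < i$; by the choice of $\sigma$ we have $b_i \leq d$. Since every edge $v_iv_j$ with $i < j$ is counted exactly once, namely by $b_j$, we get
\[
|E(G)| = \sum_{i=1}^{n} b_i \leq dn,
\]
which yields the first assertion (in fact with $dn$ replaced by $d(n-1)$, but the stated bound suffices).

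For the second claim, I would argue by contradiction. Suppose that there are at least $n/c$ vertices of $G$ with degree larger than $2cd$. Then, using only the contribution of these vertices to the degree sum,
\[
\sum_{v \in V(G)} \deg_G(v) > \frac{n}{c} \cdot 2cd = 2dn.
\]
On the other hand, by the handshake lemma and the first part of the proposition,
\[
\sum_{v \in V(G)} \deg_G(v) = 2|E(G)| \leq 2dn,
\]
which contradicts the previous inequality. Hence fewer than $n/c$ vertices have degree larger than $2cd$.

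Neither part presents a genuine obstacle: the first is the textbook observation that a degeneracy ordering lets one charge each edge to its later endpoint, and the second is a one-line Markov-type inequality combined with that edge bound. The only mild care needed is distinguishing "at least $n/c$" from "more than $n/c$" when turning the contradiction into the stated strict inequality, which is handled automatically by the strict degree inequality $\deg_G(v) > 2cd$.
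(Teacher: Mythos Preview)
Your proof is correct and matches the paper's approach essentially verbatim: the paper also counts edges by their higher endpoint in a degeneracy ordering for the first assertion, and for the second uses the hand-shaking lemma to derive the contradiction $|E(G)| > \frac{1}{2}\cdot\frac{n}{c}\cdot 2cd = dn$.
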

\begin{proof}
For the first assertion, consider a vertex ordering $\sigma$ of $G$ of degeneracy at most $d$ and count the edges by their higher (in $\sigma$) endpoints.
For the second assertion, observe that otherwise by the hand-shaking lemma $G$ would have more than $\frac{1}{2}\cdot \frac{n}{c}\cdot 2cd=dn$ edges, a contradiction with the
first assertion.
\end{proof}

Finally, we recall the well-known fact that a graph of degeneracy $d$ admits a proper coloring with $d+1$ colors. Recall here that a proper coloring of a graph is a coloring of its vertices such
that no edge has both endpoints of the same color. Equivalently, every color class is an independent set.

\begin{proposition}\label{prop:greedy}
A graph of degeneracy $d$ admits a proper coloring with $d+1$ colors.
\end{proposition}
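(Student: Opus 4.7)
The plan is to use a straightforward greedy argument based on a degeneracy ordering. By the definition of degeneracy, there exists a vertex ordering $\sigma = (v_1, \ldots, v_n)$ of $G$ such that every vertex has at most $d$ neighbors that precede it in $\sigma$. I will construct the coloring $c \colon V(G) \to \{1, \ldots, d+1\}$ by processing the vertices in the order prescribed by $\sigma$, from $v_1$ to $v_n$.

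At step $i$, when we color $v_i$, consider the set $N^-(v_i) = \{u \in N(v_i) \sth u <_\sigma v_i\}$ of already-colored neighbors. By the choice of $\sigma$ we have $|N^-(v_i)| \leq d$, so the set of colors $\{c(u) \sth u \in N^-(v_i)\}$ has size at most $d$, and hence there is at least one color in $\{1, \ldots, d+1\}$ not used by any vertex of $N^-(v_i)$. Assign $c(v_i)$ to be any such color.

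By construction, whenever $uv \in E(G)$ with $u <_\sigma v$, the color $c(u)$ was forbidden at the moment we selected $c(v)$, so $c(u) \neq c(v)$. Thus $c$ is a proper coloring using at most $d+1$ colors.

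There is no real obstacle here: the argument is a textbook greedy coloring and relies only on the existence of a degeneracy ordering, which is immediate from the definition of degeneracy recalled in the paragraph preceding the proposition. Note that the argument is purely existential and sequential; it does not make any claim about computing such a coloring efficiently (let alone in $\paraAC{1}$), which is precisely the issue addressed in Lemma~\ref{lem:deg-apx-col} mentioned in the introduction to this section.
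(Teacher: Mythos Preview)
Your proof is correct and follows essentially the same approach as the paper's own proof: take a vertex ordering of degeneracy $d$ and greedily color the vertices in this order, observing that each vertex has at most $d$ already-colored neighbors so a free color among $\{1,\ldots,d+1\}$ always exists. Your write-up is slightly more explicit in verifying properness, but the argument is identical.
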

\begin{proof}
Let $G$ be the graph in question and let $\sigma$ be a vertex ordering of $G$ of degeneracy $d$.
Consider the following greedy procedure that colors vertices of $G$ with colors $\{1,\ldots,d+1\}$:
iterate through vertices of $G$ in the order of $\sigma$ and for each vertex $u$ assign to it any color that is not present among the neighbors of $u$ smaller in $\sigma$.
Since there are at most $d$ such neighbors, such a color will always exist.
\end{proof}
Our goal in this section is to prove a parallelized variant of Proposition~\ref{prop:greedy}. This will be achieved in Lemma~\ref{lem:deg-apx-col} below.

\subsection{Block vertex orderings and computational aspects}
We will use a relaxed variant of degeneracy orderings where vertices come in ordered blocks and every vertex has few neighbors in its own and smaller blocks.

\begin{definition}
A {\em{block vertex ordering}} of a graph $G$ is an ordered partition $\tau=(B_1,B_2,\ldots,B_\ell)$ of the vertex set of $G$; its {\em{length}} is the number of blocks $\ell$.
The {\em{degeneracy}} of $\tau$ is the least integer~$d$ such that for each $i\in \{1,\ldots,\ell\}$, every vertex $v\in B_i$ has at most $d$ neighbors in $\bigcup_{j=1}^i B_j$.
\end{definition}

A block vertex ordering $\tau$ as above naturally imposes a total quasi-order $\leq_\tau$ on the vertex set of $G$: $u\leq_\tau v$ iff $u\in B_i$ and $v\in B_j$ with $i\leq j$.
In our circuits we will assume that a block vertex ordering of an $n$-vertex graph is represented by the $n\times n$ boolean matrix encoding $\leq_\tau$.

Obviously, if $\sigma=(v_1,\ldots,v_n)$ is a vertex ordering of $G$, then the degeneracy of $\sigma$ is equal to the degeneracy of the block vertex ordering $(\{v_1\},\ldots,\{v_n\})$.
On the other hand, if $\tau=(B_1,\ldots,B_\ell)$ is a block vertex ordering of $G$ of degeneracy~$d$, then by ordering each block arbitrarily and concatenating these orderings we obtain a vertex ordering
of $G$ of degeneracy at most $d$. 

It will be important however that provided $G$ has bounded degeneracy, we may find a block vertex ordering of small degeneracy that is of logarithmic length.
This idea is also the cornerstone of the work of Barenboim and Elkin~\cite{BarenboimE10}, who gave logarithmic-time distributed algorithms to approximately color graphs of bounded degeneracy.
Our block vertex orderings correspond to {\em{$H$-partitions}} in their nomenclature, and similarly to us they show that an $H$-partition of small degeneracy and logarithmic size can be efficiently computed
by repeatedly taking vertices of small degree. Actually, Barenboim and Elkin attribute the idea to an earlier work of Arikati et al.~\cite{ArikatiMZ97} that used the PRAM model of parallel algorithms.

\begin{lemma}\label{lem:degeneracy-ordering}
The following transformation parameterized by~$d$ is in $\paraAC{1}$: 
Given an $n$-vertex graph $G$ of degeneracy at most $d$, compute a block vertex ordering of $G$ of degeneracy at most $4d$ and length at most $\log n$.
\end{lemma}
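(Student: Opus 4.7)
The plan is to build the block vertex ordering via an iterative ``peeling'' procedure in the spirit of Barenboim and Elkin~\cite{BarenboimE10}, realized as a $\paraAC{1}$ circuit. The key observation is Proposition~\ref{prop:sparse} applied with $c=2$: in any graph of degeneracy at most $d$, strictly more than half of the vertices have degree at most $4d$. Since degeneracy is monotone under taking induced subgraphs (an immediate consequence of Proposition~\ref{prop:subgraph}), this applies not just to $G$ but to every induced subgraph of $G$ we will encounter during the procedure.

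The construction proceeds in $\ell := \lceil \log_2 n \rceil$ rounds. Set $R_0 := V(G)$. In round $i = 1, \ldots, \ell$, let $X_i \subseteq R_{i-1}$ be the set of vertices $v \in R_{i-1}$ having at most $4d$ neighbors in $R_{i-1}$, and set $R_i := R_{i-1} \setminus X_i$. By the observation above $|X_i| > |R_{i-1}|/2$, so $|R_\ell| = 0$. The output is the block ordering $\tau = (B_1, \ldots, B_\ell)$ with $B_i := X_{\ell - i + 1}$; vertices removed \emph{later} go into \emph{earlier} blocks. For the degeneracy bound: a vertex $v \in B_i$ was removed in round $\ell - i + 1$, hence has at most $4d$ neighbors in $R_{\ell - i}$; a direct unfolding gives $R_{\ell - i} = \bigcup_{j=1}^{i} B_j$, so $v$ has at most $4d$ neighbors in $\bigcup_{j=1}^{i} B_j$, as required.

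For the circuit implementation, each round is a constant-depth subcircuit parameterized by $d$. To compute $X_i$ from the indicator vector of $R_{i-1}$, for every vertex $v$ we AND the $v$-row of the adjacency matrix with the indicator of $R_{i-1}$ to obtain the indicator of $v$'s neighborhood inside $R_{i-1}$, then feed it into the $\paraAC{0}$ threshold circuit of Theorem~\ref{thm:threshold} with threshold $4d$. Composing the $\ell$ rounds in sequence yields a circuit of total depth $f(d) + \Oh(\log n)$ and polynomial size, placing the construction in $\paraAC{1}$. Finally, the $n \times n$ matrix encoding $\leq_\tau$ is produced by one additional constant-depth layer from the stored indicators of $R_0, R_1, \ldots, R_\ell$ using the equivalence: $u \leq_\tau v$ iff for every $i \in \{0, 1, \ldots, \ell\}$, $v \in R_i$ implies $u \in R_i$. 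This is an $\AND$ of $\ell + 1$ implications, which has unbounded-fan-in depth $\Oh(1)$. The \dlogtime-uniformity of the whole family is immediate from the explicit nature of the construction.

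The main subtlety I expect is the bookkeeping around the direction of the ordering: the natural greedy argument strips low-degree vertices off the graph first, but the block-ordering degeneracy condition restricts neighbors in \emph{earlier} blocks, so the extracted vertices must be placed \emph{last}. Once the role reversal between round index and block index is settled, the rest is assembly of already available ingredients — threshold gates, bitwise masking of adjacency rows, and constant-depth composition over the sequence of surviving sets $R_i$.
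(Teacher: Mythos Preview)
Your proposal is correct and follows essentially the same approach as the paper: iteratively peel off the vertices of degree at most $4d$ (using Proposition~\ref{prop:sparse} with $c=2$ to guarantee that more than half are removed each round), place them as the last remaining block, and implement each round in $\paraAC{0}$ via the threshold circuits of Theorem~\ref{thm:threshold}. Your write-up is in fact slightly more explicit than the paper's, in that you spell out the reversal between round index and block index and give a concrete constant-depth formula for the output quasi-order $\leq_\tau$ in terms of the survivor sets $R_i$.
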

\begin{proof}
We describe combinatorially how the block vertex ordering is constructed. 
Starting with $G_0=G$, define the last block to consist of all vertices of $G_0$ that have degree at most $4d$ in $G_0$; by Proposition~\ref{prop:sparse} with $c=2$, 
this block constitutes more than half of the vertex set of $G_0$. Remove the block from $G_0$ yielding a graph $G_1$ and apply again the same procedure to $G_1$.
That is, define the second-to-last block to consist of all vertices of $G_1$
that have degree at most $4d$ in~$G_1$ and remove this block yielding $G_2$; since $G_1$ is a subgraph of $G$, again Proposition~\ref{prop:sparse}
ensures us that in this manner we remove more than half of the remaining 
vertex set. Thus the construction finishes with an empty graph
after at most $\log n$ iterations and yielding at most $\log n$ blocks in total. It is straightforward to see that the degeneracy of the obtained block vertex ordering is
at most $4d$.

We are left with implementing the above procedure using an \AC-circuit family with prescribed size and depth constraints.
We may perform exactly $\lfloor \log n\rfloor$ iterations, where after $i$ iterations the last $i$ blocks are defined; in case the whole vertex set has already been exhausted, the next iterations are idle. 
It is straightforward to implement each iteration by an \AC-circuit of size $f(k)\cdot n^{\Oh(1)}$ and depth $\Oh(1)$ using Corollary~\ref{cor:high-deg}, 
so by performing the iterations sequentially we obtain an \AC-circuit of size $f(k)\cdot n^{\Oh(1)}$ and depth $\Oh(\log n)$.
\end{proof}

By ordering arbitrarily each block of the block vertex ordering given by Lemma~\ref{lem:degeneracy-ordering} we obtain the following corollary: given a graph $G$ of degeneracy at most $d$, computing
a vertex ordering of~$G$ of degeneracy at most $4d$ can be done in $\paraAC{1}$. In other words, this is a $4$-approximation algorithm for degeneracy in $\paraAC{1}$, where the target degeneracy is the parameter.

In fact, if in the proof of Lemma~\ref{lem:degeneracy-ordering} we replaced $c=2$ with $c=1+\varepsilon/2$ for any $\varepsilon>0$, we would still have that the procedure performs $\Oh(\log n)$ iterations while
producing a vertex ordering of degeneracy at most $2+\varepsilon$, so this constitutes a $(2+\varepsilon)$-approximation in $\paraAC{1}$. 

It is natural to ask whether the following problem of determining degeneracy exactly is also in $\paraAC{1}$: for a parameter $d$, determine whether the degeneracy of a given graph is at most $d$.
Recall here that this problem can be solved in polynomial time. We give a negative answer to this side question by proving the following theorem. 

\begin{theorem}\label{thm:degeneracy-Phard}
The following problem is $\mathsf{P}$-hard under logspace reductions: Given a graph $G$, determine whether the degeneracy of $G$ is at most $2$.
\end{theorem}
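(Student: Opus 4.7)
The plan is to reduce from the \emph{Monotone Circuit Value Problem} (MCVP)---a monotone Boolean circuit with $\AND$ and $\OR$ gates together with a truth assignment to its inputs---which is known to be $\mathsf{P}$-complete under logspace reductions (Goldschlager). Given such an instance $C$, I would construct in logspace a graph $G_C$ such that the degeneracy of $G_C$ is at most $2$ if and only if $C$ evaluates to $\TRUE$. The intuition is that the peeling process defining degeneracy---iteratively delete a vertex of current degree at most $2$---mimics the bottom-up evaluation of $C$: if a gate vertex can be peeled, its gate evaluates to $\TRUE$, and the sequential nature of peeling matches the inherent sequentiality of gate evaluation in $C$.

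I would represent each gate $g$ of $C$ by a dedicated \emph{gate vertex} $v_g$, and use a handful of auxiliary \emph{anchor} vertices. The gadgets are designed so that the initial degree of $v_g$ is exactly large enough to encode the semantics of the gate. Concretely: (i) a $\TRUE$ input is a vertex of degree $2$ connected only to inert anchors, so it is peelable from the start; (ii) a $\FALSE$ input is embedded in a rigid subgraph of minimum degree $3$ (for instance a copy of $K_4$) that cannot be reduced in isolation; (iii) an $\OR$-gate vertex starts with degree~$3$, namely two edges to its two input gate vertices and one edge to a shared anchor, so $v_g$ becomes peelable as soon as \emph{either} of its inputs has been peeled; (iv) an $\AND$-gate vertex starts with degree~$4$, with two edges to its inputs and two edges to anchors, so it only becomes peelable once \emph{both} of its inputs have been peeled.

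The main obstacle is designing the global cleanup structure so that the correspondence between peelability of the output vertex and degeneracy of the whole graph is tight in \emph{both} directions. For the forward direction, if $C$ evaluates to $\TRUE$ then peeling has to propagate all the way through the circuit and then also dispose of all anchors; for the backward direction, if $C$ evaluates to $\FALSE$ then the anchors together with the sub-DAG of gates that cannot evaluate must form a subgraph of minimum degree at least $3$, certifying degeneracy strictly greater than $2$. I would handle this by attaching every anchor to a \emph{finalization} gadget incident to the output vertex $v_{g^\star}$, engineered so that peeling $v_{g^\star}$ (and only that) triggers a cascade that unlocks all anchors, while any uneliminated gate vertex keeps enough anchor edges alive to preserve a $K_4$-like obstruction. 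Verifying this requires careful degree accounting, and is the only nontrivial part of the argument; I would prove it by induction on the depth of the (sub)circuit, showing that a gate vertex is peelable in $G_C$ precisely when the corresponding gate evaluates to $\TRUE$.

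Finally, the reduction is clearly computable in logspace: the construction emits a constant number of vertices and edges per gate and per wire of $C$, and only needs to maintain pointers into the description of $C$, each of size $\Oh(\log |C|)$. Composing this logspace reduction with the assumed $\mathsf{P}$-hard problem yields the claimed $\mathsf{P}$-hardness of deciding whether the degeneracy of a given graph is at most $2$.
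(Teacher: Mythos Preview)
Your high-level approach is exactly the paper's: reduce from (monotone) circuit value, simulate bottom-up evaluation by the degree-$2$ peeling process, and add a finalization mechanism so that peeling the output vertex triggers a cascade that cleans up the rest. The paper even uses the same two-direction verification you outline (an explicit degeneracy-$2$ ordering when the circuit accepts, and an induced min-degree-$3$ subgraph when it rejects).

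There is, however, a genuine gap in your gadgetry as described: you have not accounted for \emph{fan-out}. In your design each gate is a single vertex $v_g$, and you connect $v_g$ by edges to its input gate vertices. But those same edges contribute to the degree of $v_g$'s \emph{predecessors} from the viewpoint of their successors: if $g'$ is used as input to $m$ gates, then $v_{g'}$ carries $m$ extra edges on top of the $3$ or $4$ you budgeted. Since none of those $m$ successor vertices can be peeled before $v_{g'}$ is, your threshold arithmetic breaks: e.g.\ an $\OR$-gate $g'$ with even one successor has initial degree $\ge 4$, and peeling one input brings it only to degree $\ge 3$, so it never becomes peelable. As written, your design works only for the output gate.

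The paper fixes this in two ways you should incorporate. First, it preprocesses the circuit so that every gate has fan-in and fan-out at most $2$ (replacing high-fan gates by binary trees/paths of copies; this is still logspace). Second, and more importantly, it does \emph{not} represent a gate by a single vertex: each gate becomes a constant-size gadget with a designated input vertex (degree $1$ for $\OR$, degree $2$ for $\AND$ inside the gadget) and a separate output vertex, and wires go from the output vertex of one gadget to the input vertex of the next. The internal structure is arranged so that (i) the input vertex is the unique low-degree vertex, (ii) once the input vertex is peeled the whole gadget collapses in a fixed top-down order, exposing the output vertex, and (iii) until that happens every vertex of the gadget has degree $\ge 3$. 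Separating the input and output roles is what absorbs the fan-out edges without corrupting the peel-trigger arithmetic. Your finalization idea is also essentially what the paper does: a chain of extra $\AND$-gadgets hanging off the output, one per $\FALSE$ gate, whose collapse feeds back into the $\FALSE$-gate gadgets and lets them be peeled. Once you rework your single-vertex gates into input/output-separated gadgets (and bound fan-in/fan-out), your plan goes through.
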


Thus, if determining degeneracy exactly was in $\paraAC{1}$, or even in $\paraAC{i}$ for any $i$, then $\mathsf{NC}=\mathsf{P}$.
Moreover, the same can be concluded about approximating degeneracy up to any factor $\alpha<\frac{3}{2}$.
Since Theorem~\ref{thm:degeneracy-Phard} is not directly relevant for our main result, but we find it interesting on its own, we include a proof in Appendix~\ref{app:degeneracy}.

\subsection{Coloring graphs of bounded degeneracy}
\label{sec:deg-proof}

Recall from Proposition~\ref{prop:greedy} that graphs of bounded degeneracy can be colored using a bounded number of colors.
In this section, we show the following, parallelized variant of this result.

\begin{lemma}\label{lem:deg-apx-col}
The following transformation parameterized by $d$ is in $\paraAC{1}$: Given a graph of degeneracy at most $d$, compute its proper coloring with $(4d+1)^2$ colors.
\end{lemma}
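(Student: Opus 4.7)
The plan is to apply Lemma~\ref{lem:degeneracy-ordering} to obtain a block vertex ordering $\tau = (B_1, \ldots, B_\ell)$ of $G$ with degeneracy at most $4d$ and length $\ell \leq \log n$, and then construct a product coloring $c(v) = (c_1(v), c_2(v)) \in \{1, \ldots, 4d+1\}^2$ in which $c_2$ separates endpoints of inter-block edges and $c_1$ separates endpoints of intra-block edges. Note that the bound on the degeneracy of $\tau$ implies that every $v \in B_i$ has at most $4d$ neighbors in $B_{\leq i}$, so in particular $G[B_i]$ has maximum degree at most $4d$ for every~$i$.

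For $c_2$, I process the main blocks $B_1, \ldots, B_\ell$ sequentially, setting $c_2(v) = \min(\{1, \ldots, 4d+1\} \setminus \{c_2(u) : u \in B_{<i},\, uv \in E(G)\})$ for each $v \in B_i$. Since $v$ has at most $4d$ back-neighbors in $B_{<i}$, this minimum always exists. For each color $c$ and each vertex $v$, testing whether some back-neighbor of $v$ is already colored $c$ reduces to the threshold/membership task handled by Theorem~\ref{thm:threshold}, so each round is in $\paraAC{0}$. Over $\log n$ rounds we obtain $c_2$ in depth $O(\log n)$.

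For $c_1$, I compute, for each main block $B_i$ in parallel, a proper $(4d+1)$-coloring of $G[B_i]$; such a coloring exists because $G[B_i]$ has maximum degree at most $4d$. I reapply the halving idea of Lemma~\ref{lem:degeneracy-ordering} inside each $G[B_i]$ to obtain a sub-block decomposition of length $\leq \log n$, and then process these sub-blocks sequentially using the same smallest-available-color rule as for $c_2$. The key point for keeping the palette at $4d+1$ (rather than $16d+1$, which a naive recursion on Lemma~\ref{lem:degeneracy-ordering} would suggest) is that the max-degree bound is strictly stronger than the degeneracy bound: every $v \in B_i$ has at most $4d$ neighbors in all of $B_i$, hence in particular in any union of earlier sub-blocks, so $4d+1$ colors suffice no matter how the sub-blocks are chosen.

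The main obstacle is to handle edges with both endpoints in the \emph{same} sub-block of $G[B_i]$, which the above smallest-available rule does not resolve directly. I expect to deal with this by iterating the halving further inside each sub-block until the sub-sub-blocks are trivially colorable, essentially mimicking the Barenboim--Elkin distributed algorithm cited in the introduction; the $\leq 4d$ max-degree bound is preserved through the recursion, so the palette stays at $4d+1$, while the fact that each halving discards at least half of the remaining vertices keeps the total depth at $O(\log n)$. Combining $(c_1, c_2)$ then yields a proper $(4d+1)^2$-coloring of $G$: inter-block edges differ in $c_2$ and intra-block edges differ in $c_1$, and the entire construction runs in $\paraAC{1}$.
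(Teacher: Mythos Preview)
Your overall strategy---compute the block vertex ordering via Lemma~\ref{lem:degeneracy-ordering}, then take a product of two $(4d+1)$-colorings, one handling inter-block edges and one handling intra-block edges---is exactly the paper's approach, and your construction of $c_2$ is essentially the paper's Lemma~\ref{lem:degeneracy-merger}.

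The gap is in $c_1$. You need to $(4d+1)$-color each $G[B_i]$, a graph of maximum degree at most $4d$, in $\paraAC{1}$, and your proposed fix of ``iterating the halving further inside each sub-block'' does not achieve this. If ``halving'' means the procedure of Lemma~\ref{lem:degeneracy-ordering} (peel off all vertices of degree at most $4d$), then applied to $G[B_i]$ it removes \emph{everything} in one round, since every vertex already has degree at most $4d$; you get no useful decomposition and are back where you started. If instead you mean arbitrary halving of the vertex set, then you obtain a recursion of depth $O(\log n)$, but at each level you must merge two $(4d+1)$-colorings of the halves into a single $(4d+1)$-coloring of the whole, and the only merge primitive available (process the $2(4d+1)$ color classes sequentially and greedily recolor, as in Corollary~\ref{cor:bnddeg-merger}) costs depth $\Theta(d)$ per level. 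The total depth is then $O(d\log n)$, which places the construction only in $\paraAC{1\uparrow}$, not $\paraAC{1}$. Your sentence ``each halving discards at least half of the remaining vertices keeps the total depth at $O(\log n)$'' accounts only for the number of recursion levels, not for the cost of the merges.

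The paper resolves this with an additional idea (Lemma~\ref{lem:col-bnddeg}): first decompose the edge set of the max-degree-$\Delta$ graph into $\binom{\Delta+1}{2}$ subgraphs each of maximum degree at most~$2$, using a local edge-labeling. On max-degree-$2$ graphs the divide-and-conquer recursion of Lemma~\ref{lem:col-bnddeg-pre} genuinely runs in depth $O(\log n)$, since each merge now costs $O(1)$. Taking the product of the resulting $3$-colorings gives a proper coloring with $3^{\binom{\Delta+1}{2}}$ colors, and a single final application of Corollary~\ref{cor:bnddeg-merger} (depth depending only on $\Delta$, not on $n$) brings this down to $\Delta+1$. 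This step is the missing ingredient in your argument.
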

We outline the proof  below.
Recall that a similar statement in the context of distributed computing was obtained by Barenboim and Elkin~\cite{BarenboimE10}.
The rest of Section~\ref{sec:orderings} is devoted to outlining a  proof of Lemma~\ref{lem:deg-apx-col}. 

We shall first present how to greedily color bounded degree graphs in $\paraAC{1}$, and then leverage this understanding to color graphs of bounded
degeneracy in $\paraAC{1}$. But before this, we present a key technical lemma that will be used multiple times. Essentially it says that provided we have already achieved some proper coloring with $h$
colors, we may then use it to compute a better coloring using a circuit of depth linear in $h$.
This trick was also used by Barenboim and Elkin~\cite{BarenboimE10}.

\begin{lemma}\label{lem:degeneracy-merger}
There exists a \dlogtime-uniform family of \AC-circuits $(K_{n,d,h})_{n,d,h\in \N}$ such that each $K_{n,d,h}$,
given a $n$-vertex graph $G$ together with its block vertex ordering $\tau=(B_1,\ldots,B_\ell)$ with $\ell\leq h$ and of degeneracy $d$ with each block~$B_i$ being an independent set in $G$,
computes a proper coloring of~$G$ with $d+1$ colors. Each $K_{n,d,h}$ has size $n^{\Oh(1)}$ and depth $\Oh(h)$.
\end{lemma}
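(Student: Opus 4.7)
The plan is to process the blocks sequentially from $B_1$ up to $B_\ell$, producing in each of $\ell \leq h$ rounds the colors of the vertices in the next block. Each round will be implemented by a subcircuit of constant depth and polynomial size, so that the sequential composition of at most $h$ rounds yields a circuit of depth $\Oh(h)$ and size $n^{\Oh(1)}$. The coloring will be represented, throughout the computation, by a one-hot table: for every vertex $v \in V(G)$ and every color $c \in \{1,\ldots,d+1\}$, a gate indicating whether $v$ has already been colored with $c$; initially all bits are $0$ and each round fills in the bits corresponding to the vertices of its block.

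The crucial observations are the following. First, since $B_i$ is an independent set, all neighbors of a vertex $v \in B_i$ inside $\bigcup_{j \leq i} B_j$ actually lie in $\bigcup_{j<i} B_j$, and hence they have already been assigned colors by the previous rounds. Second, by the degeneracy assumption there are at most $d$ such neighbors, so among the $d+1$ available colors there is always at least one color not used by any of them. Third, since no two vertices of $B_i$ are adjacent, the colors of different vertices of $B_i$ can be decided independently in parallel.

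Concretely, the $i$-th round subcircuit does the following in parallel for each $v \in V(G)$ and each color $c \in \{1,\ldots,d+1\}$: using the adjacency matrix of $G$, the encoding of $\tau$, and the current coloring table, it computes a bit $\mathrm{forbidden}(v,c)$ as the OR, over all $u \in V(G)$, of the conjunction $[uv \in E(G)] \wedge [u \in B_j \text{ for some } j < i] \wedge [\text{color}(u) = c]$. Then for each $v \in B_i$ it sets the new bit $\text{color}(v)=c$ to be $\neg \mathrm{forbidden}(v,c) \wedge \bigwedge_{c' < c}\mathrm{forbidden}(v,c')$, which is an unbounded fan-in Boolean combination of constantly many existing bits and therefore has constant depth. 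For vertices outside $B_i$ the entries are copied from the previous round. This subcircuit has depth $\Oh(1)$ and size polynomial in $n$ and $d$, hence $n^{\Oh(1)}$, since trivially $d \leq n$.

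Stacking the $\ell \leq h$ rounds on top of each other yields the desired circuit $K_{n,d,h}$, and finally one more constant-depth layer outputs the coloring in whatever agreed format is required. The only real correctness point to verify is the well-definedness of the color choice, which is guaranteed by the counting argument above; the main obstacle, such as it is, is simply to make sure that the two hypotheses used in that argument (blocks being independent sets, and degeneracy of $\tau$ being $d$) are both needed in exactly the way the construction uses them. The \dlogtime-uniformity of the family $(K_{n,d,h})$ is immediate from the explicit, index-local description of each gate.
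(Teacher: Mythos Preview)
Your proof is correct and follows essentially the same approach as the paper: process the blocks $B_1,\ldots,B_\ell$ in order, in each round greedily assigning to every vertex of the current block the smallest color not used by its already-colored neighbors in lower blocks, and implement each round by a constant-depth polynomial-size subcircuit. The paper's argument and yours are the same in both the combinatorial idea and the circuit implementation.
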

\begin{proof}
We build a coloring
$\lambda\colon V(G)\to \{1,\ldots,d+1\}$ in $h$ rounds, where in round $i$ all vertices from block $B_i$ receive their colors in $\lambda$. 
In round~$i$ every vertex $u\in B_i$ inspects all its neighbors and sets its own color to be the smallest color that is not present among its neighbors residing in lower blocks. 
Note that such a color always exists since the number of neighbors is at most $d$, by the assumption about the degeneracy of the input block vertex ordering, and there are $d+1$ colors available.
To see that $\lambda$ constructed in this way will be a proper coloring of $G$, observe that every edge $uv$ of $G$ connects two vertices from different blocks, say $u\in B_i$ and $v\in B_j$ for $i<j$.
Hence~$v$ will pick its color in $\lambda$ to be different than $\lambda(u)$.

We implement the above procedure by an \AC-circuit of polynomial size and depth $\Oh(h)$ in a natural way: the circuit consists of $h$ layers, where the $i$th layer corresponds to the $i$th iteration.
Thus, it suffices to implement the assignment of color to every vertex $u\in B_i$ using an \AC-circuit of polynomial size and constant depth.
To this end, for every color $j\in \{1,\ldots,d+1\}$ we create a circuit of constant depth that computes whether $j$ is 
present among neighbors of~$u$ from lower blocks ($B_{i'}$ for $i'<i$).
; this boils down to taking a disjunction over all vertices $v$ of 
the conjunction of the fact that $v<_\tau u$, $v$ is a neighbor of $u$, and $v$ has already received color~$j$. 
Then, the smallest color that is not present among neighbors of $u$ from lower blocks may be chosen as one that is not present, but all smaller ones are present; this
requires one additional level in the circuit.
\end{proof}

\begin{corollary}\label{cor:bnddeg-merger}
There is a \dlogtime-uniform family of \AC-circuits $(L_{n,d,h})_{n,d,h\in \N}$ such that each $L_{n,d,h}$,
given a $n$-vertex graph $G$ of maximum degree at most $d$ together with its proper coloring with $h$ colors,
computes a proper coloring of~$G$ with $d+1$ colors. Each $L_{n,d,h}$ has size $n^{\Oh(1)}$ and depth $\Oh(h)$.
\end{corollary}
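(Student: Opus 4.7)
The plan is to reduce the statement to a direct application of \cref{lem:degeneracy-merger}. The input $h$-coloring already gives us, for free, a partition of $V(G)$ into $h$ independent sets (its color classes). Ordering these color classes by color index yields a block vertex ordering $\tau = (B_1, \ldots, B_h)$ of $G$ whose length is at most $h$ and whose every block is independent in $G$ by construction.

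Next I would observe that the degeneracy of $\tau$ is trivially at most $d$: for any vertex $v\in B_i$, the set of its neighbors in $\bigcup_{j\le i} B_j$ is contained in the full neighborhood of $v$ in $G$, which has size at most $d$ since $G$ has maximum degree $d$. Hence $\tau$ meets the hypotheses of \cref{lem:degeneracy-merger} with parameters $d$ and $h$, and feeding $(G,\tau)$ into $K_{n,d,h}$ produces a proper $(d+1)$-coloring of $G$.

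The circuit $L_{n,d,h}$ is then constructed as follows: first, a constant-depth layer transforms the input $h$-coloring (encoded, say, as $n$ strings of $\lceil \log h\rceil$ bits, or as $h$ characteristic vectors) into the adjacency-style encoding of $\leq_\tau$, which is straightforward since $u\leq_\tau v$ iff the color of $u$ is at most the color of $v$; this takes polynomial size and $\Oh(1)$ depth. Then $L_{n,d,h}$ invokes $K_{n,d,h}$ on $(G,\tau)$, which by \cref{lem:degeneracy-merger} has polynomial size and depth $\Oh(h)$. The resulting circuit has polynomial size and total depth $\Oh(h)$, and \dlogtime-uniformity follows from the uniformity of $K_{n,d,h}$ together with the manifestly bit-wise local preprocessing step. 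There is no real obstacle here; the only thing to be mildly careful about is fixing a convention for how the input coloring is encoded and checking that the preprocessing step is indeed constant-depth, but both are routine.
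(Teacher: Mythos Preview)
Your proposal is correct and matches the paper's proof essentially line for line: the paper also takes the color classes $(\lambda^{-1}(1),\ldots,\lambda^{-1}(h))$ as a block vertex ordering of degeneracy at most $d$ with independent blocks, notes this can be computed by an \AC-circuit of depth $\Oh(1)$, and then applies $K_{n,d,h}$ from \cref{lem:degeneracy-merger}. Your added remarks about the encoding of $\leq_\tau$ and uniformity are fine and slightly more explicit than the paper, but there is no substantive difference.
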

\begin{proof}
If $\lambda\colon V(G)\to \{1,\ldots,h\}$ is the input proper coloring, then $(\lambda^{-1}(1),\ldots,\lambda^{-1}(h))$ is a block vertex ordering of $G$ of degeneracy at most $d$ where each block is an independent
set in $G$. Hence we may just compute this block vertex ordering using an \AC-circuit of depth $\Oh(1)$ and apply the circuit $K_{n,d,h}$ provided by Lemma~\ref{lem:degeneracy-merger}.
\end{proof}

Our first step towards the proof of  Lemma~\ref{lem:deg-apx-col} is the treatment of graphs of bounded degree.
A graph of maximum degree at most $\Delta$ can be greedily colored with $\Delta+1$ colors.
A naive implementation of this greedy procedure is sequential, but we will show now how to perform this task in $\paraAC{1}$.
We remark that finding optimum or near-optimum proper colorings of graphs of bounded maximum degree is a very classic topic in distributed computing with a vast existing literature.
We refer to the work of Barenboim~\cite{Barenboim16} for the currently fastest algorithms and an excellent overview of the area.

We first show a weaker result, namely that a proper coloring with $\Delta+1$ colors of a graph of maximum degree at most $\Delta$ can be computed in $\paraAC{1\uparrow}$, when parameterized by $\Delta$.
Recall that this means that we allow depth $f(\Delta)\cdot \log n$ instead of $f(\Delta)+\Oh(\log n)$.
This can be done using a simple Divide\&Conquer trick, which dates back to a classic $\Oh(\Delta\log n)$-time distributed algorithm for this problem of Goldberg et al.~\cite{GoldbergPS87} 
(see also~\cite{AwerbuchGLP89}).

\begin{lemma}\label{lem:col-bnddeg-pre}
The following transformation parameterized by~$\Delta$ is in $\paraAC{1\uparrow}$: Given a graph of maximum degree at most~$\Delta$, compute its proper coloring with $\Delta+1$ colors.
\end{lemma}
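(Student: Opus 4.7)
The plan is to follow the classical Goldberg--Plotkin--Shannon divide-and-conquer scheme and to transcribe it faithfully into a circuit, using Corollary~\ref{cor:bnddeg-merger} as a black box for compressing the palette after every merge. Concretely, I would split the vertex set of $G$ canonically into two halves $V_1$ (the first $\lceil n/2\rceil$ vertices in the input order) and $V_2$ (the rest), recursively compute proper $(\Delta+1)$-colorings of $G[V_1]$ and $G[V_2]$, and combine them into a proper $(2\Delta+2)$-coloring of $G$ by shifting the palette of $V_2$ by $\Delta+1$. This concatenated coloring is trivially proper because no edge of $G$ can join $V_1$ with the shifted block; however edges inside $V_1$ and $V_2$ are handled by the recursive outputs. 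Then I would feed this $(2\Delta+2)$-coloring into the circuit $L_{n,\Delta,2\Delta+2}$ from Corollary~\ref{cor:bnddeg-merger} to squeeze the number of colors back down to $\Delta+1$.

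The corresponding circuit $M_{n,\Delta}$ is built by unrolling the recursion $\lceil\log n\rceil$ times. At the bottom of the recursion (single vertex) the coloring is a constant gate. Going one level up consists of the composition: recursive calls on the two halves (executed in parallel by two disjoint subcircuits of the same shape on $n/2$ inputs), followed by a trivial constant-depth ``shift-and-concatenate'' layer, followed by an application of $L_{n,\Delta,2\Delta+2}$, whose depth is $\Oh(\Delta)$. Letting $D(n)$ denote the depth of $M_{n,\Delta}$, we obtain the recurrence $D(n)=D(n/2)+\Oh(\Delta)$, which solves to $D(n)=\Oh(\Delta\cdot\log n)=f(\Delta)\cdot\log n$, exactly the depth bound permitted by $\paraAC{1\uparrow}$. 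Similarly, if $S(n)$ is the size, then $S(n)=2\,S(n/2)+n^{\Oh(1)}=n^{\Oh(1)}$, so the size stays polynomial. \dlogtime-uniformity follows because the overall structure is a binary recursive template whose $i$-th bit can be decoded by locating the current recursion level and either querying the uniformity machine of Corollary~\ref{cor:bnddeg-merger} or outputting a wiring description of the shift layer.

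Correctness is immediate by induction on $n$: the two recursive colorings are proper on $G[V_1]$ and $G[V_2]$ by the inductive hypothesis, shifting makes the concatenation a proper $(2\Delta+2)$-coloring of $G$ (as all cross-edges obtain distinct colors from the two disjoint palettes), and Corollary~\ref{cor:bnddeg-merger} then produces a proper $(\Delta+1)$-coloring of $G$ from this input, again using that the maximum degree is at most $\Delta$.

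The main obstacle I anticipate is not conceptual but bookkeeping: ensuring that the depth accumulated by the $\log n$ successive invocations of the palette-reduction circuit remains of the form $f(\Delta)\cdot\log n$ rather than ballooning, and that the two recursive subcircuits truly run in parallel inside one $M_{n,\Delta}$ rather than being re-derived sequentially. This is precisely the reason we end up in $\paraAC{1\uparrow}$ and not in $\paraAC{1}$: each merge contributes a fresh $\Oh(\Delta)$ summand to the depth at every one of the $\log n$ recursion levels, and there is no obvious way to amortize these summands into a single $f(\Delta)$ additive term without a different, non-recursive approach.
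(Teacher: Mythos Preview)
Your proof is correct and follows essentially the same divide-and-conquer approach as the paper, including the use of Corollary~\ref{cor:bnddeg-merger} to compress the $(2\Delta+2)$-coloring back to $\Delta+1$ colors at each merge step, yielding overall depth $\Oh(\Delta\log n)$. One minor quibble: in your first paragraph you write that ``no edge of $G$ can join $V_1$ with the shifted block,'' which is false as stated (cross-edges certainly exist); the correct reason --- that cross-edges receive colors from disjoint palettes --- is the one you give in your third paragraph.
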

\begin{proof}
We perform a divide-and-conquer algorithm.
Given the input graph~$G$ with $n$ vertices, arbitrarily partition its vertex set into two subset $V_1$ and $V_2$, each of size at most~$\lceil n/2\rceil$. 
Let $G_1$ and $G_2$ be the subgraphs induced by $V_1$ and $V_2$ in $G$, respectively.
Each of $G_1,G_2$ has maximum degree at most $\Delta$, hence we can apply the algorithm recursively to both these graphs, 
yielding proper colorings $\lambda_1,\lambda_2$ of $G_1$ and $G_2$, respectively, each using $\Delta+1$ colors. By taking the union of these two colorings, where colors from different subgraphs are considered different,
we obtain a proper coloring of $G$ with $2\Delta+2$ colors. We may now apply the circuit given by Corollary~\ref{cor:bnddeg-merger} for $h=2\Delta+2$ to compute a coloring of $G$ with $\Delta+1$ colors.

To turn the above algorithm into a circuit, observe that the depth of the recursion is $\Oh(\log n)$ and computation on each level requires a circuit of polynomial size and depth $\Oh(\Delta)$, 
by Corollary~\ref{cor:bnddeg-merger}. Hence, overall the size of the circuit is polynomial and its depth is $\Oh(\Delta\log n)$.
\end{proof}

We now show containment in $\paraAC{1}$.

\begin{lemma}\label{lem:col-bnddeg}
The following transformation parameterized by~$\Delta$ is in $\paraAC{1}$: Given a graph of maximum degree at most $\Delta$, compute its proper coloring with $\Delta+1$ colors.
\end{lemma}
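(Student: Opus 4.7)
The plan is to split the argument into two stages: first, produce a proper coloring of $G$ whose palette size $h(\Delta)$ depends only on $\Delta$, by a circuit of depth $f(\Delta) + \Oh(\log n)$; second, compress that palette down to $\Delta+1$ colors via Corollary~\ref{cor:bnddeg-merger}, which adds only $\Oh(h(\Delta))$ to the depth and is absorbed into the $f(\Delta)$ term. Thus the whole task reduces to the first stage: once we have a palette of $\Delta$-bounded size, the final merger is essentially free in the $\paraAC{1}$ sense. Note that the naive divide-and-conquer of Lemma~\ref{lem:col-bnddeg-pre} fails exactly because it forces one application of Corollary~\ref{cor:bnddeg-merger} at each of $\log n$ recursion levels, paying an additional $\Oh(\Delta)$ per level; the goal is to avoid that multiplication.

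For the first stage, I would implement a circuit variant of Linial's iterated color-reduction. The starting point is the trivial proper coloring in which each vertex is labeled by its own index, giving $n$ colors. One round of the reduction converts a proper coloring with $m$ colors into a proper coloring with $N_m = \Oh(\Delta^2 \log m)$ colors, using a $(\Delta{+}1)$-cover-free family $\mathcal F_m = (F_1, \ldots, F_m)$ of subsets of $[N_m]$, i.e.~one in which for any $\Delta{+}1$ distinct indices $j_0, \ldots, j_\Delta$ the set $F_{j_0} \setminus (F_{j_1} \cup \cdots \cup F_{j_\Delta})$ is nonempty. Given the current proper coloring $c \colon V \to [m]$, every vertex $v$ takes as its new color the lexicographically smallest element $x \in F_{c(v)}$ that does not lie in $\bigcup_{u \sim v} F_{c(u)}$; such $x$ exists because $v$ has at most $\Delta$ neighbors, and the resulting coloring is again proper. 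One such round is realised by a polynomial-size \AC-circuit of constant depth, since each vertex needs only to scan, in parallel, the $N_m$ candidate elements against its $\Delta$ neighbors' images.

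Because the recurrence $m_{i+1} = \Oh(\Delta^2 \log m_i)$ reaches $\Theta(\Delta^2 \log \Delta)$ after iterated-logarithmically many steps, iterating the reduction $\Oh(\log^\ast n)$ times from $m_0 = n$ produces a proper coloring with $h(\Delta) = \Oh(\Delta^2 \log \Delta)$ colors, with total depth $\Oh(\log^\ast n) \subseteq \Oh(\log n)$. Stacking Corollary~\ref{cor:bnddeg-merger} afterwards with this value of $h$ yields the desired $(\Delta{+}1)$-coloring in depth $f(\Delta) + \Oh(\log n)$ and size $f(\Delta) \cdot n^{\Oh(1)}$.

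The main obstacle will be supplying the families $\mathcal F_m$ as a \dlogtime-uniform sequence whose single-bit membership queries are answerable by a constant-depth subcircuit. Standard algebraic constructions --- for instance, encoding each $F_j$ as the graph of a low-degree polynomial over a suitable finite field, so that ``$x \in F_j$?'' amounts to a single polynomial evaluation --- give cover-free families with the required parameters, but verifying the \dlogtime-uniformity and the constant-depth computability of membership in our circuit model requires some careful but routine bookkeeping.
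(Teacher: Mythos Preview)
Your proposal is correct and takes a genuinely different route from the paper. Both arguments share the same two-stage plan --- first obtain a proper coloring with a palette whose size depends only on $\Delta$, then compress via Corollary~\ref{cor:bnddeg-merger} --- but they diverge on how the first stage is achieved. The paper decomposes the edge set of $G$ into $\binom{\Delta+1}{2}$ subgraphs $G_{i,j}$ of maximum degree at most $2$ (by having each vertex locally number its incident edges and grouping edges by the pair of numbers they receive), applies Lemma~\ref{lem:col-bnddeg-pre} to each $G_{i,j}$ with the constant parameter $\Delta=2$ to get a $3$-coloring in depth $\Oh(\log n)$, and takes the product coloring with $3^{\binom{\Delta+1}{2}}$ colors. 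Your approach instead implements Linial's iterated color reduction via cover-free families, reaching a palette of $\Oh(\Delta^2\log\Delta)$ colors in depth $\Oh(\log^\star n)$.

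The paper's route is more self-contained: it reuses Lemma~\ref{lem:col-bnddeg-pre} rather than importing any new combinatorial object, and the edge-decomposition trick is elementary. The price is a doubly-exponential intermediate palette, which makes the final merger depth $\Oh(3^{\Delta^2})$ rather than your $\Oh(\Delta^2\log\Delta)$ --- irrelevant for the $\paraAC{1}$ claim but a real quantitative difference. Your route gives a much better dependence on $\Delta$ and even a sub-logarithmic first-stage depth, at the cost of the uniformity bookkeeping you flag; that bookkeeping is indeed routine for the polynomial-graph construction, though one should note that the sequence of field sizes used across the $\Oh(\log^\star n)$ rounds must itself be laid out uniformly.
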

\begin{proof}
Let every vertex $u$ of $G$ arbitrarily (say, according to the order of inputs) put numbers $1,\ldots,\deg(u)$ on edges incident to it; 
thus every edge is labelled with two numbers, one originating from each endpoint. Such labeling can be computed by a circuit of size $f(\Delta)\cdot n^{\Oh(1)}$ and depth~$\Oh(1)$ using the circuits provided
by Theorem~\ref{thm:threshold} to count, for every neighbor $v$ of $u$, the number of neighbors of $u$ with smaller indices than $v$.

For every pair of indices $(i,j)$ with $1\leq i\leq j\leq \Delta$, let $G_{i,j}$ be the subgraph of $G$ with $V(G_{i,j})=V(G)$ and $E(G_{i,j})$ consisting of those edges $e$ of $G$, for which
one endpoint of~$e$ labelled~$e$ with $i$, and the second labelled it with $j$. Observe that the maximum degree of $G_{i,j}$ is at most $2$, since every vertex $u$ of $G$ can be adjacent to at most
two edges of $G_{i,j}$: the one it labelled with $i$ and the one it labelled with $j$. 
Using Lemma~\ref{lem:col-bnddeg-pre} we can compute, for each $1\leq i\leq j\leq \Delta$, a proper $3$-coloring $\lambda_{i,j}$ of $G_{i,j}$ using an \AC-circuit of size $n^{\Oh(1)}$ and depth $\Oh(\log n)$.
Next we construct a product coloring $\lambda$ of $G$ with $3^{\binom{\Delta+1}{2}}$ colors: a color of a vertex $u$ in $\lambda$ is the $\binom{\Delta+1}{2}$-tuple of colors of $u$ in the colorings $\lambda_{i,j}$
for all $1\leq i\leq j\leq \Delta$. Since each edge of $G$ participates in exactly one of subgraphs $G_{i,j}$, it is clear that $\lambda$ is a proper coloring of $G$.
We may finally apply Corollary~\ref{cor:bnddeg-merger} for $h=3^{\binom{\Delta+1}{2}}$ to compute a proper coloring of $G$ with $\Delta+1$ colors using an \AC-circuit of size $n^{\Oh(1)}$ and depth
$\Oh(3^{\binom{\Delta+1}{2}})$. Thus, in total we have constructed a circuit of size $f(\Delta)\cdot n^{\Oh(1)}$ and depth $\Oh(3^{\binom{\Delta+1}{2}}+\log n)$.
\end{proof}

We finally have all the tools to prove Lemma~\ref{lem:deg-apx-col}.

\begin{proof}[of Lemma~\ref{lem:deg-apx-col}]
By Lemma~\ref{lem:degeneracy-ordering} we may compute a block vertex ordering $\tau$ of $G$ of degeneracy at most $4d$ and length at most $\log n$ in $\paraAC{1}$.
Partition the edges of $G$ into two graphs $G_1,G_2$ on the same vertex set as $G$:
the edge set of~$G_1$ consists of all edges whose endpoints lie in the same block of $\tau$, while the edge set of $G_2$ consists of all edges whose endpoints lie in different blocks of $\tau$.
Observe that $G_1$ is a graph of maximum degree at most $4d$, hence we may apply Lemma~\ref{lem:col-bnddeg} to compute its proper coloring $\lambda_1$ with $4d+1$ colors in $\paraAC{1}$.
On the other hand, $\tau$ is a block vertex ordering of $G_2$ with degeneracy at most $4d$, length at most $\log n$, and every block being an independent set in $G_2$.
Hence, we may apply Lemma~\ref{lem:degeneracy-merger} to compute a proper coloring $\lambda_2$ of~$G_2$ with $4d+1$ colors using a circuit of polynomial size and depth $\Oh(\log n)$.
Finally, let $\lambda$ be the product coloring of $\lambda_1$ and $\lambda_2$: the color a vertex $u$ receives in $\lambda$ is the pair of colors it received in $\lambda_1$ and $\lambda_2$.
Since each edge of $G$ participates either in $G_1$ or in $G_2$, $\lambda$ constructed in this manner is a proper coloring of $G$ with $(4d+1)^2$ colors.
The fact that the constructed circuit satisfies the required size and depth bounds follows directly from the construction and from the bounds provided by Lemma~\ref{lem:degeneracy-ordering}, Lemma~\ref{lem:degeneracy-merger}, and
Lemma~\ref{lem:col-bnddeg}.
\end{proof}

\section{Computing low treedepth colorings}\label{sec:low-td}
As discussed in the previous section,
a graph of bounded degeneracy admits a proper coloring 
using a bounded number of colors.
There is a generalization of this result 
to graphs of bounded expansion, in terms of \emph{low treedepth colorings}. Intuitively, for a fixed $p\in\N$, such a coloring is a coloring using a bounded number of colors, such that any $p$ color classes induce a graph which has a depth-first search forest of bounded depth. Such colorings turn out to be very useful for many algorithmic purposes, among others, for model-checking. In this section, we generalize the result from the previous section, and show that such colorings can be computed in $\paraAC{1}$. As previously, the 
 such colorings are obtained by first finding an appropriate ordering of the graph, related to the notion of \emph{weak reachability}, which we recall below.

\subsection{Generalized coloring numbers}

{\em{Generalized coloring numbers}} are key components of the algorithmic toolbox of the sparsity theory.
The idea is that since in classes of bounded expansion bound edge density of shallow minors at every fixed depth $r$, and bounding edge density of subgraphs corresponds to bounding degeneracy,
it is natural to generalize the notion of degeneracy to higher depth $r$ as well. 

We will use the following two generalized coloring numbers: admissibility and weak coloring number.
We prove that the generalized coloring numbers can be approximated well in $\paraAC{1}$. 

\paragraph*{Admissibility.}
Suppose $G$ is a graph and $S\subseteq V(G)$ is a subset of its vertices. The {\em{back-connectivity}} of a vertex \mbox{$u\in S$} at depth $r$ on $S$, denoted $\backconn_r(S,u)$, is 
the maximum cardinality of a family of paths $\Pp$ in $G$ with the following properties: 
\begin{itemize}
\item 
every path $P\in \Pp$ has length at most $r$, starts in $u$, ends in a vertex of $S$ different from $u$, and all its internal vertices do not belong to $S$;
\item 
every two paths from $\Pp$ share only the vertex $u$ and otherwise are vertex-disjoint.
\end{itemize}

For a vertex ordering $\sigma$ of $G$, the {\em{$r$-admissibility}} of a vertex~$v$, denoted $\adm_r(G,\sigma,v)$, is equal to $\backconn_r(\{u\colon u\leq_\sigma v\},v)$.
The {\em{$r$-admissibility}} of $\sigma$ is \[\adm_r(G,\sigma)=\max_{v\in V(G)} \adm_r(G,\sigma,v)\] and the {\em{$r$-admissibility}} of $G$, denoted $\adm_r(G)$, is the minimum possible $r$-admissibility of a vertex
ordering of $G$.

It is known that admissibility can be used to characterize classes of bounded expansion in the following sense: a class~$\Cc$ of graphs has bounded expansion if and only if there exists a function $f\colon \N\to \N$
such that $\adm_r(G)\leq f(r)$ for each $G\in \Cc$ (see e.g.~\cite{Dvorak13,grohe2015colouring,notes}). 

We use the same trick as in Lemma~\ref{lem:degeneracy-ordering} to compute a vertex ordering that achieves slightly worse $r$-admissibility, but uses a logarithmic number of rounds.
It will be convenient to think again of block vertex orderings. For a block vertex ordering $\tau=(B_1,\ldots,B_\ell)$, the $r$-admissibility of a vertex $v\in B_i$ in this ordering is defined as
$\backconn_r(\bigcup_{j=1}^i B_i,v)$, and the $r$-admissibility of~$\tau$ is defined at the maximum $r$-admissibility of any vertex in $\tau$. 
Again, it is straightforward to see that if $\tau$ is a block vertex ordering
of admissibility at most $k$, then by ordering the blocks of $\tau$ arbitrarily and concatenating these orderings as in $\tau$ we obtain a vertex ordering of $G$ of $r$-admissibility at most $k$. 

The main idea behind this result is encapsulated in the following lemma.

\begin{lemma}[explicit in Theorem~3.1 in~\cite{grohe2015colouring}]\label{lem:adm-obstacle}
Suppose $r,d\in \N$, $G$ is a graph, and $S$ is a subset of vertices of $G$ with the following property: for each $v\in S$, we have $\backconn_r(S,v)>6rd^3$.
Then $G$ admits a depth-$(r-1)$ topological minor with edge density larger than $d$.
\end{lemma}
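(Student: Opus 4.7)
The plan is to extract from the families $\{\Pp_v\}_{v \in S}$ a sub-collection $\mathcal{X}$ of pairwise internally-disjoint paths whose endpoints in $S$ span a simple graph of edge density greater than $d$. Since each path in any $\Pp_v$ has length at most $r \le 2(r-1)+1$, interpreting each element of $\mathcal{X}$ as a subdivided edge between its two endpoints gives a depth-$(r-1)$ topological-minor model of that endpoint graph, which is exactly what the lemma asserts.

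First, for each $v \in S$ fix an internally-disjoint family $\Pp_v$ of more than $6rd^3$ paths of length at most $r$ witnessing $\backconn_r(S,v) > 6rd^3$. Second, greedily build a maximal subfamily $\mathcal{X} \subseteq \bigcup_{v \in S} \Pp_v$ of pairwise internally-disjoint paths. The central counting observation is that every vertex $w \in V(G) \setminus S$ can appear as an internal vertex of at most one path from each $\Pp_v$, because paths within $\Pp_v$ are already internally disjoint; hence $w$ is used by at most $|S|$ paths across $\bigcup_v \Pp_v$. By maximality, every remaining path meets an internal vertex of some path in $\mathcal{X}$, and since each path in $\mathcal{X}$ has at most $r-1$ internal vertices, a straightforward double-count yields a lower bound of the form $|\mathcal{X}| = \Omega(d^3)$.

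Third, convert $\mathcal{X}$ into the final topological-minor model. The multigraph $M$ whose vertices are the endpoints of paths in $\mathcal{X}$ and whose edges are the paths themselves has $|\mathcal{X}|$ edges but potentially many parallel ones, so passing to a simple graph of edge density greater than $d$ requires care. The plan is to interleave the greedy extraction of Step~2 with a density-reduction pass: restrict attention to a subset $S' \subseteq S$ on which the candidate paths already yield high simple-graph minimum degree, and re-run the extraction inside $S'$, using the hypothesis on $S$ to derive the needed connectivity within $S'$. The cubic blow-up $d^3$ in $6rd^3$ is there precisely to absorb three compounding losses along the pipeline: a factor $r$ for internal-vertex sharing, one factor $d$ for iteratively peeling off low-degree endpoints, and one further factor $d$ for collapsing multi-edges to simple ones.

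The principal obstacle will be this third step: simultaneously guaranteeing that the simple quotient of $M$, after pruning low-degree endpoints, is non-empty and has edge density strictly greater than $d$, while ensuring that the paths representing its simple edges remain globally internally disjoint. I expect the delicate bookkeeping across the vertex-pruning iterations, rather than any single counting identity, to be the crux of the argument, and the reason the hypothesis must scale as $d^3$ rather than a smaller polynomial in $d$.
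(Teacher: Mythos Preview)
Your double-count in Step~2 is too lossy. You correctly note that each internal vertex $w\notin S$ lies on at most one path of each $\Pp_v$, hence on at most $|S|$ paths of $\bigcup_v \Pp_v$. But then each of the at most $(r-1)|\mathcal X|$ internal vertices used by $\mathcal X$ blocks at most $|S|$ paths, and the inequality $6rd^3|S|<\sum_v|\Pp_v| \le |\mathcal X| + (r-1)|S|\cdot|\mathcal X|$ yields only $|\mathcal X| > 6d^3$, an absolute constant independent of $|S|$. The endpoint multigraph $M$ then has at most $2|\mathcal X|$ vertices and $|\mathcal X|$ edges, so edge density at most~$1/2$. Concretely, take $S=\{v_1,\dots,v_n\}$ together with a set $W$ of $6rd^3+1$ hub vertices outside $S$, and let $G$ be complete bipartite between $S$ and $W$: every $v\in S$ has $\backconn_r(S,v)>6rd^3$ via length-$2$ paths through distinct hubs, yet any internally-disjoint family uses each hub at most once, so $|\mathcal X|\le |W|$ and your simple endpoint graph has density at most $1/2$. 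The ``restrict to $S'$ and re-run'' plan of Step~3 does not help here, since all vertices of $S$ are symmetric and restricting to any $S'$ reproduces the same hub bottleneck.

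The paper's argument (carried out for the more general Lemma~\ref{lem:adm-half}) is structurally different and avoids this trap. It first fixes a maximal family $\Qq$ of internally-disjoint paths of length at most $2r-1$ connecting \emph{distinct pairs} in $S$ (so simplicity is enforced from the start), defines the resulting simple graph $H$ on $S$, and assumes for contradiction that $|E(H)|\le d|S|$. Since $H$ is then $2d$-degenerate, it takes a colour class $I$ of a proper $(2d{+}1)$-colouring with $|I|\ge |S|/(3d)$; independence of $I$ in $H$ is the crucial property. For each $v\in I$ it \emph{trims} every path of $\Pp_v$ at its first vertex in the skeleton $K=S\cup\bigcup_{Q\in\Qq}V(Q)$. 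Maximality of $\Qq$ together with independence of $I$ forces the trimmed paths to be pairwise internally disjoint with distinct endpoint-pairs, so they realise a simple depth-$(r-1)$ topological minor $J$ on vertex set $K$ with more than $|I|\cdot 6rd^3\ge 2rd^2|S|$ edges, while $|K|\le 2rd|S|$ gives $|E(J)|\le d|K|\le 2rd^2|S|$, a contradiction. The redirection of paths into the small set $K$, rather than into $S$ itself, is the idea your proposal is missing.
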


In this work we do not use Lemma~\ref{lem:adm-obstacle} directly, but we use its stronger variant --- Lemma~\ref{lem:adm-half}, to be stated later --- which we prove explicitly in Appendix~\ref{app:adm-half}.

From Lemma~\ref{lem:adm-obstacle} it easily follows that for every graph $G$ and integer $r$, we have
$$\adm_r(G)\leq 6r(\lceil \topnabla_{r-1}(G)\rceil)^3.$$
Indeed, consider the following greedy procedure. Start with $S=V(G)$ and an empty ordering, and repeatedly perform the following until $S$ becomes empty: 
find a vertex $v\in S$ with minimum $\backconn_r(S,v)$, put $v$ at the front of the constructed ordering, and remove $v$ from $S$.
Lemma~\ref{lem:adm-obstacle} ensures us that at each step a vertex with $r$-admissibility at most $6r(\lceil \topnabla_{r-1}(G)\rceil)^3$ will be extracted, yielding the same bound on the $r$-admissibility
of the final ordering.

This approach somehow mirrors the greedy algorithm for computing the degeneracy of the graph. Observe that it performs a linear number of iterations, hence a priori it not straightforward to parallelize it.
We will now use the same trick as in Lemma~\ref{lem:degeneracy-ordering} to compute a vertex ordering that achieves slightly worse $r$-admissibility, but uses a logarithmic number of rounds.
It will be convenient to think again of block vertex orderings. For a block vertex ordering $\tau=(B_1,\ldots,B_\ell)$, the $r$-admissibility of a vertex $v\in B_i$ in this ordering is defined as
$\backconn_r(\bigcup_{j=1}^i B_i,v)$, and the $r$-admissibility of $\tau$ is defined at the maximum $r$-admissibility of any vertex in $\tau$. 
Again, it is straightforward to see that if $\tau$ is a block vertex ordering
of admissibility at most $k$, then by ordering the blocks of $\tau$ arbitrarily and concatenating these orderings as in $\tau$ we obtain a vertex ordering of $G$ of $r$-admissibility at most $k$.

\begin{lemma}\label{lem:adm-para}
The following transformation parameterized by integers $r$ and $d$ is in $\paraAC{1}$: Given a graph $G$ with $\topnabla_{r-1}(G)\leq d$, compute a vertex block ordering of $G$ with $r$-admissibility at most
$6r^2d^3$ and length at most $\log n$.
\end{lemma}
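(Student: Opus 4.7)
The approach closely parallels Lemma~\ref{lem:degeneracy-ordering}, replacing low-degree selection by a low back-connectivity one. The key combinatorial tool is the stronger variant of Lemma~\ref{lem:adm-obstacle} that was promised earlier, stated as Lemma~\ref{lem:adm-half} and proved in Appendix~\ref{app:adm-half}: for every graph $G$ with $\topnabla_{r-1}(G)\leq d$ and every $S\subseteq V(G)$, more than $|S|/2$ vertices $v\in S$ satisfy $\backconn_r(S,v)\leq K$, where $K=6r^2d^3$.

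Given this, the block ordering is built in at most $\lfloor\log n\rfloor$ rounds. Starting with $S_0:=V(G)$, in round $i$ we define $B_i:=\{v\in S_{i-1} : \backconn_r(S_{i-1},v)\leq K\}$ and put $S_i:=S_{i-1}\setminus B_i$. Lemma~\ref{lem:adm-half} applied with $S=S_{i-1}$ guarantees $|S_i|<|S_{i-1}|/2$, so after at most $\lfloor\log n\rfloor$ rounds $S_\ell=\emptyset$. Reading the produced blocks back-to-front as $(B_\ell,B_{\ell-1},\ldots,B_1)$ yields a block vertex ordering of length at most $\log n$ whose $r$-admissibility is at most $K$ by construction: each $v\in B_i$ satisfies $\backconn_r(\bigcup_{j\leq i}B_j,v)=\backconn_r(S_{i-1},v)\leq K$.

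To realize this as a \dlogtime-uniform \AC-circuit family, we stack $\lfloor\log n\rfloor$ copies of a subcircuit $\Gamma_{r,d}$ that, given the characteristic vector of the current set $S$, outputs the characteristic vector of $\{v\in S : \backconn_r(S,v)\leq K\}$. Verifying that $\backconn_r(S,v)>K$ amounts to detecting a family of $K+1$ internally vertex-disjoint paths of length at most $r$ from $v$ to $S\setminus\{v\}$ whose internal vertices avoid $S$. This is a bounded-size subgraph-matching task with $O(rK)$ quantified variables, and using the color-coding machinery of~\cite{BannachST15} in the spirit of Theorem~\ref{thm:threshold} it can be implemented by a $\paraAC{0}$ subcircuit of size $f(r,d)\cdot n^{O(1)}$ and constant depth independent of the parameters. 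Composing $\log n$ such subcircuits in series then gives total size $f(r,d)\cdot n^{O(1)}$ and depth $f(r,d)+O(\log n)$, as required for $\paraAC{1}$.

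The critical obstacle is the constant-depth implementation of $\Gamma_{r,d}$. A naive first-order unrolling of the ``at least $K+1$ disjoint paths'' condition gives a subcircuit of depth $O(\log(rK))$, which compounds to $f(r,d)\cdot\log n$ across the $\log n$ rounds and places the construction only in $\paraAC{1\uparrow}$. Reaching genuine $\paraAC{1}$ therefore hinges on the color-coding derandomization of~\cite{BannachST15}, whose depth truly does not depend on the parameters.
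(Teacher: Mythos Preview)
Your proposal has a genuine gap at exactly the point you flag as ``critical'': the claim that the back-connectivity test $\backconn_r(S,v)\le K$ can be implemented by a $\paraAC{0}$ subcircuit of depth independent of $r$ and $d$. You appeal to the color-coding machinery of~\cite{BannachST15} ``in the spirit of Theorem~\ref{thm:threshold}'', but Theorem~\ref{thm:threshold} is about threshold counting, not reachability. Color coding (Theorem~\ref{thm:ucf}) reduces the disjoint-paths test to $K{+}1$ independent checks of the form ``is there a path of length at most $r$ from $u$ to $S$ inside the $i$th color class?''---and that is a bounded-distance reachability question. The paper's own treatment of this step (Lemma~\ref{lem:backconn}) achieves only depth $\Oh(\log r)$, via the matrix-powering of Lemma~\ref{lem:distances}; nothing in~\cite{BannachST15} is cited or known to bring this down to $\Oh(1)$ with FPT size. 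The naive alternative, quantifying over the $\leq r$ intermediate vertices, gives constant depth but size $n^{\Oh(r)}$, which is not FPT. So as written, your $\log n$ rounds of depth $\Oh(\log r)$ each yield total depth $\Oh(\log r\cdot\log n)$, i.e.\ only $\paraAC{1\uparrow}$---precisely the obstacle you identify but do not actually overcome.

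The paper's proof resolves this differently: rather than insisting on a constant-depth round, it applies Lemma~\ref{lem:adm-half} with $\varepsilon=1/r$ instead of your $\varepsilon=1/2$. This makes the threshold $6\varepsilon^{-1}rd^3=6r^2d^3$ (which is where that constant comes from---note that your choice $\varepsilon=1/2$ would give threshold $12rd^3$, not $6r^2d^3$), and guarantees that each round removes all but a $1/r$ fraction of the remaining vertices. Hence only $\log_r n=\log n/\log r$ rounds are needed, and stacking that many copies of the depth-$\Oh(\log r)$ circuit from Lemma~\ref{lem:backconn} gives total depth $\Oh(\log n)$, landing in $\paraAC{1}$. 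The trade-off is approximation factor for parallelism: a worse admissibility bound buys fewer rounds, and the $\log r$ factors cancel.
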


For the proof of Lemma~\ref{lem:adm-para} we need analogues of the two ingredients that we used in the proof of Lemma~\ref{lem:degeneracy-ordering}. First, we need to know that at every iteration
a vast majority of the remaining vertices can be removed and set to be the next block. This is done by the following lemma.

\begin{lemma}\label{lem:adm-half}
Suppose $\varepsilon>0$, $r,d\in \N$, $G$ is a graph, and $S$ is a subset of vertices of $G$ with the following property: 
for at least $\varepsilon|S|$ vertices $v\in S$ we have $\backconn_r(S,v)>6\varepsilon^{-1}rd^3$.
Then $G$ admits a depth-$(r-1)$ topological minor with edge density larger than $d$.
\end{lemma}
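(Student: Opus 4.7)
Let $S^\star := \set{v \in S \sth \backconn_r(S,v) > 6\varepsilon^{-1}rd^3}$; by hypothesis $|S^\star| \geq \varepsilon |S|$. My plan is to revisit the proof of Lemma~\ref{lem:adm-obstacle} from~\cite{grohe2015colouring} and observe that it can be made to rest only on a lower bound on the \emph{total} number of short internally-$S$-avoiding witness paths emanating from $S$, rather than on a uniform per-vertex bound. Our hypothesis gives
\[
\sum_{v \in S^\star} \backconn_r(S,v) \;>\; (6\varepsilon^{-1}rd^3)\cdot \varepsilon |S| \;=\; 6rd^3 |S|,
\]
which matches the total count that would arise if every vertex of $S$ had back-connectivity exceeding $6rd^3$ --- precisely the hypothesis of Lemma~\ref{lem:adm-obstacle}. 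So the same extraction of a depth-$(r-1)$ topological minor of density greater than $d$ should go through, with $S^\star$ playing the role of the ``high-contribution'' vertices.

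Concretely, for each $v \in S^\star$ I will fix a family $\mathcal{P}_v$ of more than $6\varepsilon^{-1}rd^3$ internally $S$-avoiding paths from $v$ to other vertices of $S$, pairwise vertex-disjoint except at $v$, and each of length at most $r$. These paths together define an auxiliary multigraph $H$ on vertex set $S$, where for each $P \in \mathcal{P}_v$ ending at $u \in S$ we add one edge $vu$. By the display above, $|E(H)| > 6rd^3|S|$, so $H$ has average degree greater than $12rd^3$ and in particular contains a subgraph of minimum degree greater than $6rd^3$. The dense depth-$(r-1)$ topological minor of $G$ is then extracted from this subgraph by the same conflict-resolution procedure as in the proof of Lemma~\ref{lem:adm-obstacle}: each $H$-edge $vu$ is realized as the corresponding length-at-most-$r$ path in $G$, which is a valid subdivision path for a depth-$(r-1)$ topological minor, and losses due to different paths sharing internal vertices are absorbed by the $d^3$ slack in the threshold.

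The main place where care is needed is the very first step --- confirming that the proof of Lemma~\ref{lem:adm-obstacle} in~\cite{grohe2015colouring} really does proceed via a purely aggregate argument, and does not at some point invoke the per-vertex bound on $\backconn_r(S,v)$ for $v \in S \setminus S^\star$. I expect this to be the case: the cubic dependence on $d$ combined with the linear dependence on $r$ is a hallmark of such counting arguments, and the $\varepsilon^{-1}$ factor introduced in Lemma~\ref{lem:adm-half} is precisely the blow-up needed to compensate for a $(1-\varepsilon)$-fraction of $S$ contributing nothing to the total path count. Should the original argument contain a genuinely per-vertex step, one can alternatively establish the lemma by running the same topological-minor extraction with $S^\star$ in place of $S$ throughout, using that for an edge $vu \in E(H)$ with both endpoints in $S^\star$ we obtain a valid contribution to the density count inside $S^\star$, and then observing that already a constant fraction of $H$-edges is of this type by another averaging argument on $|S^\star| \geq \varepsilon|S|$.
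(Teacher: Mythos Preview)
Your proposal is a plan rather than a proof, and the plan misreads how the argument in \cite{grohe2015colouring} (reproduced in the paper's appendix) actually runs. You are right that the per-vertex bound is never invoked for $v\in S\setminus S^\star$, but the proof is not ``purely aggregate'' either: the per-vertex lower bound $\backconn_r(S,v)>\ell$ is applied in the final counting step to each $v$ in a carefully chosen independent set $I$, and one arranges $I\subseteq U:=S^\star$ so that the bound is only ever used where it holds. Concretely, one fixes a \emph{maximal} family $\mathcal{Q}$ of internally-disjoint $S$--$S$ paths of length at most $2r-1$ (at most one per endpoint pair), defines a simple graph $H$ on $S$ from $\mathcal{Q}$, properly colours $H$ with $2d+1$ colours, and takes $I$ to be the largest colour class \emph{restricted to $U$}. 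The factor $\varepsilon$ enters only through $|I|\geq |U|/(2d+1)\geq \varepsilon|S|/(3d)$, and the threshold $\ell=6\varepsilon^{-1}rd^3$ is chosen so that $|I|\cdot\ell = 2rd^2|S|$ still matches the upper bound $d|K|\leq 2rd^2|S|$ coming from $|\mathcal{Q}|\leq d|S|$.

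Your concrete sketch --- a multigraph on $S$ built from all witness paths, then a high-minimum-degree subgraph --- is a genuinely different route, and the step you label ``the same conflict-resolution procedure'' is precisely where the real work lies: witness paths emanating from distinct $v\in S^\star$ can share internal vertices arbitrarily, so your multigraph is not a topological minor of $G$, and you give no mechanism for untangling it with only polynomial-in-$d$ loss. In the paper's argument, internal disjointness of the trimmed paths is forced by the maximality of $\mathcal{Q}$ combined with the independence of $I$ in $H$; neither ingredient appears in your outline. Your fallback of running the extraction with $S^\star$ in place of $S$ also fails as stated, since the witness paths certify $\backconn_r(S,v)$, not $\backconn_r(S^\star,v)$ --- their far endpoints need not lie in $S^\star$, and there is no reason a constant fraction of them should.
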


The proof of Lemma~\ref{lem:adm-half} follows closely the lines of the proof of Lemma~\ref{lem:adm-obstacle}, however we need to argue that having large back-connectivity of at least $\varepsilon$-fraction of
vertices of $S$, instead of
all, is sufficient to construct a depth-$(r-1)$ topological minor model of a graph with edge density larger than $d$. This essentially requires modification of numerical parameters in the proof.
For the sake of completeness we include the proof of Lemma~\ref{lem:adm-half} in Appendix~\ref{app:adm-half}.

The second necessary ingredient is that we need to compute the set of vertices with low back-connectivity efficiently, similarly as in the proof of Lemma~\ref{lem:degeneracy-ordering} it was necessary to compute the
set of vertices with low degree in the remaining graph using Corollary~\ref{cor:high-deg}. For this we use the following lemma, whose proof, similarly as that of Corollary~\ref{cor:high-deg}, relies on color coding.

\begin{lemma}\label{lem:backconn}
Consider the following problem parameterized by $r$ and $k$: Given a graph $G$, its vertex subset $S$, and a vertex $u\in S$, determine whether $\backconn_r(S,u)<k$.
Then this problem can be solved by a family $(A_{n,r,k})_{n,r,k\in \N}$ of \AC-circuits where each $A_{n,r,k}$ has depth $\Oh(\log r)$ and size $f(r,k)\cdot n^{\Oh(1)}$, for some computable function $f$.
\end{lemma}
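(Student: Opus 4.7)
The plan is to apply \emph{color coding} and reduce the existence of $k$ internally disjoint short paths to a product of ``colorful path'' existence tests. I will set $c = k(r-1)$, an upper bound on the total number of internal vertices used by any family of $k$ internally vertex-disjoint paths of length at most $r$. Using the standard derandomization of color coding via perfect hash families, I will build a \dlogtime-uniform family $\mathcal{H}$ of colorings $\chi\colon V(G)\setminus S\to [c]$ of cardinality $g(k,r)\cdot \log n$ such that for every $T\subseteq V(G)\setminus S$ with $|T|\le c$, some $\chi\in\mathcal{H}$ is injective on $T$; such families are constructible by \AC-circuits of polynomial size and constant depth, as exploited in~\cite{BannachST15}. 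The key equivalence I then need to establish is that $\backconn_r(S,u)\ge k$ if and only if there exist $\chi\in\mathcal{H}$, pairwise distinct $v_1,\ldots,v_k\in S\setminus\{u\}$, and pairwise disjoint $C_1,\ldots,C_k\subseteq[c]$ with $|C_i|\le r-1$, such that for every $i$ the graph $G$ contains a $u$-to-$v_i$ path whose internal vertices lie outside $S$ and receive \emph{exactly} the colors of $C_i$, each once. The forward direction applies the perfect-hash property to the union of internal vertices of $k$ witness paths, while the backward direction is automatic because distinct colors within each path together with disjoint color sets across paths force internal vertex-disjointness.

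The central computation will be, for each fixed $\chi$, to compute a bit $P_C^\chi[x,y]$ for every $C\subseteq[c]$ with $|C|\le r-1$ and every $(x,y)\in V(G)^2$, indicating whether there is a walk $x=w_0,w_1,\ldots,w_{|C|+1}=y$ in $G$ with $w_1,\ldots,w_{|C|}\in V(G)\setminus S$ and $\{\chi(w_j)\}_{j=1}^{|C|}=C$. The base case $P_\emptyset^\chi[x,y]=[xy\in E(G)]$ is immediate, and for $|C|\ge 1$ I will use the balanced-split recurrence
\begin{equation*}
P_C^\chi[x,y] \;=\; \bigvee_{c'\in C}\;\bigvee_{\substack{C_1\sqcup C_2 = C\setminus\{c'\}\\ |C_1|=\lfloor(|C|-1)/2\rfloor}}\;\bigvee_{\substack{z\in V(G)\setminus S\\ \chi(z)=c'}} P_{C_1}^\chi[x,z]\wedge P_{C_2}^\chi[z,y],
\end{equation*}
where $z$ plays the role of the middle internal vertex of the path. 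Since $|C|$ is roughly halved at each recursive step, this unfolds into a DAG of depth $\Oh(\log r)$; at every level the total number of entries is at most $2^c\cdot n^2 = f(k,r)\cdot n^{\Oh(1)}$, each computed as a disjunction of at most $f(k,r)\cdot n$ conjunctions, so the total size of the sub-circuit for one $\chi$ is $f(k,r)\cdot n^{\Oh(1)}$.

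To conclude, for each $\chi\in\mathcal{H}$ I realize the existential quantification from the equivalence by taking the disjunction over all assignments of the $c$ colors to one of $k+1$ groups (the groups $C_1,\ldots,C_k$ assigned to the paths, plus a residual group of unused colors; $(k+1)^c=f(k,r)$ choices) and over all tuples of pairwise distinct endpoints $(v_1,\ldots,v_k)\in (S\setminus\{u\})^k$ ($n^k$ choices), of the conjunction $\bigwedge_i P_{C_i}^\chi[u,v_i]$. This contributes depth $\Oh(1)$ and size $f(k,r)\cdot n^{\Oh(1)}$. A final disjunction over $\chi\in\mathcal{H}$ (of fan-in $g(k,r)\cdot \log n$) followed by a negation produces the desired bit indicating $\backconn_r(S,u)<k$. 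The hard part of the plan is keeping the depth at $\Oh(\log r)$ rather than the $\Oh(r)$ one would get from a naive left-to-right subset-indexed DP over paths; the balanced-split recurrence above, which is the logical analogue of iterated matrix squaring modulated by color constraints, is precisely the device that brings the depth down from linear to logarithmic in $r$, while the color-coding derandomization contributes only a $\log n$ factor to the size and no additional depth.
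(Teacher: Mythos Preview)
Your color-coding strategy is sound in spirit, and the balanced-split recurrence for the colorful-path table $P_C^\chi[x,y]$ correctly achieves depth $\Oh(\log r)$. However, there is a genuine size blow-up in the final aggregation step. You write that you take a disjunction over all tuples $(v_1,\ldots,v_k)\in (S\setminus\{u\})^k$ of pairwise distinct endpoints, noting this is ``$n^k$ choices'', and then assert that the resulting size is $f(k,r)\cdot n^{\Oh(1)}$. But $n^k$ is \emph{not} of the form $f(k,r)\cdot n^{\Oh(1)}$: since $k$ is a parameter, the exponent of $n$ must be a universal constant independent of $k$. This step therefore destroys the FPT size bound. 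The distinctness of the $v_i$ cannot simply be dropped either, because two internally vertex-disjoint $u$--$v$ paths sharing their $S$-endpoint $v$ do not witness back-connectivity at least $2$.

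The fix is easy and in fact brings your argument closer to the paper's: instead of coloring only $V(G)\setminus S$, color all of $V(G)\setminus\{u\}$ (so the endpoints $v_i\in S$ also receive colors), using $c=kr$ colors. Then the color sets $C_i$ include the colors of the endpoints, and disjointness of the $C_i$ automatically forces the $v_i$ to be pairwise distinct; the final step becomes, for each $i$, a disjunction over a single vertex $v\in S\setminus\{u\}$, contributing only a factor of $n$ per path rather than $n^k$ overall. The paper pushes this idea to its simplest form: it uses only $k$ colors (one per path, applied to all of $V(P_i)\setminus\{u\}$) via an $(n,rk,k)$-universal coloring family, and then for each color $i$ merely checks whether some vertex of $S$ is at distance at most $r$ from $u$ in the subgraph induced by $u$ together with color class $i$, using the iterated-squaring distance circuit of Lemma~\ref{lem:distances} of depth $\Oh(\log r)$. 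This sidesteps the subset-indexed dynamic programming entirely and is considerably more direct than your balanced-split recurrence, while achieving the same depth bound.
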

\begin{proof}
We first recall the toolbox used by Bannach et al.~\cite{BannachST15}.

\begin{definition}[Universal coloring family]
For integers $n,k,c$, an {\em{$(n,k,c)$-universal coloring family}} is a family $\Lambda$ of functions from $[n]$ to $[c]$ with the following property:
for every subset $S\subseteq [n]$ with $|S|\leq k$ and function $f\colon S\to [c]$, there exists $\lambda\in \Lambda$ such that the restriction of~$\lambda$ to~$S$ is equal to $f$. 
\end{definition}

\begin{theorem}[Theorem 3.2 of~\cite{BannachST15}]\label{thm:ucf}
There is a \dlogtime-uniform family of \AC-circuits $(C_{n,k,c})_{n,k,c\in \N}$ without inputs such that each $C_{n,k,c}$:
\begin{itemize}
\item has depth $\Oh(1)$ and size $f(k,c)\cdot n^{\Oh(1)}$ for a computable function $f$; and
\item outputs an $(n,k,c)$-universal coloring family, encoded as a list of function tables.
\end{itemize}
\end{theorem}

Suppose $\Lambda$ is a $(n,rk,k)$-universal coloring family on the vertex set of $G$ (identified with $[n]$).
Provided $\backconn_r(S,u)\geq k$ there exists a path family $\Pp$ witnessing this fact: $|\Pp|=k$, each $P\in \Pp$ has length at most $r$ and leads from $u$ to another vertex of $S$ through vertices
outside of $S$, and paths from $\Pp$ pairwise share only $u$. 
Letting $\Pp=\{P_1,\ldots,P_k\}$ we have $|\bigcup_{i=1}^k (V(P_i)-\{u\})|\leq rk$, hence there exists a coloring $\lambda\in \Lambda$ such that
for each $i\in [k]$, all the vertices of $V(P_i)-\{u\}$ are colored with color $i$ in $\lambda$. We shall then say that $\Pp$ is {\em{well-colored}} by~$\lambda$.

This suggest the following construction of $A_{n,r,k}$.
First, apply the circuit $C_{n,rk,k}$ given by Theorem~\ref{thm:ucf} to construct an $(n,rk,k)$-universal coloring family $\Lambda$ on the vertex set of $G$.
Then, for each $\lambda\in \Lambda$ construct a circuit verifying whether there is a path family $\Pp$ as above that is well-colored by $\lambda$.
For this, for each $i\in [k]$ construct the graph $G_i$ which is the subgraph of $G$ induced by  $u$ and all vertices of $G$ that are colored with color $i$ in $\lambda$.
Then check whether any vertex of $S$ is at distance at most $r$ from $u$ in $G_i$ using the circuit provided by~Lemma~\ref{lem:distances}; this circuit has size $n^{\Oh(1)}$ and depth $\Oh(\log r)$.
The discussion from the previous paragraph proves that $\backconn_r(S,u)<k$ if and only if for every $\lambda\in \Lambda$ the circuit constructed for $\lambda$ 
did not succeed in finding a path family well-colored by $\lambda$.
\end{proof}

Observe that Lemma~\ref{lem:adm-half} and Lemma~\ref{lem:backconn} do not fit so nicely together for the proof of Lemma~\ref{lem:adm-para} as it was the case for Lemma~\ref{lem:degeneracy-ordering}.
The reason is that the circuits provided by Lemma~\ref{lem:backconn} have depth $\Oh(\log r)$ instead of $\Oh(1)$, so applying exactly the same strategy --- of removing at each step half of vertices --- would
yield a circuit of depth $\Oh(\log r\cdot \log n)$, which is too much for $\paraAC{1}$. The idea is to put $\varepsilon=1/r$ instead of $\varepsilon=1/2$ in order to trade the approximation factor for the
depth of the circuit. Thus the iteration has length $\Oh(\log n/\log r)$,
so in the final circuit we need $\Oh(\log n/\log r)$ layers of depth $\Oh(\log r)$ each, and the unwanted term $\log r$ cancels out. We now provide formal details.

\begin{proof}[of Lemma~\ref{lem:adm-para}]
Consider the following iterative procedure. Starting with $S=V(G)$ and empty block vertex ordering, 
repeatedly find the set of those vertices $v\in S$ for which $\backconn_r(S,v)>6r^2d^3$, remove it from $S$ and put it as the next block at the front of the constructed block vertex ordering.
By Lemma~\ref{lem:adm-half}, each iteration results in decreasing the size of $S$ by a multiplicative factor larger than $r$, thus the iteration terminates yielding a block vertex ordering of the whole graph within
at most $\log_r n= \frac{\log n}{\log r}\leq \log n$ iterations. Thus, it is straightforward to see that the obtained block vertex order has length at most $\log n$ and 
$r$-admissibility at most $6r^2d^3$, as requested. For the implementation using a circuit family, by Lemma~\ref{lem:backconn} each iteration can be performed using a circuit of size $f(r,d)\cdot n^{\Oh(1)}$
and depth $\Oh(\log r)$, and we may apply $\lfloor\frac{\log n}{\log r}\rfloor$ iterations sequentially. Thus the obtained circuit has size $f(r,d)\cdot n^{\Oh(1)}$ and depth $\Oh(\log n)$, as claimed.
\end{proof}

\paragraph*{Weak coloring number.}
Suppose $G$ is a graph, $r\in \N$, and~$\sigma$ is a vertex ordering of~$G$. For two vertices $u,v\in V(G)$ with $u\leq_\sigma v$, we say that $u$ is {\em{weakly $r$-reachable}} from $v$
if there is a path of length $r$ in~$G$ that leads from $v$ to $u$ and whose all internal vertices are larger than $u$ in~$\sigma$. 
The set of vertices weakly $r$-reachable from $v$ in $\sigma$ is denoted by $\WReach_r[G,\sigma,v]$.
The {\em{weak $r$-coloring number}} of $\sigma$ is equal to $$\wcol_r(G,\sigma)=\max_{v\in V(G)}|\WReach_r[G,\sigma,v]|$$ and the weak $r$-coloring number of $G$, denoted $\wcol(G)$, is the smallest weak $r$-coloring
number of a vertex ordering of $G$.

While measuring the complexity of vertex orderings via weak coloring numbers is arguably more useful than via admissibility,
it turns out that $r$-admissibility and weak $r$-coloring number are functionally equivalent.
While it is straightforward that $\adm_r(G,\sigma)\leq \wcol_r(G,\sigma)$ for any \mbox{$r\in \N$}, graph $G$, and its vertex ordering $\sigma$,
the weak coloring number is also bounded from above by a function of admissibility as follows:

\begin{lemma}[Theorem 2.6 of \cite{Dvorak13}]\label{lem:adm-wcol}
For any $c,r\in \N$ with \mbox{$c\geq 2$}, graph $G$, and vertex ordering~$\sigma$ of $G$ with \mbox{$\adm_r(G,\sigma)\leq c$}, we have 
$\wcol_r(G,\sigma)\leq \frac{c^{r+1}-1}{c-1}.$
\end{lemma}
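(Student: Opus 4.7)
The plan is to proceed by induction on $r$. The base case $r = 0$ is immediate since $\WReach_0[G, \sigma, v] = \{v\}$ for every $v$. For the inductive step, fix an arbitrary vertex $v$ and set $U_v = \{w : w \leq_\sigma v\}$. The strategy is to exhibit a set $X \subseteq V(G)$ of size at most $c$ such that every $u \in \WReach_r[G, \sigma, v] \setminus \{v\}$ lies in $\WReach_{r-1}[G, \sigma, x]$ for some $x \in X$. Granting this, the inductive hypothesis $|\WReach_{r-1}[G, \sigma, x]| \leq \frac{c^r - 1}{c - 1}$ applied to each $x \in X$ yields
\[
|\WReach_r[G, \sigma, v]| \;\leq\; 1 + c \cdot \frac{c^r - 1}{c - 1} \;=\; \frac{c^{r+1} - 1}{c - 1},
\]
which, maximized over $v$, gives the claimed bound on $\wcol_r(G, \sigma)$.

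To construct $X$, take a maximum family $\Pp$ of internally $v$-disjoint \emph{admissible paths} from $v$: paths of length at most $r$ ending in $U_v \setminus \{v\}$ whose internal vertices are strictly above $v$ in $\sigma$. By the definition of admissibility, $|\Pp| \leq \adm_r(G, \sigma, v) \leq c$. For each $Q \in \Pp$, declare as its \emph{hub} the endpoint of $Q$ if $|Q| = 1$, and otherwise the second vertex of $Q$; then $X$ is the set of these hubs, satisfying $|X| \leq c$.

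To verify the covering property, fix $u \in \WReach_r[G, \sigma, v] \setminus \{v\}$, take a shortest witnessing path $P_u = v, y_1, \ldots, y_k = u$, and let $j^* \geq 1$ be the smallest index with $y_{j^*} \leq_\sigma v$; such $j^*$ exists because $y_k = u \leq_\sigma v$. Then the prefix $v, y_1, \ldots, y_{j^*}$ is an admissible path of length at most $r$, and the suffix $y_{j^*}, \ldots, y_k$ has length at most $r - 1$, internal vertices above $u$ in $\sigma$, and $y_{j^*} \geq_\sigma u$; hence $u \in \WReach_{r-1}[G, \sigma, y_{j^*}]$. Maximality of $\Pp$ now forces either the prefix to belong to $\Pp$---in which case $y_{j^*}$ or $y_1$ is in $X$ (depending on the prefix's length), and directly $u \in \WReach_{r-1}[G, \sigma, x]$---or the prefix to conflict with some $Q \in \Pp$ by sharing an internal vertex. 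In the latter case, an exchange argument splices a segment of $Q$ with the suffix of $P_u$ at the conflict vertex to produce a witnessing path from $\mathrm{hub}(Q)$ to $u$.

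The main obstacle is keeping this spliced path within length $r - 1$, since naive concatenation can yield length up to $2r - 2$. This is handled by additionally choosing $\Pp$ to be lexicographically smallest (for instance, by first maximizing size and then minimizing total length), which forces conflicts between shortest-witness prefixes and paths of $\Pp$ to be localized near the hub of $Q$ and keeps the spliced path within length $r - 1$. With this localization, the induction closes and the bound follows.
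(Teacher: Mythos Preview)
The paper does not give its own proof of this lemma; it is quoted without proof as Theorem~2.6 of \cite{Dvorak13}. Your overall inductive strategy matches the standard one and is correct in outline: exhibit a set $X$ of size at most $c$ so that $\WReach_r[G,\sigma,v]\setminus\{v\}\subseteq\bigcup_{x\in X}\WReach_{r-1}[G,\sigma,x]$, then apply the inductive bound (which is legitimate because $\adm_{r-1}(G,\sigma)\leq\adm_r(G,\sigma)\leq c$).

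However, your argument has a genuine gap in the splicing case. You correctly identify the obstacle---the naive splice can have length up to $2r-2$---and then claim that choosing $\Pp$ to be of maximum size and then of minimum total length ``forces conflicts \ldots\ to be localized near the hub of $Q$'' and keeps the spliced path within length $r-1$. This is an assertion, not a proof: you neither make precise what ``localized near the hub'' means, nor verify in any single case that the splice $q_1,\ldots,q_j=y_i,y_{i+1},\ldots,y_k=u$ has length $(j-1)+(k-i)\leq r-1$. Length-minimality of $\Pp$ does buy you something---for instance, if the conflict vertex is the neighbor $y_1$ of $v$, then $y_1$ must equal $q_1$ (otherwise $Q$ could be shortcut through $y_1$ without creating new conflicts), and then the suffix $y_1,\ldots,u$ of $P_u$ already witnesses $u\in\WReach_{r-1}[G,\sigma,y_1]$. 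But when the first shared vertex $y_i$ has $i\geq 2$ and sits deep inside $Q$, the natural exchange (replace $Q$ by a hybrid through $y_1,\ldots,y_i$) can collide with \emph{other} members of $\Pp$, so minimality of total length does not obviously forbid it, and nothing in your text bounds $(j-1)+(k-i)$. You also do not check that the spliced walk has all internal vertices $>_\sigma u$, nor that $u\leq_\sigma\mathrm{hub}(Q)$ when $|Q|=1$ and the hub lies below $v$. As written, the induction does not close; the exchange argument you gesture at is exactly the heart of the matter and must be stated and proved, not asserted.
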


Thus, by setting
$g(r,d)\coloneqq \frac{(6r^2d^3)^{r+1}-1}{6r^2d^3-1}$
and combining Lemma~\ref{lem:adm-para} with Lemma~\ref{lem:adm-wcol}, we obtain the following. 

\begin{theorem}\label{thm:wcol-apx}
The following transformation parameterized by integers $r$ and $d$ is in $\paraAC{1}$: 
Given a graph $G$ with $\nabla_{r-1}(G)\leq d$, compute a vertex ordering of $G$ with $\wcol_r(G)\leq g(r,d)$.
\end{theorem}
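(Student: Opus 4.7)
The plan is to combine the two preceding results essentially in one step, observing that the bounds and circuit-complexity budgets compose without difficulty. First, since every depth-$r$ topological minor is also a depth-$r$ minor, we have $\topnabla_{r-1}(G) \leq \nabla_{r-1}(G) \leq d$, so the hypothesis of \cref{lem:adm-para} is satisfied. Applying the circuit family provided by that lemma, I would compute in $\paraAC{1}$ a block vertex ordering $\tau = (B_1, \ldots, B_\ell)$ of $G$ with $r$-admissibility at most $6r^2d^3$ and length $\ell \leq \log n$.

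Next, I would convert $\tau$ into an ordinary vertex ordering $\sigma$ by ordering each block arbitrarily, say according to the input indices of the vertices. Concretely, the boolean matrix for $\leq_\sigma$ can be computed by a constant-depth \AC-circuit: for two distinct vertices $u, v$, declare $u \leq_\sigma v$ iff either $u \leq_\tau v$ and $\neg(v \leq_\tau u)$ (i.e.\ $u$'s block strictly precedes $v$'s), or $u, v$ lie in the same block of $\tau$ and $u$ has the smaller input index. As observed in the discussion preceding \cref{lem:adm-para}, concatenating arbitrary orderings of the blocks in this way yields a vertex ordering of $G$ whose $r$-admissibility is bounded by the $r$-admissibility of $\tau$; hence $\adm_r(G, \sigma) \leq 6r^2d^3$.

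Finally, I would invoke \cref{lem:adm-wcol} with $c = 6r^2 d^3$ (noting $c \geq 2$ for all relevant $r, d$) to conclude
\[
\wcol_r(G, \sigma) \;\leq\; \frac{c^{r+1} - 1}{c - 1} \;=\; g(r, d).
\]
The overall circuit has size $f(r, d) \cdot n^{\Oh(1)}$ and depth $f(r, d) + \Oh(\log n)$ since the conversion step adds only constant depth and polynomial size on top of the circuit from \cref{lem:adm-para}, placing the whole transformation in $\paraAC{1}$.

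I do not foresee a genuine obstacle here: the theorem is essentially a packaging of \cref{lem:adm-para} with the functional equivalence between admissibility and weak coloring numbers from \cref{lem:adm-wcol}. The only minor points requiring care are (i) shifting between $\nabla$ and $\topnabla$ via the trivial inequality $\topnabla_{r-1} \leq \nabla_{r-1}$, and (ii) verifying that turning a block ordering into a linear ordering can be done by an \AC-circuit of constant depth, so that it fits within the $\paraAC{1}$ budget.
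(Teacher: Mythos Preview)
Your proposal is correct and takes essentially the same approach as the paper: the theorem is obtained by combining \cref{lem:adm-para} with \cref{lem:adm-wcol}, exactly as you describe. The paper's own justification is just the sentence preceding the theorem statement, and your write-up even makes explicit the two minor points (passing from $\nabla_{r-1}$ to $\topnabla_{r-1}$, and linearizing the block ordering in constant depth) that the paper leaves implicit.
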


Our approach to proving Theorem~\ref{thm:wcol-apx} is as follows. The weak 
coloring numbers are closely related to another measure called admissibility.
More precisely, every order witnessing that the
$r$-admissibility is small also witnesses that the weak $r$-coloring number is small. 
For computing $r$-ad\-missi\-bility there exists a greedy approximation 
algorithm~\cite{dvovrak2013constant,grohe2015colouring}. We turn this 
greedy approximation algorithm into a low-depth circuit using a similar trick as 
we did for degeneracy. In each single step of the algorithm we 
make use of the color coding toolbox provided by Bannach et al.~\cite{BannachST15}.

For applications we will need to efficiently compute the weak $r$-reachability relation, as expressed in the next lemma.

\begin{lemma}\label{lem:wreach}
Consider the following problem parameterized by $r$: Given a graph $G$, its vertex ordering $\sigma$, and two vertices $u$ and $v$, determine whether $u$ is weakly $r$-reachable from $v$ in $\sigma$.
Then this problem can be solved by a \dlogtime-uniform family $(W_{n,r})_{n,r\in \N}$ of \AC-circuits where each $W_{n,r}$ has size $n^{\Oh(1)}$ depth $\Oh(\log r)$.
\end{lemma}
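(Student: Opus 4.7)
The plan is to mimic the iterative squaring approach of Lemma~\ref{lem:distances}, applied to a subgraph of $G$ that depends on the input vertex $u$. The key observation is that $u$ is weakly $r$-reachable from $v$ in $\sigma$ if and only if $u\leq_\sigma v$ and there exists a walk of length at most $r$ from $v$ to $u$ in the subgraph $G_u$ of $G$ induced by $\{u\}\cup\{w\in V(G) : w>_\sigma u\}$. Indeed, any occurrence of $u$ as an internal vertex on such a walk can be short-cut, so the condition that all internal vertices be strictly larger than $u$ in $\sigma$ is equivalent to the walk staying inside $G_u$. Moreover, any walk of length at most $r$ in $G_u$ contains a path of length at most $r$ between its endpoints, so walks and paths are interchangeable here.

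First, I would construct the $n\times n$ boolean matrix $A$ with $A[x][y]=1$ iff either $x=y$, or $xy\in E(G)$ and both $x$ and $y$ belong to $\{u\}\cup\{w : w>_\sigma u\}$. Each entry of $A$ is a constant-depth boolean combination of the input bits: a conjunction of the adjacency bit for $xy$ with the two membership predicates $[x=u]\lor[x>_\sigma u]$ and $[y=u]\lor[y>_\sigma u]$, together with the diagonal self-loops. Thus $A$ is produced in depth $\Oh(1)$ and size $n^{\Oh(1)}$.

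Next, I would compute $A^r$ in the $(\OR,\AND)$-semiring by iterative squaring, exactly as in the proof of Lemma~\ref{lem:distances}; this uses depth $\Oh(\log r)$ and size $n^{\Oh(1)}$. Because of the self-loops on the diagonal, $A^r[v][u]=1$ holds precisely when there is a walk of length at most $r$ from $v$ to $u$ in $G_u$. The circuit $W_{n,r}$ then outputs $A^r[v][u]\land[u\leq_\sigma v]$, which adds a single AND gate and does not affect the asymptotic bounds.

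The only mild twist relative to Lemma~\ref{lem:distances} is that $u$ is part of the \emph{input} of $W_{n,r}$ rather than a parameter, so the matrix $A$ genuinely depends on the input bits encoding $u$ and the matrix of $\leq_\sigma$. This causes no problem because each membership test is a constant-depth boolean combination of those bits, and the iterative squaring stage treats $A$ as an opaque matrix. The \dlogtime-uniformity of the resulting circuit family is immediate from the regular structure of iterative squaring, so I do not expect any genuine obstacle in the construction.
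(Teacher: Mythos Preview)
Your proposal is correct and follows essentially the same approach as the paper: both reduce the question to checking $u\leq_\sigma v$ together with a distance-$\leq r$ test in the subgraph induced by $\{w:w\geq_\sigma u\}$, the latter handled by iterative squaring as in Lemma~\ref{lem:distances}. You spell out more carefully than the paper does how the matrix depends on the input vertex $u$ and why this causes no difficulty, but the underlying idea is identical.
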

\begin{proof}
It suffices to verify whether $u\leq_\sigma v$ and the distance between $u$ and $v$ in the subgraph induced by vertices not smaller in $\sigma$ than $u$ is at most $r$.
The former can be read from an input gate, while the latter can be done using a circuit of size $n^{\Oh(1)}$ and depth $\Oh(\log r)$ by Lemma~\ref{lem:distances}.
\end{proof}

Actually, at this point we may already prove Theorem~\ref{thm:domset-main}. Amiri et al.~\cite{AmiriMRS17} observed that given a vertex ordering with a low weak $2r$-coloring number, an approximate distance-$r$ dominating set
can be found using a very simple selection rule.

\begin{theorem}[explicit in the proof of Theorem~8 of~\cite{AmiriMRS17}]\label{thm:distributed}
Suppose $r\in \N$, $G$ is a graph, and $\sigma$ is a vertex ordering of $G$.
For $u\in V(G)$, let $d(u)$ be the vertex of $\WReach_r[G,\sigma,u]$ that is the smallest in the ordering $\sigma$, and define $D\coloneqq \{d(u)\colon u\in V(G)\}$.
Then $D$ is a distance-$r$ dominating set in $G$ and $|D|\leq \wcol_{2r}(G,\sigma)\cdot \dom_r(G)$.
\end{theorem}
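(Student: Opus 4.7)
The plan is to establish the two claims separately. The domination claim is immediate: the trivial length-$0$ walk shows that $u \in \WReach_r[G,\sigma,u]$, so $d(u)$ is well-defined, and any path witnessing $d(u)\in\WReach_r[G,\sigma,u]$ has length at most $r$, placing $u$ at distance at most $r$ from $d(u)\in D$.

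For the size bound I would set up a charging argument. Let $D^\star$ be a minimum distance-$r$ dominating set in $G$. The plan is to construct a map $\phi\colon D \to D^\star$ with the property that $v \in \WReach_{2r}[G,\sigma,\phi(v)]$ for every $v \in D$. Once such a $\phi$ exists, the preimage of any $v^\star \in D^\star$ is contained in $\WReach_{2r}[G,\sigma,v^\star]$ and hence has size at most $\wcol_{2r}(G,\sigma)$, so summing over $D^\star$ yields $|D| \le \wcol_{2r}(G,\sigma) \cdot \dom_r(G)$, as required.

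The main step --- the only place where one must genuinely exploit the $\sigma$-minimality in the definition of $d(u)$ --- is producing such a witness $\phi(v)$ for each $v \in D$. Given $v \in D$, I would fix some $u$ with $d(u) = v$ together with some $v^\star \in D^\star$ at distance at most $r$ from $u$. Concatenating the reverse of the path witnessing $v \in \WReach_r[G,\sigma,u]$ (length at most $r$, internal vertices strictly $>_\sigma v$) with a shortest $u$-to-$v^\star$ path in $G$ (length at most $r$) produces a walk $W$ from $v$ to $v^\star$ of length at most $2r$; shortening $W$ by removing loops gives a simple $v$-to-$v^\star$ path of length at most $2r$ whose internal vertices lie in $V(W)\setminus\{v,v^\star\}$.

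It therefore suffices to show that every vertex of $W$ is $\ge_\sigma v$, since this forces the internal vertices of the extracted simple path to be $>_\sigma v$ and also yields $v \le_\sigma v^\star$. For the piece of $W$ coming from the weak-reachability path, this is immediate by construction and because $v \le_\sigma u$. For the $u$-to-$v^\star$ piece, I would pick its $\sigma$-smallest vertex $w$; the $u$-to-$w$ prefix then has all internal vertices $\ge_\sigma w$, so it witnesses $w \in \WReach_r[G,\sigma,u]$. By $\sigma$-minimality of $d(u) = v$ inside $\WReach_r[G,\sigma,u]$ we obtain $v \le_\sigma w$, and by $\sigma$-minimality of $w$ on that piece we get $v \le_\sigma w \le_\sigma x$ for every vertex $x$ appearing on the $u$-to-$v^\star$ path. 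This is the only conceptually delicate point in the argument; everything else is bookkeeping over the two concatenated pieces.
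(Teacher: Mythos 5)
Your proof is correct: the domination half is the trivial observation that $u$ is within distance $r$ of $d(u)$, and for the size bound your charging map $\phi\colon D\to D^\star$, justified by taking the $\sigma$-minimal vertex $w$ on the $u$--$v^\star$ path to force $d(u)\le_\sigma w$ and hence $d(u)\in\WReach_{2r}[G,\sigma,v^\star]$, is exactly the argument this paper defers to (it states the theorem without proof, citing the proof of Theorem~8 of Amiri et al.). No gaps.
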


\begin{proof}[of Theorem~\ref{thm:domset-main}]
Since $\Cc$ has effectively bounded expansion, for every $r\in \N$ there exists a constant $d\in \N$, computable from $r$,
such that no graph from $\Cc$ admits a depth-$(2r-1)$ topological minor with edge density larger than $d$.
Consequently, given an $n$-vertex graph $G\in \Cc$ we may apply the circuit provided by Theorem~\ref{thm:wcol-apx} to compute a vertex ordering $\sigma$ of $G$ with $\wcol_{2r}(G,\sigma)\leq g(2r,d)$, where the latter
is a constant depending only on $r$ in a computable manner. This circuit has size $f(r,d)\cdot n^{\Oh(1)}$ and depth $\Oh(\log n)$, for computable $f$.
Next, by Lemma~\ref{lem:wreach} we may compute, for every pair of vertices $u,v\in V(G)$, whether $u\in \WReach_r[G,\sigma,v]$ using a circuit of size $n^{\Oh(1)}$ and depth $\Oh(\log r)\leq \Oh(\log n)$.
Finally, for every vertex $u\in V(G)$ we may construct a circuit of depth $\Oh(1)$ that finds $d(u)$.
We conclude by adding one output gate for each $v\in V(G)$ that computes whether $v$ is to be included into $D$ by checking whether $v=d(u)$ for any $u\in V(G)$.
It follows directly from the construction that the circuit satisfies the required size and depth constraints, while the correctness of the output is asserted by Theorem~\ref{thm:distributed}.
\end{proof}

\subsection{Low treedepth colorings}

Being able to efficiently compute vertex orderings with low weak coloring numbers enables us to compute low treedepth colorings.
Let us now introduce the relevant definitions.

A {\em{separation forest}}\footnote{This notion is also called {\em{elimination forest}} in the literature; we find the name {\em{separation forest}} more~explanatory.} of a graph $G$ is a forest $F$ on the same vertex set as $G$ such that whenever $uv$ is an edge in $G$, then either $u$
is an ancestor of $v$, or $v$ is an ancestor of $u$ in $F$
The {\em{treedepth}} of a graph $G$ is the smallest possible depth of a separation forest of $G$.
For an integer~$p$, a coloring $\lambda\colon V(G)\to \{1,\ldots,M\}$ is a {\em{treedepth-$p$ coloring}} of $G$ if every $i$-tuple of color classes in $\lambda$, $i\leq p$, induces
in $G$ a graph of treedepth at most $i$.

It is shown in~\cite{nevsetvril2008grad}
that a class $\CCC$ of graphs has bounded expansion if and only if for every $p$ there is a number $M$
such that every graph $G\in \CCC$ admits a treedepth-$p$
coloring using $M$ colors. 
We remark that the above definition of a treedepth-$p$ coloring can be relaxed, yielding a less restrictive definition that is  sufficient for most algorithmic purposes, including our purposes in this paper. 
Namely, it would be sufficient to require that every $p$-tuple
of classes induces in $G$ a graph of treedepth at most $f(p)$,
for some function $f\from\N\to\N$. It follows from the proof in~\cite{nevsetvril2008grad} that this weaker variant 
yields a notion 
that is still equivalent to having bounded expansion. 
Below, we use the original notion of treedepth-$p$ colorings.

\medskip
We now show how to compute low treedepth colorings orderings with low weak $r$-coloring numbers.
Suppose $G$ is a graph and $\sigma$ is a vertex ordering of $G$. For $r\in \N$, let $G\langle r,\sigma\rangle$ be the {\em{weak $r$-reachability graph}} of $\sigma$, whose vertex set is $V(G)$ and
where $u<_\sigma v$ are considered adjacent if and only if $u\in \WReach_r[G,\sigma,v]$. The following lemma explains the relation between weak $r$-reachability graphs and low treedepth colorings.
\footnote{
To be more precise, (the proof of) Theorem 2.6 of \cite{zhu2009colouring} asserts that the proper coloring of $G\langle 2^{p-2},\sigma\rangle$ obtained by the greedy coloring procedure of Proposition~\ref{prop:greedy}
is a {\em{$p$-centered coloring}} of $G$.
The fact that the coloring is obtained by the greedy procedure is irrelevant in the proof, the reasoning works for any proper coloring of $G\langle 2^{p-2},\sigma\rangle$.
Also, the notion of a $p$-centered coloring is actually stronger than that of a treedepth-$p$ colorings: every $p$-centered coloring is in particular a treedepth-$p$ coloring.
We invite the reader to~\cite[Chapter 2]{notes} for a comprehensive presentation of these results.}

\begin{lemma}[implicit in Theorem 2.6 of \cite{zhu2009colouring}]\label{lem:wcol-ltd}
For any graph~$G$, its vertex ordering $\sigma$, and integer $p\in \N$, every proper coloring of $G\langle 2^{p-2},\sigma\rangle$ is a treedepth-$p$ coloring of $G$.
\end{lemma}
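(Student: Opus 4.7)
The plan is to proceed by induction on $i$, showing simultaneously for every $i \in \{1,\ldots,p\}$ that any $i$ color classes of $\lambda$ induce a subgraph of $G$ of treedepth at most $i$. The base case $i=1$ is immediate because $G$ itself is a subgraph of $G\langle 2^{p-2},\sigma\rangle$ (every edge of $G$ trivially witnesses weak $1$-reachability of its lower endpoint from its higher endpoint), so $\lambda$ is in particular a proper coloring of $G$, and hence each color class is an independent set and has treedepth $1$.

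For the inductive step with $i \ge 2$, fix a subgraph $H$ of $G$ induced by $i$ color classes. Its treedepth is the maximum treedepth over its connected components, so it is enough to bound the treedepth of a single connected component $K$ of $H$. The plan is to locate a vertex $v_K \in V(K)$ whose color is unique inside $K$ and make it the root of a separation forest of $K$. Once such $v_K$ is produced, every connected component of $K \setminus \{v_K\}$ is induced by at most $i-1$ of the chosen color classes (the color $\lambda(v_K)$ has been eliminated entirely), and so by the inductive hypothesis has treedepth at most $i-1$; hanging the inductively obtained separation forests under $v_K$ yields a separation forest of $K$ of depth at most $i$.

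The main obstacle, and the only place where the exponent $2^{p-2}$ is used, is producing the unique-colored root $v_K$. I would take $v_K$ to be the $\sigma$-smallest vertex of $K$ and argue by contradiction: suppose some $u \in V(K) \setminus \{v_K\}$ satisfies $\lambda(u) = \lambda(v_K)$. Since $\lambda$ is a proper coloring of $G\langle 2^{p-2},\sigma\rangle$, the pair $\{u,v_K\}$ is not an edge there, hence $v_K \notin \WReach_{2^{p-2}}[G,\sigma,u]$; that is, no walk in $G$ from $u$ to $v_K$ of length at most $2^{p-2}$ has all internal vertices $\sigma$-above $v_K$. I would derive a contradiction by constructing such a walk from the connectivity of $K$ together with the fact that $K$ uses at most $p$ colors: start from any $u$-to-$v_K$ walk inside $K$, repeatedly bisect it at its $\sigma$-smallest internal vertex $w$ (which necessarily lies $\sigma$-above $v_K$), and recurse on the two sub-walks, each of which avoids the color of $w$ and hence involves strictly fewer colors than the previous level. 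The bisection tree therefore has depth at most $p-2$, and concatenating the resulting pieces yields a walk of length at most $2^{p-2}$ with all internal vertices $\sigma$-above $v_K$, the desired contradiction. Making this recursion formally account for the color budget at each level, in the spirit of Zhu's argument~\cite{zhu2009colouring}, is the technical heart of the proof and is precisely what explains the doubling bound.
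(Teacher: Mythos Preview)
The paper does not give its own proof of this lemma; it cites Zhu and, in the accompanying footnote, notes that Zhu's argument actually yields the stronger conclusion that $\lambda$ is a $p$-centered coloring (every connected subgraph on at most $p$ colors contains a vertex of unique color), from which the treedepth-$p$ property follows by exactly the induction you outline in your first two paragraphs. So your high-level plan --- reduce to $p$-centeredness by showing that the $\sigma$-minimum $v_K$ of each component $K$ has a unique color, then recurse on $K\setminus\{v_K\}$ --- is the intended one.

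The gap is in your justification of the uniqueness of $\lambda(v_K)$. Your bisection asserts that after splitting the $u$--$v_K$ walk at its $\sigma$-smallest internal vertex $w$, each sub-walk ``avoids the color of $w$''. This is not true: $w$ is an endpoint of both sub-walks, and nothing prevents other internal vertices from carrying the color $\lambda(w)$ either --- the only thing you know about those internal vertices is that they lie $\sigma$-above $w$, which says nothing about their colors. Consequently the color count need not drop along the recursion, and there is no reason the bisection tree should have depth at most $p-2$. Zhu's argument does maintain a parameter tied to the number of colors that strictly decreases, but the recursion is organised differently and more delicately than a plain bisection at the $\sigma$-minimum internal vertex. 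You correctly flag this step as the technical heart and gesture toward Zhu, but the concrete mechanism you wrote down does not carry the induction.
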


Observe that if $G$ is a graph with a vertex ordering $\sigma$ such that $\wcol_{2^{p-2}}(G,\sigma)\leq c$ for some $c\in \N$, then
$\sigma$ is actually a vertex ordering of $G\langle 2^{p-2},\sigma\rangle$ of degeneracy $c-1$. This implies that $G\langle 2^{p-2},\sigma\rangle$ admits a proper coloring with $c$ colors.
We already know how to efficiently compute vertex orderings with low weak $r$-coloring numbers for graphs from any class of bounded expansion, see Theorem~\ref{thm:wcol-apx}.
Applying Lemma~\ref{lem:deg-apx-col} to the graph $G\langle 2^{p-2},\sigma\rangle$, we get a parallelized algorithm for computing a treedepth-$p$ coloring of a given graph $G$.
The main part of this section will be devoted to the proof of the following lemma.
From now we assume that a coloring of an $n$-vertex graph with $M$ colors is represented by $Mn$ gates, $M$ for each vertex $u$, where in an encoding of a coloring exactly one of these gates is set to
true and this gate denotes the color of~$u$.
We get the following result as an immediate corollary.

\begin{theorem}\label{thm:lowtd-colorings}
Suppose $\Cc$ is a class of effectively bounded expansion.
Then for every $p\in \N$ there exists a constant $M=M(p)$, computable from $p$, such that the following transformation parameterized by $p$ is in $\paraAC{1}$:
given a graph $G\in \Cc$, compute its treedepth-$p$ coloring using~$M$ colors.
\end{theorem}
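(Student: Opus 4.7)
The plan is to stitch together three $\paraAC{1}$ transformations already developed in the paper. Set $r := 2^{p-2}$. Since $\Cc$ has effectively bounded expansion, we can compute from $p$ a constant $d$ such that $\nabla_{r-1}(G)\le d$ for every $G\in \Cc$, and we set $c := g(r,d)$ and $M := (4(c-1)+1)^2$; both constants depend on $p$ in a computable way. The first step is to invoke the circuit family of Theorem~\ref{thm:wcol-apx} to compute a vertex ordering $\sigma$ of $G$ with $\wcol_r(G,\sigma)\le c$.

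Second, apply Lemma~\ref{lem:wreach} in parallel to every pair $(u,v)\in V(G)\times V(G)$ to build the $n\times n$ adjacency matrix of the weak $r$-reachability graph $G\langle r,\sigma\rangle$. Each individual invocation produces a circuit of polynomial size and depth $\Oh(\log r)=\Oh(p)=f(p)$, and running $n^2$ copies in parallel still yields a circuit of size $n^{\Oh(1)}$ and depth $f(p)$.

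Third, observe that the very ordering $\sigma$ witnesses that $G\langle r,\sigma\rangle$ has degeneracy at most $c-1$: for any vertex $v$, its neighbors in $G\langle r,\sigma\rangle$ that precede $v$ in $\sigma$ are contained in $\WReach_r[G,\sigma,v]\setminus\{v\}$, which has size less than $c$. Hence Lemma~\ref{lem:deg-apx-col} (applied with parameter $c-1$, a constant depending only on $p$) produces in $\paraAC{1}$ a proper coloring of $G\langle r,\sigma\rangle$ using at most $M$ colors. By Lemma~\ref{lem:wcol-ltd}, every proper coloring of $G\langle r,\sigma\rangle = G\langle 2^{p-2},\sigma\rangle$ is automatically a treedepth-$p$ coloring of $G$, so this coloring is the desired output.

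The only thing to verify is that the sequential composition of these three circuit stages stays inside $\paraAC{1}$. Each stage has size $f(p)\cdot n^{\Oh(1)}$ and depth $f(p)+\Oh(\log n)$; since we compose a constant number (three) of them, feeding the outputs of one as inputs to the next, both the size and depth bounds are preserved up to the same form. The main obstacle is therefore not conceptual but bookkeeping: one must ensure that the adjacency matrix of $G\langle r,\sigma\rangle$ produced in step two is presented in the exact format that Lemma~\ref{lem:deg-apx-col} expects (an $n\times n$ boolean matrix), and that the constant $c-1$ passed as the degeneracy parameter is hard-coded into the circuit as a function of $p$ rather than read from the input, so that \dlogtime-uniformity is preserved throughout.
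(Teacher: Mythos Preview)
Your proof is correct and follows essentially the same route as the paper's own proof: set $r=2^{p-2}$, use Theorem~\ref{thm:wcol-apx} to obtain an ordering with bounded $\wcol_r$, build $G\langle r,\sigma\rangle$ via Lemma~\ref{lem:wreach}, properly color it with Lemma~\ref{lem:deg-apx-col}, and conclude with Lemma~\ref{lem:wcol-ltd}. Your value $M=(4(c-1)+1)^2$ is in fact the tighter one matching the statement of Lemma~\ref{lem:deg-apx-col}; the paper writes $(4g(r,d))^2$, a harmless overcount.
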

\begin{proof}
Let $r=2^{p-2}$. Since $\Cc$ has effectively bounded expansion, there exists a constant $d\in \N$, computable from $p$,
such that no graph from $\Cc$ admits a depth-$(r-1)$ topological minor with edge density larger than $d$.
Consequently, given $G\in \Cc$ we may apply the circuit provided by Theorem~\ref{thm:wcol-apx} to compute a vertex ordering $\sigma$ of $G$ with $\wcol_r(G,\sigma)\leq g(r,d)$, where the latter
is again a constant computable from $p$. This circuit has size $f(p,d)\cdot n^{\Oh(1)}$ and depth $\Oh(\log n)$, for computable $f$.
Next, by applying the circuit provided by Lemma~\ref{lem:wreach} to each pair of vertices in~$G$ we may compute the adjacency matrix of the graph $G\langle r,\sigma\rangle$.
Since $\sigma$ witnesses that $G\langle r,\sigma\rangle$ is $(g(r,d)-1)$-degenerate, by applying the circuit provided by Lemma~\ref{lem:deg-apx-col} to $G\langle r,\sigma\rangle$
we may compute a proper coloring of this graph using at most $M\coloneqq (4g(r,d))^2$ colors, which is a constant depending on $p$ in a computable manner.
Lemma~\ref{lem:wcol-ltd} asserts that this coloring is a treedepth-$p$ coloring of $G$.
The claimed size and depth bounds on the obtained circuit follow directly from the construction and from bounds provided by Theorem~\ref{thm:wcol-apx}, Lemma~\ref{lem:wreach}, and Lemma~\ref{lem:deg-apx-col}.
\end{proof}

We remark that the problem of computing a low treedepth coloring can be also approached using {\em fraternal augmentations},
as was done e.g. in~\cite{dvovrak2013testing,nevsetvril2008grad,NesetrilM16}.
Both in our line of reasoning and in this approach the key step is computing a vertex
 ordering of low degeneracy, that is, Lemma~\ref{lem:degeneracy-ordering}.



\subsection{Computing separation forests}

A low treedepth coloring is still not enough for the model-checking algorithm to work, as we also need to compute separation forests witnessing that appropriate induced subgraphs have bounded
treedepth. In general computing separation forests of optimal depth is a hard computational problem, but if one allows approximate depth
there is a very simple and well-known way to do it
(see Section 17.3 in~\cite{sparsity}). 
The first observation is that graphs of bounded treedepth do not contain long paths.

\begin{lemma}[see Section 6.2 in~\cite{sparsity}]\label{lem:td-path}
A path on $2^k$ vertices has treedepth $k+1$.
\end{lemma}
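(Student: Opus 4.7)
The plan is to establish both the upper and lower bound on the treedepth of $P_{2^k}$ separately, each by induction on $k$. The base case $k=0$ is trivial, since $P_1$ is a single vertex and has treedepth $1 = 0+1$.

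For the upper bound, I will recursively construct a separation forest of $P_{2^k}$ of depth at most $k+1$. Pick the vertex $v$ at position $2^{k-1}$ along the path as the root; removing $v$ yields two subpaths of sizes $2^{k-1}-1$ and $2^{k-1}$. By the induction hypothesis, the subpath on $2^{k-1}$ vertices admits a separation forest of depth at most $k$, and the same holds for the shorter subpath (invoking monotonicity of treedepth under induced subgraphs: $P_{2^{k-1}-1}$ is an induced subgraph of $P_{2^{k-1}}$). Stacking $v$ as the root above these two subforests yields a separation forest of $P_{2^k}$ of depth $k+1$.

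For the lower bound, I will argue that every separation forest $F$ of $P_{2^k}$ has depth at least $k+1$. Since $P_{2^k}$ is connected, $F$ must consist of a single tree; let $r$ be its root. Removing $r$ from the path leaves at most two subpaths, whose vertex sets together contain $2^k - 1$ elements, so at least one of them, call it $P'$, has at least $\lceil (2^k-1)/2 \rceil = 2^{k-1}$ vertices. The restriction of $F$ to $V(P')$ --- keeping only those nodes and inheriting the ancestor-descendant relation --- is itself a separation forest of $P'$, of depth at most $\mathrm{depth}(F)-1$. By the induction hypothesis combined with monotonicity ($P_{2^{k-1}}$ is an induced subgraph of $P'$), $P'$ has treedepth at least $k$, which gives $\mathrm{depth}(F) \geq k+1$.

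The main subtlety, which I expect to be the only nontrivial point, is the need to invoke monotonicity of treedepth under induced subgraphs to absorb the small asymmetries present in both directions: in the upper bound one of the two subpaths has only $2^{k-1}-1$ vertices rather than a clean power of two, and in the lower bound one of the resulting subpaths may have strictly more than $2^{k-1}$ vertices. This monotonicity is immediate from the definition: restricting a separation forest of $G$ to any subset $S \subseteq V(G)$ produces a valid separation forest of the induced subgraph $G[S]$ of depth no larger than that of the original, because adjacency in $G[S]$ implies adjacency in $G$ and hence an ancestor-descendant relation that is preserved by restriction.
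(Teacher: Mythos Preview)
Your proof is correct and follows the standard inductive argument for this fact. Note, however, that the paper does not actually prove this lemma: it is stated with a citation to Section~6.2 of the Ne\v{s}et\v{r}il--Ossona de Mendez book and used as a black box, so there is no paper-side proof to compare against. Your argument is exactly the one found in that reference.
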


Since treedepth is a monotone parameter, i.e. can only decrease under taking a subgraph, we immediately obtain the following statement.

\begin{corollary}\label{cor:small-diam}
A graph of treedepth at most $h$ does not contain a path on $2^h$ vertices as a subgraph. 
Consequently, in a graph of treedepth $h$ every connected component has diameter smaller than $2^h$.
\end{corollary}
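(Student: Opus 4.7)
The plan is to derive both assertions directly from Lemma~\ref{lem:td-path} combined with the monotonicity of treedepth under taking subgraphs, which has already been mentioned just above the statement. The whole argument is essentially a one-line contradiction, so I will simply organize it cleanly.

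For the first part I will argue by contradiction. Suppose $G$ has treedepth at most $h$ and that $G$ contains a path $P$ on $2^h$ vertices as a subgraph. By monotonicity, $P$ itself then has treedepth at most $h$. But Lemma~\ref{lem:td-path}, applied with $k=h$, asserts that a path on $2^h$ vertices has treedepth exactly $h+1$, a contradiction. Hence no such subgraph can occur.

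For the second part, I will reduce the diameter bound to the absence of long paths established in the first part. Let $C$ be a connected component of $G$ and suppose, towards a contradiction, that the diameter of $C$ is at least $2^h$. Then there exist vertices $u,v\in C$ at distance at least $2^h$ in $G$, and any shortest $u$--$v$ path in $C$ is simple and uses at least $2^h+1$ vertices. In particular it contains a subpath on exactly $2^h$ vertices, which is a subgraph of $G$ isomorphic to a path on $2^h$ vertices --- contradicting the first part. Thus every component of $G$ has diameter strictly less than $2^h$.

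There is no real obstacle in this proof; the only point worth stating explicitly is the (standard) observation that distances in a connected graph are realised by simple paths, which is what converts ``two far-apart vertices'' into ``a long path subgraph'' in the second part.
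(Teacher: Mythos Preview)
Your proof is correct and follows exactly the approach the paper intends: the paper does not give an explicit proof but simply notes that the corollary follows immediately from Lemma~\ref{lem:td-path} together with the monotonicity of treedepth under taking subgraphs, which is precisely what you spell out. Your elaboration of the diameter claim via shortest paths is the natural way to make the ``consequently'' precise.
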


Given a graph $G$, a {\em{DFS forest}} of $G$ is any separation forest~$F$ of $G$ that has the following property: whenever $u$ is a parent of $v$ in $F$, then there is an edge $uv$ in $G$.
The name is derived from the easy fact that any forest constructed by subsequent recursive calls of a depth-first search in $G$ is a separation forest of $G$ with this property.
If such a DFS forest $F$ had depth at least $2^h$, then this would witness that $G$ contains a path on $2^h$ vertices as a subgraph, 
and consequently, by Lemma~\ref{lem:td-path}, the treedepth of $G$ would be larger than $h$.
Therefore, we have the following.

\begin{lemma}\label{lem:DFS-sepfor}
Any DFS forest of a graph $G$ of treedepth at most $h$ is a separation forest of $G$ of depth smaller than $2^h$.
\end{lemma}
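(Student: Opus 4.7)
The plan is a direct contrapositive argument. Suppose $F$ is a DFS forest of $G$ and assume for contradiction that $F$ has depth at least $2^h$. Then there exists a root-to-leaf path in $F$, say $v_1, v_2, \ldots, v_{2^h}$, where each $v_i$ is the parent of $v_{i+1}$ in $F$. These vertices are pairwise distinct, as they lie on a single root-to-leaf branch of a forest.

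Next I would invoke the defining property of a DFS forest: whenever $u$ is the parent of $v$ in $F$, the pair $uv$ is an edge of $G$. Applying this to each consecutive pair $v_i v_{i+1}$, we see that the sequence $v_1, v_2, \ldots, v_{2^h}$ forms a path on $2^h$ vertices in $G$ as a subgraph.

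The final step is to combine this with the treedepth bound. By Lemma~\ref{lem:td-path}, a path on $2^h$ vertices has treedepth $h+1$. Since treedepth is monotone under taking subgraphs, this would force the treedepth of $G$ to be at least $h+1$, contradicting the assumption that $G$ has treedepth at most $h$. Hence the depth of $F$ must be strictly smaller than $2^h$. Alternatively, one can phrase the last two steps in one shot by citing Corollary~\ref{cor:small-diam}, which already packages the ``no long path subgraph'' conclusion.

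There is no real obstacle here; the argument is essentially a one-line observation that a deep root-to-leaf branch in a DFS forest materializes as a path subgraph, combined with the known lower bound on the treedepth of a long path. The only thing to be mildly careful about is that ``depth of $F$ at least $2^h$'' is interpreted as having a root-to-leaf chain containing $2^h$ vertices (so that the induced path subgraph of $G$ also has $2^h$ vertices), which matches the counting convention used in Lemma~\ref{lem:td-path}.
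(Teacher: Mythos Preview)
Your argument is correct and matches the paper's reasoning exactly: the paper observes just before the lemma that a DFS forest of depth at least $2^h$ would exhibit a path on $2^h$ vertices as a subgraph of $G$, which by Lemma~\ref{lem:td-path} forces treedepth larger than $h$. The only (trivial) point you leave implicit is that a DFS forest is a separation forest by definition, so only the depth bound needs arguing.
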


Lemma~\ref{lem:DFS-sepfor} gives a very simple linear-time algorithm to compute a bounded-depth separation forest of a graph of bounded treedepth --- just run depth-first search on the graph and output 
the obtained DFS forest. However, depth-first search is an inherently sequential algorithm.
As shown by Bannach et al.~\cite[Lemma 6]{BannachT16}, on graphs of bounded treedepth a DFS forest can be computed in $\paraAC{0\uparrow}$  parameterized by treedepth.

We now reprove this result for completeness.
Henceforth, every separation forest $F$ on $n$ vertices, in particular every DFS forest, 
will be represented in our circuits by its parent relation, encoded as a boolean $n\times n$ matrix.

\begin{lemma}[see also Lemma 6 of \cite{BannachT16}]\label{lem:DFS-compute}
The following transformation parameterized by $h$ is in $\paraAC{0\uparrow}$: Given a graph of treedepth $h$, compute any its DFS forest.
\end{lemma}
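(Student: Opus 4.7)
The plan is to compute, for each vertex $u$ of $G$ and each $i \in \{0,1,\ldots,2^h-1\}$, a vertex $a_i(u)$ that equals the $i$-th ancestor of $u$ in a canonical DFS forest of $G$ when $u$ has depth at least $i$, and equals $u$ itself (as a sentinel) otherwise. By \cref{lem:DFS-sepfor}, every DFS forest of $G$ has depth below $2^h$, so $2^h$ values per vertex suffice. From the sequence $(a_i(u))_i$ one reads off the parent of $u$ as $a_{d(u)-1}(u)$, where $d(u)=\min\{i : a_i(u)=u\}$, with $u$ being a root iff $d(u)=0$.

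I would define the $a_i$ recursively as follows. Set $a_0(u)$ to be the smallest-indexed vertex in the connected component of $u$ in $G$, which will serve as the root of $u$'s DFS tree. For $i\geq 1$, if $a_{i-1}(u)=u$ then $u$ has already been placed and we let $a_i(u):=u$; otherwise, writing $A_i(u):=\{a_0(u),\ldots,a_{i-1}(u)\}$, we set $a_i(u)$ to be the smallest-indexed neighbor of $a_{i-1}(u)$ that lies in the same connected component as $u$ in the subgraph $G-A_i(u)$. Correctness rests on the structural fact that, in any DFS tree $T$, for a vertex $v$ with ancestor set $A$, the subtree $T_v$ equals the connected component of $v$ in $G-A$, and the children of $v$ in $T$ correspond bijectively to the connected components of $G-A-\{v\}$ contained in $V(T_v)-\{v\}$, each of which must contain a neighbor of $v$. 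Picking the smallest-indexed such neighbor in every component yields a well-defined DFS forest, and this is exactly what the recursion produces; the fact that every non-tree edge joins an ancestor to a descendant follows because otherwise two adjacent vertices would lie in distinct components of some $G-A_i$, contradicting adjacency.

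The key ingredient for the circuit implementation is that connectivity inside each graph $G-A_i(u)$ reduces to a bounded-distance query: removing vertices preserves the treedepth bound, so $G-A_i(u)$ also has treedepth at most $h$, and hence by \cref{cor:small-diam} its connected components have diameter less than $2^h$. After constructing a masked adjacency matrix reflecting the removal of $A_i(u)$, \cref{lem:distances} produces an $\Oh(h)$-depth, polynomial-size subcircuit that decides whether any two vertices lie in the same component of $G-A_i(u)$. Stacking the $2^h-1$ iterations sequentially, and running the $n$ per-vertex subproblems of each iteration in parallel, yields a circuit of total depth $\Oh(h\cdot 2^h)=f(h)$ and size $f(h)\cdot n^{\Oh(1)}$, which meets the $\paraAC{0\uparrow}$ requirements. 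The only technical subtlety is that the subgraph $G-A_i(u)$ differs across vertices $u$ within a single iteration; this $n$-fold multiplicity is absorbed by the polynomial size bound, since each copy of the bounded-distance subcircuit is itself of polynomial size.
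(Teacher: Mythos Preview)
Your argument is correct and in spirit very close to the paper's, but the circuit is organized differently. Both proofs build the DFS forest level by level and both hinge on the same observation (\cref{cor:small-diam} plus \cref{lem:distances}) that connectivity in any subgraph of a treedepth-$h$ graph can be decided by an $\Oh(h)$-depth, polynomial-size circuit. The paper, however, runs a \emph{global} iteration: it maintains a single set $Z$ of all vertices already placed and, in each round, performs one connectivity computation in $G-Z$ to extract the next level simultaneously for the whole graph. Your construction instead runs $n$ parallel per-vertex iterations, each with its own removed set $A_i(u)$, and hence performs $n$ separate connectivity computations per round. Both yield the same canonical DFS forest (indeed, one can check inductively that the unique level-$i$ vertex adjacent to the component of $u$ in $G-Z$ is exactly your $a_{i-1}(u)$, so the two selection rules coincide), and both fit within $\paraAC{0\uparrow}$; the paper's version simply has a smaller polynomial in the size bound. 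Your correctness sketch is sound: the termination within $2^h$ rounds is not circular, since the unbounded process provably produces a DFS forest, to which \cref{lem:DFS-sepfor} then applies.
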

\begin{proof}
We first observe the following direct consequence of Corollary~\ref{cor:small-diam}.

\begin{claim}\label{cl:connectivity}
The following problem parameterized by $h$ is in $\paraAC{0\uparrow}$: 
Given an $n$-vertex graph~$G$ of treedepth at most $h$ and two vertices $u,v\in V(G)$, determine whether $u$ and $v$ are in the same connected component of $G$.
\end{claim}
\begin{proof}\renewcommand\proofSymbol{\ensuremath{\lrcorner}}
By Corollary~\ref{cor:small-diam}, to check whether $u$ and $v$ are in the same connected component of $G$ it suffices to check whether they are at distance at most $2^h$ in $G$.
By Lemma~\ref{lem:distances}, this can be done by an \AC-circuit of size $n^{\Oh(1)}$ and depth $\Oh(h)$. 
\end{proof}

With this statement in mind, we proceed with the proof. Let $G$ be the input graph; we assume that there is an implicit order $\preceq$ on the vertices of $G$, say imposed by the order of inputs.

First, let $R$ be the set consisting of the $\preceq$-smallest vertex from each connected component of $G$.
Observe that $R$ can be computed as follows: 
a vertex $u$ belongs to $R$ if for every other vertex $v$, it is not true that $v\prec u$ and $u,v$ are in the same
connected component of $G$; the latter check can be done in $\paraAC{0\uparrow}$ by Claim~\ref{cl:connectivity}. 
The set $R$ constitutes the roots of DFS trees in the connected components of $G$.

We then proceed in $2^h-1$ rounds, where in the $i$th round we construct the $(i+1)$st level of the DFS forest.
We maintain vertex subsets $X\subseteq Z$, initially both set to $R$, with the following meaning: $Z$ is the set of vertices already placed in the constructed forest (i.e. in the $i$th round $Z$
comprises vertices from levels $1,\ldots,i$ of the forest), and $X$ is the set of vertices placed in the constructed forest in the previous round (i.e. in the $i$th round $X$ comprises vertices from level $i$).
In this round we will compute the set of vertices contained in the next, $(i+1)$st, level of the DFS forest, and for each of them we will compute its parent in the DFS forest.
Transforming this encoding to the assumed encoding of separation forests via the ancestor-descendant relation will be done at the end.

To compute the next level, we perform the following steps.
First, compute the set of all vertices that are not in $Z$, but have a neighbor in $X$; call it $M$.
Also, for every vertex in $M$ record its $\preceq$-smallest neighbor in $X$. 
Second, let $Y$ be the set of $\preceq$-minimal vertices of $M$ in their respective connected components of $G-Z$; 
that is, a vertex $v\in M$ belongs to $Y$ if and only if there is no vertex $u\in M$ such that $u\prec v$
and $u,v$ are in the same connected component of $G-Z$.
Finally, we set $Y$ to be the next level of the DFS forest: put $Z\coloneqq Z\cup Y$ and $X\coloneqq Y$, and for each $u\in Y$ set its recorded neighbor in $X$ to be its parent in the DFS forest.

It is straightforward to see that the above procedure correctly computes a DFS forest of $G$.
By Lemma~\ref{lem:DFS-sepfor}, any DFS forest of $G$ has depth smaller than $2^h$, hence performing $2^h-1$ rounds suffices to exhaust the whole vertex set.
As for the implementation, it is easy to perform each round in $\paraAC{0\uparrow}$ using Claim~\ref{cl:connectivity}, and since the number of rounds is bounded by a computable function of $h$,
it follows that the overall circuit family satisfies size and depth bounds for $\paraAC{0\uparrow}$.
\end{proof}

Finally, we need to change the encoding of the separation forest, from the child-parent relation to ancestor-descendant relation.
This requires computing the transitive closure of the child-parent relation.
Since the computed rooted forest has depth less than $2^h$, it suffices to compute the $2^h$ boolean power of the matrix of the child-parent relation (treated as a directed graph), which boils down to
squaring this matrix $h$ times. Hence, this can be done by a circuit of polynomial size and depth $\Oh(h)$.

\section{Model checking}\label{sec:mc}

In this section we prove our main result, Theorem~\ref{thm:main}, 
and along the lines reprove the result of Dvo\v{r}\'ak et al.~\cite{dvovrak2013testing} that model-checking $\FO$ on classes of effectively
bounded expansion is in~$\linFPT$.

The general idea of our proof 
is as follows. We prove the existence of a certain efficient quantifier-elimination procedure for classes of bounded expansion. This is first done in the case of forests of bounded depth. This lifts immediately to classes of bounded treedepth. Finally, this lifts to classes of bounded expansion, via low treedepth colorings.

\subsection{Preliminaries on relational structures and logic}

We assume familiarity with basic notation for relational structures.
A {\em{vocabulary}} is a finite relational signature $\Sigma$ consisting of relation names. Each relation name $R\in \Sigma$ has
a prescribed arity $\ar(R)$, which is a positive integer --- that is, we do not allow $0$-ary relations.
The {\em{arity}} of a vocabulary~$\Sigma$ is the maximum arity of a relation in $\Sigma$.
Throughout this section all vocabularies will be of arity at most $2$; that is, they consist only of unary and binary relations.
Unary relations will be also called {\em{labels}} for brevity.

For a vocabulary $\Sigma$, a {\em{$\Sigma$-structure}} $\As$ consists of a nonempty universe $V(\As)$ and, for each relation name $R\in \Sigma$, its {\em{interpretation}} $R^\As\subseteq V(\As)^{\ar(R)}$.
If the structure $\As$ is clear from the context we may drop the superscript, thus identifying a relation name with its interpretation in the structure.
The {\em{size}} of a structure $\As$, denoted $|\As|$, is the cardinality of its universe. 
The {\em{Gaifman graph}} of $\As$, denoted $G(\As)$, has $V(\As)$ as the vertex set, 
and we make two distinct elements $u,v$ adjacent in $G(\As)$ iff $u$ and $v$ appear together in some relation in $\As$.

In our circuit constructions any $\Sigma$-structure $\As$ on~$n$ elements will be always encoded using a boolean table of length~$n$ for each unary relation in $\Sigma$
and a boolean $n\times n$ matrix for each binary relation in $\Sigma$. Similarly as for graphs, in our circuit families we create one circuit $C_{n,k}$ responsible for
tackling instances with $n$ elements and parameter value $k$. The input to $C_{n,k}$ thus consists of $n|\Sigma_1|+n^2|\Sigma_2|$ gates, where $\Sigma_1,\Sigma_2$ respectively denote the sets
of unary and binary relations of the vocabulary $\Sigma$.

For brevity we often write $\bar x$ as a shorthand for a tuple $(x_1,\ldots,x_k)$ of variables, denoting the length of the tuple by $|\bar x|\coloneqq k$.
Notation $\exists_{\bar x}$ is a shorthand for $\exists_{x_1}\exists_{x_2}\ldots\exists_{x_k}$.
For a formula $\varphi(x_1,\ldots,x_k)\in \FO[\Sigma]$, by $\varphi(\As)$ we denote the set of all tuples $(u_1,\ldots,u_k)\in V(\As)^k$
for which $\As\models \varphi(u_1,\ldots,u_k)$.
Formulas $\varphi(\bar x)$ and $\varphi'(\bar x)$ are {\em{equivalent over $\As$}} if $\varphi(\As)=\varphi'(\As)$, and simply {\em{equivalent}} if they are equivalent over every $\Sigma$-structure $\As$.

A formula $\varphi(\bar x)$ is {\em{quantifier-free}} if it has no quantifiers, i.e., it is a boolean combination of relation checks on the free variables from $\bar x$.
Further, $\varphi(\bar x)$ is {\em{existential}} if it is a positive boolean combination (i.e.\ without negations) of formulas in {\em{prenex existential form}}
$\exists_{\bar y}\ \psi(\bar x,\bar y)$,
where $\psi$ is quantifier-free. Existential formulas are closed under positive boolean combinations.
 directly from the definition, but it can be easily seen that every existential formula
can be reduced to prenex existential form.

\begin{proposition}\label{prop:prenex}
For every existential formula $\varphi(\bar x)\in \FO[\Sigma]$ there exists an equivalent formula $\varphi'(\bar x)\in \FO[\Sigma]$ in the prenex existential form, computable from $\varphi$.
\end{proposition}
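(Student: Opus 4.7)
The plan is to proceed by structural induction on how $\varphi(\bar x)$ is constructed as a positive boolean combination of formulas in prenex existential form. In the base case, $\varphi$ is itself already in prenex existential form, and there is nothing to do. In the inductive step, $\varphi$ has the shape $\psi_1 \wedge \psi_2$ or $\psi_1 \vee \psi_2$, where by the induction hypothesis both $\psi_1(\bar x)$ and $\psi_2(\bar x)$ have been converted into equivalent prenex existential formulas, say $\psi_i(\bar x) \equiv \exists_{\bar y_i}\,\alpha_i(\bar x, \bar y_i)$ with $\alpha_i$ quantifier-free.

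The key manipulation is the standard pulling of quantifiers outside of conjunctions and disjunctions. First I would rename the bound variables $\bar y_1, \bar y_2$ so that the two blocks are pairwise disjoint and also disjoint from the free variables $\bar x$; this rewriting clearly produces equivalent formulas and is computable. Once the variables are fresh, the first-order equivalences
\begin{align*}
\exists_{\bar y_1}\,\alpha_1 \wedge \exists_{\bar y_2}\,\alpha_2 &\equiv \exists_{\bar y_1}\exists_{\bar y_2}\,(\alpha_1 \wedge \alpha_2), \\
\exists_{\bar y_1}\,\alpha_1 \vee \exists_{\bar y_2}\,\alpha_2 &\equiv \exists_{\bar y_1}\exists_{\bar y_2}\,(\alpha_1 \vee \alpha_2)
\end{align*}
hold in any $\Sigma$-structure. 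Since $\alpha_1 \wedge \alpha_2$ and $\alpha_1 \vee \alpha_2$ are again quantifier-free, the right-hand sides are in prenex existential form, which completes the inductive step.

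The whole transformation is straightforwardly computable: recursively convert the two subformulas, rename their bound variable blocks (e.g.\ by reindexing them with fresh indices larger than any occurring so far), and merge the quantifier prefixes as above. No step requires anything beyond syntactic variable renaming and rearranging, so there is no real obstacle; the only point that needs a line of care is the variable renaming, whose correctness follows from the fact that the semantics of a first-order formula is invariant under bijective renaming of bound variables.
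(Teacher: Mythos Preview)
Your proof is correct and is exactly the standard argument one would expect. The paper does not actually supply a proof of this proposition; it merely remarks that ``it can be easily seen that every existential formula can be reduced to prenex existential form'' and states the proposition without further justification. Your structural induction with variable renaming and the usual quantifier-pulling equivalences is the natural way to substantiate this. The one point worth noting explicitly is that the equivalence for disjunction, $\exists_{\bar y_1}\,\alpha_1 \vee \exists_{\bar y_2}\,\alpha_2 \equiv \exists_{\bar y_1}\exists_{\bar y_2}\,(\alpha_1 \vee \alpha_2)$, relies on the universe being nonempty (so that dummy witnesses for the unused block exist); this is guaranteed by the paper's convention that all structures have nonempty universe.
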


\subsection{Quantifier elimination on forests of bounded depth}\label{sec:trees}

In this section we work out a basic primitive for our quantifier elimination procedure, namely the case of unordered, labeled forests of bounded depth.

\paragraph*{Rooted forests.}
Suppose $\Ts$ is a rooted, unordered forest, so far without any labels. 
We use standard notions like parent, child, ancestor, descendant, and we follow the convention that each node is regarded as its own ancestor and descendant.
The size $|\Ts|$ of a forest $\Ts$ is the number of nodes in it.
The {\em{depth}} of a node $x$ is the number of its ancestors, and the {\em{depth}} of a forest $\Ts$ is the largest depth of a node in $\Ts$.

Throughout this section we work with forests of depth at most $d$, for a fixed constant $d\in \N$.
A forest $\Ts$ of depth at most~$d$ will be modeled as a relational structure whose universe is the node set and there is one binary relation $\parent$,
where $\parent(x,y)$ holds if $x$ is the parent of $y$. 
That is, such a $\{\parent\}$-structure models a forest of depth at most~$d$ if the following conditions hold: 
each node has at most one parent and after following the parent relation from any node we always reach a node with no parent within at most $d-1$ steps.

\medskip
The following formulas will play a key role in our reasonings.

\begin{proposition}\label{prop:lcd}
There exist existential formulas 
\[\lcd_0(x,y),\lcd_1(x,y),\ldots,\lcd_d(x,y)\in \FO[\parent],\]
each of quantifier depth at most $2d$, such that for every forest~$\Ts$ of depth at most $d$, $i\in [0,d]$, and nodes $x,y$ we have
$\Ts\models\lcd_i(x,y)$ if and only if $x$ and $y$ have exactly $i$ common ancestors in $\Ts$.
\end{proposition}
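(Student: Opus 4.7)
The plan is to build $\lcd_i(x,y)$ from existential primitives that trace iterated parent pointers. I would first define, for each $k\in\{0,\ldots,d\}$, an existential formula $\mathrm{anc}_k(z,w)$ expressing that $z$ is reached from $w$ by following the parent pointer exactly $k$ times: set $\mathrm{anc}_0(z,w)\coloneqq(z=w)$ and $\mathrm{anc}_k(z,w)\coloneqq\exists u\colon \mathrm{anc}_{k-1}(z,u)\wedge\parent(u,w)$. Because each node has at most one parent, such a $z$ (if any) is uniquely determined by $w$, so each $\mathrm{anc}_k$ behaves as a partial function; its quantifier depth is exactly $k$, and it is in prenex existential form.

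The key combinatorial observation is that in a forest the common ancestors of $x$ and $y$ always form a chain, namely the ancestors of their lowest common ancestor (LCA); consequently, $x$ and $y$ share exactly $i$ common ancestors iff the LCA is at depth exactly $i$. I would therefore define $\lcd_i(x,y)$ as a disjunction over pairs $(a,b)\in\{0,\ldots,d-i\}^2$, where each disjunct existentially posits a node $z$ that is the $a$-fold iterated parent of $x$ and simultaneously the $b$-fold iterated parent of $y$ (identifying $z$ as the candidate LCA), together with a further chain of $i-1$ strict ancestors above $z$. A separate simpler clause for $\lcd_0$ (when $x$ and $y$ lie in distinct trees) asks that their topmost iterated parents be distinct. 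Counting quantifiers, each disjunct uses at most $a+b+(i-1)\le 2(d-i)+(i-1)<2d$ existentially bound variables, which gives the claimed quantifier-depth bound.

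The central obstacle is arranging that each disjunct witnesses exactly $i$ common ancestors rather than merely at least $i$, since existential formulas cannot directly assert the absence of further common ancestors; this is where the depth bound $d$ must be exploited carefully. My intended resolution is to enumerate all allowable chain-length profiles $(a,b)$ and use the arithmetic constraint that $a+i$ and $b+i$ specify the depths of $x$ and $y$ unambiguously inside the disjunct, so that the geometry of the witnessed chains --- their precise lengths from $x$ and $y$ up to the prospective LCA, combined with the $i-1$ further steps above --- already forces the chain of common ancestors to be of the intended length, without any explicit non-existence assertion. Verifying that this really yields the biconditional of the proposition on every forest of depth at most $d$ (and not something weaker like an implication in one direction) is the step I expect to require the most care.
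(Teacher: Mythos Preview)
Your approach is essentially the paper's own: enumerate depth profiles and existentially posit the ancestor chains. You also correctly identify the crux---an existential formula cannot directly rule out further common ancestors---but your proposed resolution does not succeed, and in fact cannot.

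Here is why the ``geometry of the witnessed chains'' does not force exactness. In your disjunct for $(a,b)=(1,1)$ inside $\lcd_1(x,y)$ (with $d\ge 3$), you posit a common parent $z$ of $x$ and $y$ and $i-1=0$ further ancestors above $z$; nothing constrains the depth of $z$. If $x\ne y$ are siblings whose common parent $z$ is itself a non-root, then $x,y$ have at least two common ancestors, yet this disjunct fires. The same defect is present in the paper's sketch: saying ``the first $i$ of the two chains agree and the $(i{+}1)$st differ'' pins down the least common ancestor as the $i$th node of each chain, but does not force the chain to start at a root, so the number of common ancestors may exceed $i$.

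The obstruction is intrinsic, not a matter of missing cleverness. Existential formulas (positive Boolean combinations of prenex-existential formulas) are preserved upwards from substructures to superstructures. Take $\As$ consisting of a single node $a$ (a depth-$1$ forest) and $\Bs$ obtained by adding a new parent of $a$ (a depth-$2$ forest, still of depth $\le d$ once $d\ge 2$). Then $\As$ is a substructure of $\Bs$; in $\As$ the pair $(a,a)$ has exactly one common ancestor, in $\Bs$ it has two. Any existential formula holding precisely when there is one common ancestor would hold in $\As$ and hence, by preservation, in $\Bs$---a contradiction. So for $d\ge2$ the property ``exactly $i$ common ancestors'' is not definable by an existential $\FO[\parent]$-formula over depth-$\le d$ forests. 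What both constructions \emph{do} produce, once the root check $\neg\exists u\,\parent(u,z_1)$ is added, are correct formulas of quantifier depth at most $2d$; they are just not purely existential in the paper's sense.
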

\begin{proof}
Make a disjunction over possible depths $i_x$ of $x$ and $i_y$ of $y$. 
For fixed $(i_x,i_y)$, quantify existentially the $i_x$ ancestors of $x$ and $i_y$ ancestors of $y$, and check that exactly the first $i$ of them are equal to each other.
\end{proof}

Observe that the condition in Proposition~\ref{prop:lcd} can be equivalently stated as follows: $\lcd_0(x,y)$ holds iff $x$ and $y$ have no common ancestor (i.e.\ they reside in different trees of the forest), 
and for $i\geq 1$ $\lcd_i(x,y)$ holds iff
the least common ancestor of $x$ and $y$ is at depth $i$ in $\Ts$.
Note that for a node $x$, the formula $\lcd_i(x,x)$ holds if and only if $x$ is at depth $i$.
Moreover, the condition that $x$ is an ancestor of $y$ can be expressed as follows: $\lcd_i(x,y)$ holds iff $\lcd_i(x,x)$ holds, for all $i\in [0,d]$.
Thus the formulas $\lcd_i$ can be used to check the depths of nodes and the ancestor relation (using boolean combinations).

For a forest $\Ts$ and a finite label set $\Lambda$, a {\em{$\Lambda$-labeling}} of $\Ts$ is any structure obtained from $\Ts$ by
adding a unary relation $c$ for each label $c\in \Lambda$. 
A $\Lambda$-labeling of a forest will be also called a {\em{$\Lambda$-labeled forest}}.
For another label set $\wLambda$, a {\em{$\wLambda$-relabeling}} of a $\Lambda$-labeled forest~$\Ts$ is any $\wLambda$-labeled forest~$\Ss$ such that
the underlying unlabeled forests of $\Ss$ and~$\Ts$ are equal; that is, $\Ss$ is obtained from $\Ts$ by clearing all labels from~$\Lambda$ and adding new labels from $\wLambda$ in an arbitrary way.
We may drop the label set used in a labeled forest if we do not wish to specify it.

\paragraph*{Lcd types.}
Fix a label set $\Lambda$; we consider $\Lambda$-labeled forests of depth at most $d$.
A formula $\psi(\bar x)\in \FO[\{\parent\}\cup\Lambda]$ is {\em{lcd-reduced}} 
if it uses neither quantifiers nor the $\parent$ relation, but it may use formulas $\lcd_i$ for $i\in [0,d]$ as atoms.
That is, an lcd-reduced formula is a boolean combination of  label tests  and formulas $\lcd_i$.
Note that lcd-reduced formulas are closed under boolean combinations.

Suppose $\bar x=(x_1,\ldots,x_k)$ is a tuple of variables.
For each choice of functions $\gamma\colon \bar x\to \mathcal{P}(\Lambda)$ and $\delta\colon \bar x\times \bar x\to [0,d]$ we define the
{\em{lcd-type formula}} $$\basic_{\gamma,\delta}(\bar x)\in \FO[\{\parent\}\cup \Lambda]$$ as the lcd-reduced formula stating the following:
\begin{itemize}
\item for each $i\in [k]$, $c(x_i)$ holds for all $c\in \gamma(x_i)$ and $c(x_i)$ does not hold for all $c\in \Lambda-\gamma(x_i)$; and
\item for all $i,j\in [k]$, the number of common ancestors of $x_i$ and $x_j$ is $\delta(x_i,x_j)$.
\end{itemize}
Observe that the second condition implies that the depth of~$x_i$ is equal to $\delta(x_i,x_i)$.

Note that lcd-type formulas with $k$ free variables are mutually exclusive and cover all $k$-tuples: for each tuple of nodes $(u_1,\ldots,u_k)$ there is exactly one choice of $\gamma$ and $\delta$
for which $\basic_{\gamma,\delta}(u_1,\ldots,u_k)$ holds. Let $\Basic(k,d,\Lambda)$ be the set of lcd-type formulas with $k$ free variables for depth $d$ and label set~$\Lambda$; 
note that $\Basic(k,d,\Lambda)$ is finite and computable from $k,d,\Lambda$.

Observe that for an lcd-reduced formula $\varphi(\bar x)$ and a tuple~$\bar u$ of nodes with $|\bar u|=|\bar x|$, 
knowing which lcd-type formula from $\Basic(k,d,\Lambda)$ is satisfied on $\bar u$ is sufficient to determine whether $\varphi(\bar u)$ holds. This immediately yields the following.
\begin{proposition}\label{prop:bnf}
For every lcd-reduced formula $\varphi(\bar x)\in \FO[\{\parent\}\cup \Lambda]$ there exists a formula $\varphi'(\bar x)\in \FO[\{\parent\}\cup \Lambda]$ of the form
$$\varphi'(\bar x)=\bigvee_{\beta\in I} \beta(\bar x)$$
for some $I\subseteq \Basic(k,d,\Lambda)$
such that $\varphi$ and $\varphi'$ are equivalent over every $\Lambda$-labeled forest of depth at most $d$. Furthermore, $\varphi'$ can be computed from $d$, $\Lambda$, and $\varphi$.
\end{proposition}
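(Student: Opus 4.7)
The plan is to exploit the observation made immediately before the statement: every atom appearing in an lcd-reduced formula (either a label test $c(x_i)$ or an $\lcd_i(x_j,x_{j'})$ check) has its truth value on a tuple $\bar u$ fully determined by which lcd-type formula $\basic_{\gamma,\delta}$ from $\Basic(k,d,\Lambda)$ holds on $\bar u$. Since the formulas in $\Basic(k,d,\Lambda)$ are pairwise mutually exclusive and jointly exhaustive over tuples from any $\Lambda$-labeled forest of depth at most $d$, each such lcd-type thus either entirely implies $\varphi$ or is entirely incompatible with $\varphi$ over such forests.

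Concretely, I would proceed as follows. For each $\beta = \basic_{\gamma,\delta} \in \Basic(k,d,\Lambda)$, define the following truth assignment $\alpha_\beta$ to the atoms occurring in $\varphi$: set $c(x_i)$ to true iff $c \in \gamma(x_i)$, and set $\lcd_j(x_i,x_{i'})$ to true iff $\delta(x_i,x_{i'}) = j$. Since $\varphi$ is a boolean combination of such atoms, substituting $\alpha_\beta$ reduces $\varphi$ to a concrete boolean value $b_\beta \in \{0,1\}$, which can be computed in finite time. Define
\[
I = \{\beta \in \Basic(k,d,\Lambda) \suchthat b_\beta = 1\}
\quad\text{and}\quad
\varphi'(\bar x) = \bigvee_{\beta \in I} \beta(\bar x).
\]

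The correctness check is then routine: given any $\Lambda$-labeled forest $\Ts$ of depth at most $d$ and any tuple $\bar u$, there is a unique $\beta \in \Basic(k,d,\Lambda)$ with $\Ts \models \beta(\bar u)$, namely the lcd-type recording the actual labels of the $u_i$ and the actual numbers of common ancestors of pairs $(u_i, u_{i'})$. For this $\beta$, the truth assignment $\alpha_\beta$ matches the actual truth values of the atoms of $\varphi$ on $\bar u$ in $\Ts$, so $\Ts \models \varphi(\bar u)$ iff $b_\beta = 1$ iff $\beta \in I$ iff $\Ts \models \varphi'(\bar u)$.

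For the computability clause, I just note that $\Basic(k,d,\Lambda)$ is a finite set that is explicitly computable from $k$, $d$, and $\Lambda$ (enumerate all $\gamma\colon \bar x \to \Pow(\Lambda)$ and $\delta\colon \bar x \times \bar x \to [0,d]$), and for each $\beta$ the test $b_\beta = 1$ is a finite boolean evaluation. There is no real obstacle here; the only mildly delicate point is observing that the reasoning uses the depth bound $d$ exactly to ensure that $\Basic(k,d,\Lambda)$ is finite and that the lcd-types really do cover all tuples (so that the disjunction $\varphi'$ is forced to agree with $\varphi$ on every tuple, not only on those tuples whose ancestry is ``small'').
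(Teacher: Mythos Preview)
Your proposal is correct and follows exactly the approach the paper sketches in the sentence preceding the proposition: the lcd-types partition all $k$-tuples and determine the truth value of every atom in an lcd-reduced formula, so $\varphi'$ is just the disjunction of those types on which $\varphi$ evaluates to true. You have simply made this explicit, including the computability clause, and there is nothing to add.
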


An lcd-reduced formula $\varphi(\bar x)$ is in the {\em{basic normal form}} if it is a disjunction of basic formulas applied to $x_1,\ldots,x_k$, that is, it is of the form
$$\varphi(\bar x)=\bigvee_{\beta\in I} \beta(\bar x)$$
for some $I\subseteq \Basic(k,d,\Lambda)$.

\paragraph*{Quantifier elimination.}
We now show how to eliminate  a single existential quantifier for bounded depth forests. This will be used later for quantifier elimination on low-treedepth decompositions.

\begin{lemma}\label{lem:qe-trees}
Let $d\in \N$ and $\Lambda$ be a label set.
Then for every formula $\varphi(\bar x)\in \FO[\{\parent\}\cup \Lambda]$ with $|\bar x|\geq 1$ and of the form
$$\varphi(\bar x)=\exists_y\ \psi(\bar x,y)$$
where $\psi$ is lcd-reduced, and every $\Lambda$-labeled forest $\Ts$ of depth at most $d$, there exists a label set $\wLambda$,
an lcd-reduced formula $\wphi(\bar x)\in \FO[\{\parent\}\cup \wLambda]$, and a $\wLambda$-relabeling $\Ss$ of $\Ts$ such that $\varphi(\Ts)=\wphi(\Ss)$.

Moreover, the following effectiveness assertions hold. The label set $\wLambda$ is computable from~$d$ and~$\Lambda$, the formula $\wphi$ is computable from $\varphi,d,\Lambda$, and
the following transformation which computes~$\Ss$ given $\Ts$, parameterized by $\varphi,d,\Lambda$ is in $\linFPT$ and in $\paraAC{0\uparrow}$.
\end{lemma}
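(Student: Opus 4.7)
The plan is to reduce existential quantification to auxiliary labeling: I will introduce new labels on every node that precompute the information about ancestor chains and subtrees needed to test $\exists_y \psi$, and then write $\wphi$ as a boolean combination of label checks and lcd-comparisons on $\bar x$.

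First, using Proposition~\ref{prop:bnf}, I would rewrite $\psi(\bar x, y)$ as an equivalent disjunction of lcd-type formulas $\basic_{\gamma, \delta}(\bar x, y)$. Since $\exists_y$ distributes over $\vee$, it suffices to eliminate the quantifier from a single $\exists_y \basic_{\gamma, \delta}$ and then take a disjunction. Fix such a type and abbreviate $h_y := \delta(y,y)$, $\ell_i := \delta(x_i, y)$, $C := \gamma(y)$. For a tuple $\bar u$ of witnesses for $\bar x$, a witness $v$ for $y$ must have depth $h_y$, label set exactly $C$, and lca of depth $\ell_i$ with every $u_i$. Geometrically this constrains $v$ to live in one concrete region of $\Ts$: either in a tree disjoint from every $u_i$ (when all $\ell_i = 0$), or, letting $i_0$ minimize $\ell_i$ among the positive values, as a descendant of the $\ell_{i_0}$-th ancestor $a$ of $u_{i_0}$ that avoids the children of $a$ containing those $u_j$ with $\ell_j = \ell_{i_0}$.

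Next, I would define $\wLambda \supseteq \Lambda$ so that each node $w$ additionally carries: (i) for each $j,h \leq d$ and $c \subseteq \Lambda$, whether the $j$-th ancestor of $w$ (if it exists) has a descendant at depth $h$ with labels exactly $c$; (ii) for each such $(j,h,c)$, the count, capped at $|\bar x|+1$, of children of the $j$-th ancestor of $w$ whose subtree contains such a descendant; (iii) global data, replicated into every node, counting, capped at $|\bar x|+1$, the trees of $\Ts$ whose root has such a descendant. The label set $\wLambda$ is finite and depends computably on $d$, $\Lambda$, and $|\bar x|$. Producing $\Ss$ from $\Ts$ amounts to $\Oh(d)$ rounds of bottom-up aggregation along ancestor chains combined with thresholded counting via Theorem~\ref{thm:threshold}, which runs in linear time sequentially, yielding $\linFPT$, and in constant-depth $\paraAC{0\uparrow}$ circuits.

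Finally, I would assemble $\wphi(\bar x)$. The conditions on $\bar u$ alone (individual depths and labels, mutual lca-depths) are lcd-reduced and are copied verbatim. The existence of $v$ is then expressed in two lcd-reduced cases. In the anchor case, evaluating at $u_{i_0}$, I compare the capped count of qualifying children of $a$ with the number of those children that are excluded because they are also ancestors of some other $u_j$; the latter number is computable lcd-reducedly from the labels sitting at the other $u_j$ and from the pairwise lca-depths $\delta(x_j, x_{i_0})$. In the disjoint-tree case, I compare, at $u_1$ (available because $|\bar x| \geq 1$), the global tree count with the number of distinct roots among the $u_i$'s, the latter again being determined by the values $\delta(x_i, x_j)$. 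The main obstacle, and the reason for introducing capped counts rather than mere existence labels, is precisely the exclusion condition in the anchor case: a witness must be a descendant of $a$ that avoids up to $|\bar x|$ distinguished children of $a$, a condition that is not expressible lcd-reducedly with existence labels alone but becomes a simple arithmetic inequality between two lcd-reducedly computed counts once the count is capped at $|\bar x|+1$.
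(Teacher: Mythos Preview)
Your approach is essentially the paper's: reduce to lcd-type formulas via Proposition~\ref{prop:bnf}, anchor at the $x_i$ whose lca with $y$ is extremal, record at each node capped counts of candidate-bearing children of its ancestors, and compare this count against the number of excluded children read off from labels at the $u_j$'s together with the pairwise $\delta$-values. Your labeling is somewhat more uniform than the paper's (indexed over all triples $(j,h,c)$ rather than tailored to the single type at hand), but that is harmless.

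There is, however, one genuine slip: you choose $i_0$ to \emph{minimize} $\ell_i$ among the positive values, whereas it must \emph{maximize} it. The anchor $a$ has to be the \emph{deepest} common ancestor of $y$ with any $x_i$, because that is the tightest constraint: once $y$ is placed below $a$, every $x_j$ with $\ell_j<\ell_{i_0}$ automatically lies outside the subtree at $a$, and the requirement $\delta(x_j,y)=\ell_j$ reduces to $\delta(x_j,x_{i_0})=\ell_j$, which is already part of the type on $\bar x$. With the minimum your geometric description is wrong: if, say, $\ell_1=2$ and $\ell_2=5$, anchoring at depth $2$ and excluding only the child of $a$ containing $u_1$ ignores the much stronger requirement that $y$ share a depth-$5$ ancestor with $u_2$; a witness found by your count need not satisfy $\delta(x_2,y)=5$. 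The paper accordingly takes $x_1$ so that $\delta(x_1,y)$ is \emph{largest}. With that correction, your argument goes through and matches the paper's.
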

\begin{proof}
We first present the combinatorial construction of $\wLambda$, $\Ss$, and $\wphi$.
The effectiveness assertions will be discussed at the end.

We start by observing that without loss of generality we may assume that $\psi(\bar x,y)$ is an lcd-type formula.
Indeed, by Proposition~\ref{prop:bnf} we may assume that $\psi(\bar x,y)$ is of the form $\bigvee_{\beta\in I} \beta(\bar x,y)$
for some $I\subseteq \Basic(k+1,d,\Lambda)$. Since existential quantification commutes with disjunction, $\varphi$ is equivalent to the formula 
$\bigvee_{\beta\in I} (\exists_y\ \beta(\bar x,y))$.
Suppose now that for each $\beta\in I$ we find a suitable lcd-reduced formula $\widehat{\beta}(\bar x)$ and a $\wLambda_\beta$-recoloring $s_\beta$ of $\Ts$.
Having set $\wLambda\coloneqq \bigcup_{\beta\in I} \wLambda_\beta$, 
we can take~$\Ss$ to be the union relabeling, obtained by including all labels from all relabelings $(s_\beta)_{\beta\in I}$,
and define $\wphi(\bar x)\coloneqq \bigvee_{\beta\in I} \widehat{\beta}(\bar x)$.
Note that thus $\wphi$ is lcd-reduced.

Therefore, from now on we assume that 
$$\psi(\bar x,y)=\basic_{\gamma,\delta}(\bar x,y)$$ for some $\gamma\colon \bar z\to \mathcal{P}(\Lambda)$ and $\delta\colon \bar z\times \bar z\to [0,d]$, where $\bar z$ is $\bar x$ with $y$ appended.

Let us inspect values $\delta(x_i,y)$ for $i\in [k]$ and without loss of generality assume that $\delta(x_1,y)$ is the largest of them; here we use the assumption that $k\geq 1$.
Denote $h_y=\delta(y,y)$, $h_1=\delta(x_1,x_1)$, and $h=\delta(x_1,y)$.
Note that we may assume that $h_y$ and $h_1$ are larger or equal to $\max(h,1)$, for otherwise $\psi$ is not satisfiable and we may return an always false formula as $\wphi$.

From now on we assume for simplicity that $h>0$, that is, $x_1$ and $y$ have a common ancestor. The case $h=0$ is essentially the same and differs in notation details; we discuss it at the end.

Call a node $w$ in a $\Lambda$-labeled forest $\Ts$ a {\em{candidate}} if $w$ has depth $h_y$ and {\em{label pattern}} $\gamma(y)$:
$c(w)$ holds for all $c\in \gamma(y)$ and $c(w)$ does not hold for all $c\in \Lambda-\gamma(y)$.
For a node $v$ of $\Ts$ at depth $h$, we define an integer $\kappa(v)$ as follows:
$$\kappa(v)\coloneqq \textrm{the number of subtrees rooted at children of $v$ that contain a candidate.}$$
Note that in case $h=h_y$ the above value is meaningless --- it is always equal to $0$. 
Hence, in this case we redefine $\kappa(v)$ to be equal to $1$ if $v$ is a candidate and to $0$ if $v$ is not a candidate.

We now define the relabeling $\Ss$ of $\Ts$.
First, we include in $\Ss$ all the labels from $\Ts$; thus the final label set $\wLambda$ will be a superset of $\Lambda$.
Next, for every node $u$ record the following finite information using new labels at $u$:
\begin{enumerate}
\item[(a)] \label{e:witnesses} Provided $u$ is at depth $h_1$, record $\kappa(v)$ where $v$ is the unique ancestor of $u$ at depth $h$, or $\infty$ if this number is larger than $k$.
\item[(b)] \label{e:ancestors} Provided $u$ is at depth larger than $h$, record whether there exists a candidate $w$ such that the lowest common ancestor of $u$ and $w$ is at depth larger than $h$.
\end{enumerate}
The above information can be recorded using $k+2$ new labels: $k+1$ to record possible values in~Item~\ref{e:witnesses}, and $1$ to record the boolean value in Item~\ref{e:ancestors}.
Thus, $\wLambda$ is obtained by adding these $k+2$ new labels to $\Lambda$.

Having defined $\Ss$, we now write an lcd-reduced formula $\wphi(\bar x)\in \FO[\{\parent\}\cup \wLambda]$ that, when applied on some evaluation of $\bar x$ in $\Ss$,
verifies whether $\exists_y\ \psi(\bar x,y)$ holds in~$\Ts$.
Obviously we may start with verifying that:
\begin{itemize}
\item the label pattern of $x_i$ is equal to $\gamma(x_i)$, for each $i\in [k]$; and
\item for all $1\leq i\leq j\leq k$, the number of common ancestors of $x_i$ and $x_j$ is $\delta(x_i,x_j)$.
\end{itemize}
Both these checks can be easily done by lcd-reduced formulas. 
We are left with writing an lcd-reduced formula which verifies that, in addition to the above, a suitable node $y$ exists.
Actually, this formula will only use the new labels at nodes $x_1,\ldots,x_k$.

Observe that if nodes $u_1,\ldots,u_k,w$ in a forest $\Ts$ are such that $\Ts\models \psi(u_1,\ldots,u_k,w)$, then the subforest of $\Ts$ (with labels forgotten) 
formed the ancestors of $u_1,\ldots,u_k,w$ is uniquely determined by the function $\delta$, up to isomorphism.
Let this unique forest be $f$; note that $f$ can be either uniquely constructed from $\delta$ alone, or values contained in $\delta$ witness that no such $f$ exists 
and $\psi$ is not satisfiable, in which case we may return an always false formula as $\wphi$.
We may naturally label some nodes of $f$ with variables $x_1,\ldots,x_k,y$; note that thus each leaf of $f$ is labelled.
Whenever nodes $u_1,\ldots,u_k,w$ in a tree $\Ts$ are such that $\Ts\models \psi(u_1,\ldots,u_k,w)$, then we may define a natural isomorphism 
$\eta$ from $f$ to the ancestor closure of $\{u_1,\ldots,u_k,w\}$ in $\Ts$ by first mapping $y$ to $w$ and
each $x_i$ to $u_i$, and then extending the mapping to their ancestors.

We first resolve the corner case when $h=h_y$; equivalently $y$ is an ancestor of $x_1$ in $f$.
We claim that then it suffices to check whether the information encoded at $x_1$ in point (a) of the construction of
$f$ asserts that in the unique ancestor of $x_1$ at depth $h=h_y$ the value of $\kappa(\cdot)$ is positive.
Indeed, this unique ancestor needs to be the evaluation of $y$, and since $h=h_y$, we have that $\kappa(\cdot)$ at this node is $0$ 
if it does not have label pattern~$\gamma(y)$, and $1$ if it has label pattern~$\gamma(y)$ (recall that in case $h=h_y$ we redefined 
$\kappa(v)$ to be equal to $1$ if $v$ is a candidate and to 
$0$ if $v$ is not a candidate).
Hence, from now on we assume that $h_y>h$.

Let $z$ be the unique ancestor at depth $h$ of $y$ in $f$; equivalently $z$ is the least common ancestor of $y$ and $x_1$ in $f$.
Note that $z$ exists by the assumption $h>0$ and $z\neq y$ by the assumption $h_y>h$.
Let $q$ be the child of $z$ that is also an ancestor of $y$. Then the subtree of $f$ rooted at $q$ is a path with $y$ as the only leaf, and this subtree does not contain any node from $\{x_1,\ldots,x_k\}$.
Let $p_1,\ldots,p_\ell$ be the other children $z$ in $f$, i.e.\ those that are not ancestors of $y$. Note that $\ell\leq k$.

\begin{figure}[htbp!]
                \centering
		\def\svgwidth{0.5\textwidth}
                \begin{footnotesize}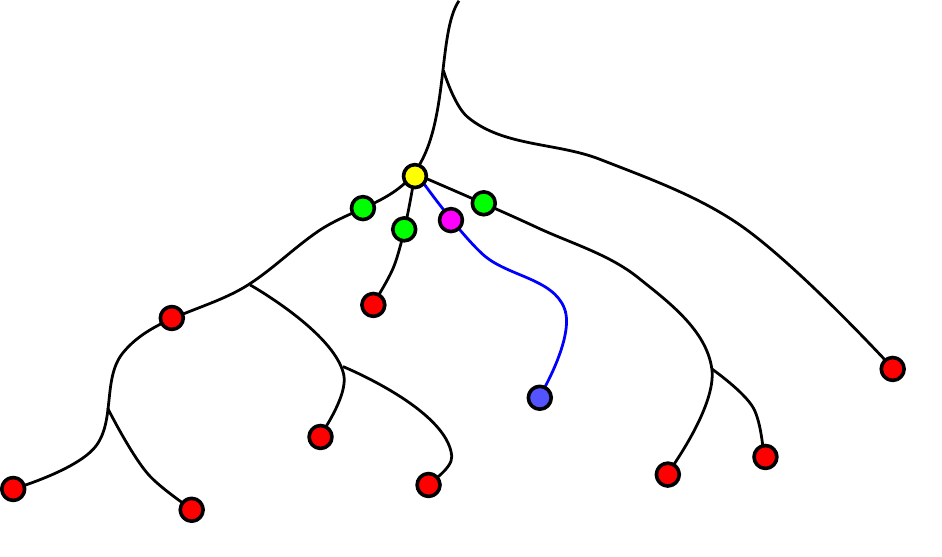\end{footnotesize}
        \caption{Example forest $f$ and nodes mentioned in the proof. In this example, to compute the value of $\ell'$ it suffices to count for how many of (the evaluations of) variables $x_1,x_4,x_5$ we have recorded
        information ``true'' in point Item~\ref{e:ancestors} of the construction of $\Ss$.}\label{fig:gadgets}
\end{figure}

Suppose we have a $\Lambda$-labeled tree $\Ts$ and nodes $u_1,\ldots,u_k,w$ of $\Ts$ such that $\Ts\models \psi(u_1,\ldots,u_k,w)$. Let $\eta$ be the natural isomorphism from $f$ to the ancestor-closure 
of $\{u_1,\ldots,u_k,w\}$ in $\Ts$.
Observe that knowing the labels at $u_1,\ldots,u_k$ in $\Ss$, we may uniquely deduce which of the subtrees of $\Ts$ rooted at the children $\eta(p_1),\ldots,\eta(p_\ell)$
of $\eta(z)$ contain a candidate. Indeed, each subtree of $\Ss$ rooted at some $p_j$, $j\in [\ell]$, contains some node $x_i$, $i\in [k]$, 
and then the corresponding subtree of~$\Ts$ rooted at $\eta(p_j)$ contains a candidate if and only if we 
recorded information ``true'' at the node $u_i$ in point (b) of the construction of $\Ss$.
Let $\ell'\leq \ell$ be the number of those subtrees.

Now comes the crux: having evaluated $x_1,\ldots,x_k$ to $u_1,\ldots,u_k$,
in order to verify whether a suitable evaluation of $y$ exists it suffices to check whether $\kappa(v)>\ell'$, where $v$ is the unique ancestor of $u_1$ at depth $h$; 
note that $v=\eta(z)$ in the notation of the previous paragraph.
Indeed, if such evaluation $w$ of $y$ exists, then the subtree rooted at a child of $v$ containing $w$ is not among the ones contributing to the value of $\ell'$, 
and hence it witnesses that $\kappa(v)>\ell'$. 
Conversely, if $\kappa(v)>\ell'$ then there exists a child $a$ of $v$ different from $\eta(p_1),\ldots,\eta(p_\ell)$, 
such that $a$ has a candidate $w$ as a descendant. Then evaluating $y$ to $w$ makes $\psi(u_1,\ldots,u_k,w)$ satisfied.

Note that since $\ell'\leq \ell\leq k$, the inequality $\kappa(v)>\ell'$ will always hold if $\kappa(v)>k$.
Therefore, the information recorded at $u_1$ in point (a) of the construction of $\Ss$ is sufficient to determine whether $\kappa(v)>\ell'$.
The computation presented above --- determination of $\ell'$ and verification whether $\kappa(v)>\ell'$ --- can be easily encoded using a formula that accesses only 
labels at nodes $u_1,\ldots,u_k$ in $\Ss$.
This concludes the construction of $\wphi$ in the case $h>0$. 

When $h=0$ the difference is that the node $z$ --- the lowest common ancestor of $y$ and $x_1$ in~$f$ --- does not exist.
However, then $y$ is in a different tree of the forest $f$ than all other nodes $x_1,\ldots,x_k$. We may define $q$ to be the root of the tree of $f$ containing $y$ and $p_1,\ldots,p_\ell$ to be the
roots of all other trees of $f$. Also, in point (a) of the construction of $\Ss$ we replace $\kappa(v)$ (as now $v$ does not exist) with the number of trees in $f$ that contain a candidate.
The rest of the reasoning is exactly the same.

\smallskip 

We are left with discussing the effectiveness assertions.
We discuss the case when $\psi$ is an lcd-type formula, as the lift to arbitrary lcd-reduced formulas is immediate, both for $\linFPT$ and $\paraAC{0}$.
It is clear that $\wLambda$ is computable from $d,\Lambda$ and $\wphi$ is computable from $\varphi,d,\Lambda$.

First, we argue that $\Ss$ can be constructed in linear FPT time.
This boils down to computing information added in points (a) and (b) for each node $u$ of $\Ts$.
For point (a), it suffices to run a suitable depth-first search in $\Ts$, where the return value from a subtree is the information whether it contains a candidate.
For point (b), we may first use depth-first search to mark all nodes of $\Ts$ that have a descendant that is a candidate.
Then we may apply a second depth-first search that computes the sought information; here it suffices to remember whether on the path from a root to the current node there was a marked node at depth larger than $h$.

Finally, to implement the transformation from $\Ts$ to $\Ss$ in $\paraAC{0\uparrow}$ we do the following.
First, compute the ancestor relation in $\Ts$ by computing the $d$th boolean power of the matrix of the parent relation in $\Ts$; this can be done by a polynomial-size circuit of depth $\Oh(\log d)$ by iterative squaring.
Next, for every pair of nodes $x$ and $y$ compute the number of common ancestors of $x$ and $y$; this can be done in $\paraAC{0}$ using Theorem~\ref{thm:threshold}.
Using the above,
the information recorded in point~\ref{e:witnesses} for each node $u$ can be computed in $\paraAC{0}$ using Theorem~\ref{thm:threshold},
while the information recorded in point Item~\ref{e:ancestors} can be computed in $\paraAC{0}$ directly from the definition.
\end{proof}

\subsection{Quantifier elimination for bounded expansion classes}

\paragraph*{Skeletons.} We first introduce {\em{skeletons}}, which are relational structures formed by putting a bounded number of forests of bounded depth on top of each other.
Essentially, they will be our abstraction for low treedepth decompositions.

Let $\Gamma$ be a vocabulary (of arity $2$) and let $d\in \N$. 
A $\Gamma$-structure $\As$ is a {\em{$\Gamma$-skeleton of depth $d$}} if 
for every binary relation $R\in \Gamma$, the structure  obtained from $\As$ by dropping all relations apart from $R$ and preserving the universe is a rooted forest of depth at most $d$, with $R$ serving the role
of the parent relation. Note that thus {\em{all}} binary relations in $\Gamma$ serve the roles of bounded-depth forests. 
The reader should think of a skeleton of depth $d$ as the union of several depth-$d$ forests
over the same universe, which is moreover labeled with some label set (i.e.\ with unary relations from $\Gamma$).

For every binary relation $R\in \Gamma$ and $i\in [0,d]$ we may construct a formula $\lcd_i^R(x,y)$ as in Proposition~\ref{prop:lcd}, but using~$R$ instead of $\parent$.
As before, a formula $\varphi(\bar x)\in \FO[\Gamma]$ is {\em{lcd-reduced}} if it does not use any quantifiers or binary relations, 
but may use formulas $\lcd_i^R$ as atoms; thus, it is a boolean combination of label checks and atoms $\lcd_i^R$.
We note that lcd-reduced formulas can be easily turned into existential formulas.

\begin{lemma}\label{lem:bnf-existential}
For every $d\in \N$ and lcd-reduced formula $\varphi(\bar x)\in \FO[\Gamma]$ there exists an existential formula $\xi(\bar x)\in \FO[\Gamma]$ 
such that $\varphi$ and $\xi$ are equivalent over every $\Gamma$-skeleton of
depth at most~$d$. Furthermore, $\xi$ can be computed from $d$, $\Gamma$, and $\varphi$.
\end{lemma}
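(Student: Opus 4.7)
The plan is to lift the single-forest analysis from Section~\ref{sec:trees} to the multi-forest setting of a $\Gamma$-skeleton, exploiting the fact that the class of existential formulas is closed under positive boolean combinations.

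First I would generalize the notion of lcd-type formula and the basic normal form of Proposition~\ref{prop:bnf} to $\Gamma$-skeletons of depth at most $d$. Concretely, for $\bar x = (x_1,\ldots,x_k)$ and functions $\gamma\colon \bar x \to \mathcal{P}(\Gamma_1)$ (where $\Gamma_1$ is the set of unary relations of $\Gamma$) and $\delta\colon \Gamma_2 \times \bar x \times \bar x \to [0,d]$ (where $\Gamma_2$ is the set of binary relations of $\Gamma$), define the generalized lcd-type formula $\basic_{\gamma,\delta}(\bar x)$ to be the lcd-reduced conjunction specifying the exact label pattern $\gamma(x_i)$ at every $x_i$ and, for every $R \in \Gamma_2$, the exact count $\delta(R,x_i,x_j)$ of common $R$-ancestors of $x_i$ and $x_j$. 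Since each $k$-tuple of elements in a $\Gamma$-skeleton of depth $\le d$ satisfies exactly one such formula, the standard argument yields that every lcd-reduced $\varphi(\bar x) \in \FO[\Gamma]$ is equivalent over such skeletons to a finite disjunction $\bigvee_{(\gamma,\delta) \in I} \basic_{\gamma,\delta}(\bar x)$, and this disjunction is computable from $\varphi$, $d$, and $\Gamma$.

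Next I would show that each generalized lcd-type formula $\basic_{\gamma,\delta}(\bar x)$ is existential. It is a conjunction of three kinds of pieces: positive label atoms $c(x_i)$ for $c \in \gamma(x_i)$; negated label atoms $\neg c(x_i)$ for $c \in \Gamma_1 \setminus \gamma(x_i)$; and, for every $R \in \Gamma_2$ and every pair $i,j$, the atom $\lcd^R_{\delta(R,x_i,x_j)}(x_i,x_j)$. The label atoms (positive or negated) are quantifier-free, and hence trivially existential. The $\lcd^R_m$ atoms are existential by the exact construction of Proposition~\ref{prop:lcd} applied to $R$ in place of $\parent$. Since existential formulas are closed under conjunction (the existentially quantified witnesses for each conjunct can be renamed and pulled out into a common prefix, as guaranteed by Proposition~\ref{prop:prenex}), $\basic_{\gamma,\delta}(\bar x)$ is existential; and since existential formulas are closed under disjunction as well, the whole disjunction yields an existential formula $\xi(\bar x)$ equivalent to $\varphi$ over all $\Gamma$-skeletons of depth at most $d$. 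Effectiveness is immediate: Proposition~\ref{prop:bnf}'s generalization produces the disjunction effectively, Proposition~\ref{prop:lcd} produces the existential form of each $\lcd^R_i$ effectively, and the remaining steps are routine syntactic rewriting.

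The only nontrivial point is the handling of negation, since lcd-reduced formulas freely use $\neg$ but existential formulas only allow negation inside the quantifier-free kernel. This is precisely what the basic normal form resolves: all outer negations are absorbed by the case distinction over the (finitely many) possible lcd-types, and the apparently ``negative'' content of $\lcd^R_i$ (``no more than $i$ common ancestors'') is already shown to be expressible existentially in Proposition~\ref{prop:lcd}, thanks to the depth bound $d$ that bounds ancestor chains. I do not foresee any additional obstacle.
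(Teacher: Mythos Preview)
Your proposal is correct. Both your argument and the paper's hinge on the same fact---that over depth-$d$ skeletons the atoms $\lcd_0^R,\ldots,\lcd_d^R$ are mutually exclusive and exhaustive, so negated $\lcd$ atoms never need to appear---but you reach this point by a different route. The paper takes the more direct path: it rewrites $\varphi$ in \emph{conjunctive} normal form over the literals (label checks, $\lcd_i^R$, and their negations), then explicitly replaces each $\neg\lcd_i^R(x,y)$ by the disjunction $\bigvee_{j\neq i}\lcd_j^R(x,y)$, obtaining a positive boolean combination of label literals and positive $\lcd$ atoms, and finally substitutes the existential formulas of Proposition~\ref{prop:lcd}. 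Your approach instead generalizes the basic normal form of Proposition~\ref{prop:bnf} to $\Gamma$-skeletons and passes to a \emph{disjunction} of lcd-type formulas, in which only positive $\lcd$ atoms occur by construction. Your route reuses more of the Section~\ref{sec:trees} machinery and makes the ``negation is absorbed by the type partition'' point explicit; the paper's route is shorter and avoids having to restate Proposition~\ref{prop:bnf} for skeletons. Either is fine.
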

\begin{proof}
Since $\varphi(\bar x)\in \FO[\Gamma]$ is a boolean combination of label checks and atoms $\lcd_i^R$, it may be rewritten as an equivalent formula $\zeta(\bar x)$ in conjunctive normal form:
$\zeta(\bar x)$ is a conjunction over clauses, where each clause is a disjunction of literals: label checks, atoms $\lcd_i^R$, and their negations. Observe that
each negation of an atom $\lcd_i^R(x,y)$ may be replaced by the formula 
$$\bigvee_{j\in [0,d],\, j\neq i}\ \lcd_j^R(x,y).$$
By applying such replacement exhaustively in $\zeta(\bar x)$ we obtain a formula $\zeta'(\bar x)$ that is equivalent to $\zeta(\bar x)$ over every $\Gamma$-skeleton of depth $d$.
Moreover, $\zeta(\bar x)$ is again in conjunctive normal form, however now atoms $\lcd_i^R$ appear only positively. 
By replacing these atoms with existential formulas provided by Proposition~\ref{prop:lcd} we turn $\zeta'(\bar x)$ into a positive boolean combination of existential formulas,
hence into an existential formula.
\end{proof}

For future reference we observe that lcd-reduced formulas can be efficiently evaluated. 
Let us remark that as far as sequential algorithms are concerned, we always assume that all rooted forests, including
forests in skeletons, are encoded by specifying for each element its parent; thus we assume that this parent can be found in constant time in the RAM model.

\begin{lemma}\label{lem:lcd-evaluate}
The following problem parameterized by \mbox{$d\in \N$} and an lcd-reduced formula $\alpha(\bar x)\in \FO[\Gamma]$ 
is in $\paraAC{0\uparrow}$ and can be computed in time $\Oh(d|\alpha|)$:
given a $\Gamma$-skeleton $\As$ of depth $d$ and a tuple $\bar u\in V(\As)^{|\bar x|}$, verify whether $\As\models \alpha(\bar u)$.
\end{lemma}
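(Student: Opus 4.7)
By unfolding the definition, $\alpha(\bar x)$ is a boolean combination of finitely many atoms, each of which is either a label check $c(x_i)$ for some unary $c\in\Gamma$, or an atomic formula $\lcd_j^R(x_a,x_b)$ for some binary $R\in\Gamma$ and some $j\in[0,d]$. The approach is to evaluate each atom on the input tuple $\bar u$ and then combine the results according to the boolean structure of $\alpha$; both the sequential and the circuit implementation of this scheme are independent of~$n$ in depth, and only the circuit counts touch~$n$.

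For the sequential algorithm, a label check $c(u_i)$ is an $O(1)$ lookup. For an atom $\lcd_j^R(u_a,u_b)$, observe that in the forest given by $R$ the ancestors of any node form a list of length at most~$d$, which can be enumerated in $O(d)$ time by following parent pointers from $u_a$ and from $u_b$ (using that the parent of any element in~$\As$ is available in constant time in the RAM model). Intersect the two lists and compare the size of the intersection with~$j$ in $O(d)$ additional time. Summed over the $O(|\alpha|)$ atoms, and with $O(|\alpha|)$ further work for the boolean combination, this yields the claimed $O(d|\alpha|)$ bound.

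For the $\paraAC{0\uparrow}$ circuit, the key auxiliary object is, for each binary $R\in\Gamma$, the boolean $n\times n$ ancestor matrix $A_R$ with $A_R[w,v]=1$ iff $w$ is an $R$-ancestor of $v$. Since the forest given by $R$ has depth at most~$d$, this is the $(d{-}1)$-st boolean power of the identity-augmented parent matrix, computable by iterated squaring with a subcircuit of size $n^{O(1)}$ and depth $O(\log d)$ in the spirit of Lemma~\ref{lem:distances}. With the $A_R$'s precomputed, a label atom $c(x_i)$ is evaluated on input $\bar u$ by a constant-depth multiplexer indexed by~$u_i$ over the input encoding of~$c$, and an atom $\lcd_j^R(x_a,x_b)$ is evaluated by forming, for each $w\in V(\As)$, the bit $A_R[w,u_a]\wedge A_R[w,u_b]$ (again through constant-depth multiplexers keyed by $u_a,u_b$), and feeding the resulting $n$-bit vector into two copies of the Threshold circuit of Theorem~\ref{thm:threshold} to test that its Hamming weight is both $\geq j$ and $\leq j$. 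Because the weight is at most~$d$ and the threshold is at most~$d$, this fits into~$\paraAC{0}$. Finally, the overall circuit realizes the boolean combination $\alpha$ by an acyclic layer of size $O(|\alpha|)$ on top of the atom subcircuits, giving total depth $O(\log d)+O(1)+O(|\alpha|)$, bounded by a computable function of the parameter $(d,\alpha)$, and total size $f(d,|\alpha|)\cdot n^{O(1)}$.

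The construction itself is essentially bookkeeping, so the main point to watch is that the circuit depth really is independent of~$n$. This is guaranteed by the two non-trivial ingredients: computing the ancestor matrix by iterated squaring (depth $O(\log d)$ rather than $O(d\log n)$), and using Theorem~\ref{thm:threshold} to count up to a parameter-bounded threshold in constant depth rather than by a naive $O(\log d)$ comparator tree. Once these are in place, all remaining steps are local and of constant depth or of depth linear in~$|\alpha|$.
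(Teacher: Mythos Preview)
Your proposal is correct and follows essentially the same approach as the paper: evaluate each atom separately (label checks directly, $\lcd^R_j$ atoms by following parent pointers for the sequential bound and by computing the ancestor matrix via iterated squaring plus the threshold circuit of Theorem~\ref{thm:threshold} for the $\paraAC{0\uparrow}$ bound), then combine the results according to the boolean structure of~$\alpha$. The paper's proof is terser but the underlying argument is identical; your additional remarks on multiplexers and the two-sided threshold test are just explicit bookkeeping that the paper leaves implicit.
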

\begin{proof}
Observe that in $\alpha(\bar u)$ every atom of the form $\lcd^R_h(u_i,u_j)$ can be evaluated in time $\Oh(d)$ by pursuing the parent relation $R$ from $u_i$ and $u_j$ at most $d$ times.
Moreover, for each binary relation $R\in \Sigma$, interpreted as a parent relation in a forest of depth at most $d$, we can compute the corresponding ancestor relation using a circuit of size $n^{\Oh(1)}$
and depth $\Oh(\log d)$: just compute the $d$th boolean power of the matrix of $R$ using iterative squaring. With this information available, all atoms of the form $\lcd^R_h(u_i,u_j)$ can be evaluated in
$\paraAC{0}$ using Theorem~\ref{thm:threshold}. Having evaluated all the atoms it is straightforward to evaluate the whole formula within the stated complexity bounds.
\end{proof}

\paragraph*{Guarded structures.}
For the remainder of this section we fix a graph class $\Cc$ with effectively bounded expansion. Without loss of generality we may assume that $\Cc$ is closed under taking subgraphs.
Let us fix the function $M(\cdot)$ given by Theorem~\ref{thm:lowtd-colorings} for the class $\Cc$; this means that given a graph $G\in \Cc$ and parameter $p$ we may compute a treedepth-$p$ coloring of~$G$ using at most $M(p)$
colors in $\paraAC{1}$. 
We note that the same computational task can be also done in $\linFPT$ by the results of Ne\v{s}et\v{r}il and Ossona de Mendez~\cite{nevsetvril2008grad,NesetrilM08a}.

A structure $\As$ is {\em{guarded}} by $\Cc$ if the Gaifman graph of $\As$ belongs to $\Cc$.
Further, a structure~$\Bs$ with the same universe as $\As$ (but possibly different vocabulary) is {\em{guarded}} by $\As$ if the Gaifman graph of $\Bs$ is a subgraph of the Gaifman graph of~$\As$.
Note that if $\As$ guards $\Bs$ and $\Bs$ guards $\Cs$ then $\As$ guards~$\Cs$, and if further $\As$ is guarded by $\Cc$, then so are $\Bs$ and $\Cs$.

\paragraph*{Quantifier elimination.}
We finally proceed to our main goal, the quantifier elimination procedure for $\FO$ on structures with Gaifman graphs from $\Cc$.
The following definition explains our goal in this procedure.

\begin{definition}\label{def:reducible}
Let $\Sigma$ be a vocabulary (of arity $2$) and let $\varphi(\bar x)\in \FO[\Sigma]$ for a tuple of variables $\bar x=(x_1,\ldots,x_k)$.
We say that $\varphi(\bar x)$ is {\em{reducible}} if there exists $d\in \N$, a vocabulary~$\Gamma$, an lcd-reduced formula $\alpha(\bar x)\in \FO[\Gamma]$, and, for every $\Sigma$-structure~$\As$ guarded by $\Cc$,
a $\Gamma$-skeleton $\Bs$ of depth at most~$d$ guarded by $\As$ such that $\varphi(\As)=\alpha(\Bs)$.

This reducibility is {\em{effective}} if $d$, $\Gamma$, and $\alpha$ are computable from $\Sigma$ and $\varphi$, 
and the transformation computing $\Bs$ given $\As$, parameterized by $\Sigma$ and $\varphi$, is in $\linFPT$ and $\paraAC{1}$.
\end{definition}

Note that in Definition~\ref{def:reducible}, the fact that $\As$ is guarded by $\Cc$ and guards $\Bs$, entails that $\Bs$ is guarded by $\Cc$.
Hence we may further apply further reducibility on the structure $\Bs$ and so on. This chaining property of the notion of reducibility will be crucial in our reasoning.

In the following, whenever a vocabulary of a formula is not specified, it is an arbitrary vocabulary.
Given Definition~\ref{def:reducible}, quantifier elimination can be stated in a very simple way. 

\begin{theorem}\label{thm:qe-main}
Every formula $\varphi(\bar x)$ with $|\bar x|\geq 1$ is effectively reducible.
\end{theorem}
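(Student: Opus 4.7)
The plan is to prove Theorem~\ref{thm:qe-main} by structural induction on $\varphi(\bar x)$, with the two workhorses being Theorem~\ref{thm:lowtd-colorings} for atomic formulas and a skeleton-level generalization of Lemma~\ref{lem:qe-trees} for existential quantifiers. I begin by observing that reducibility is closed under Boolean combinations: if $\varphi_1$ and $\varphi_2$ are reducible via skeletons $\Bs_1, \Bs_2$ over $\Gamma_1, \Gamma_2$ with lcd-reduced witnesses $\alpha_1, \alpha_2$, then the skeleton obtained by taking the disjoint union of the binary and unary relations of $\Bs_1$ and $\Bs_2$ (on the common universe, which stays guarded by $\As$) and the formula built from $\alpha_1$ and $\alpha_2$ by the appropriate connective witness reducibility of $\varphi_1\wedge\varphi_2$, $\varphi_1\vee\varphi_2$, and $\neg\varphi_1$. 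Lcd-reduced formulas being Boolean-closed and the combined skeleton being computable by copying relations in $\linFPT$ and $\paraAC{0}$, effectiveness is preserved. This reduces the work to the atomic and single-existential cases.

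For atomic formulas, unary atoms and equalities are immediate (unary relations become labels; equality is captured by lcd-reduced tests on depths and ancestor counts). For a binary atom $R(x_i,x_j)$ with $R\in \Sigma$, I invoke Theorem~\ref{thm:lowtd-colorings} with $p=2$ to produce a treedepth-$2$ coloring of the Gaifman graph of $\As$ using $M(2)$ colors. For each of the $\binom{M(2)}{2}$ pairs of color classes, the induced subgraph has treedepth $\le 2$, so by Lemma~\ref{lem:DFS-compute} I compute a DFS forest in it of depth $<4$. The skeleton $\Bs$ gathers all these forests and the color labels, together with auxiliary unary marks recording, for each pair of colors and each $R\in\Sigma_2$ and each direction, whether a vertex stands in relation $R$ to its parent in the corresponding forest. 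Every edge $uv$ of the Gaifman graph appears as a parent–child pair in at least one such forest, so $R(x_i,x_j)$ is encoded as a finite disjunction (over color pairs and directions) of lcd-reduced conjunctions testing that $x_i$ and $x_j$ are consecutive on an ancestor chain in the appropriate forest and carry the correct mark. Guardedness of $\Bs$ by $\As$ is clear since all forests are subgraphs of the Gaifman graph.

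For the existential case $\varphi(\bar x)=\exists y\,\psi(\bar x,y)$, the inductive hypothesis applied to $\psi$ (which has $|\bar x|+1\ge 1$ free variables) furnishes a $\Gamma$-skeleton $\Bs$ of depth $d$ guarded by $\As$ and an lcd-reduced $\beta(\bar x,y)\in\FO[\Gamma]$ with $\psi(\As)=\beta(\Bs)$, so the task reduces to eliminating $y$ from $\exists y\,\beta(\bar x,y)$ over $\Bs$. I lift Lemma~\ref{lem:qe-trees} to skeletons as follows. A skeleton-version of Proposition~\ref{prop:bnf} rewrites $\beta$ as a disjunction of lcd-type formulas $\basic_{\gamma,\delta}$, where $\delta$ now specifies, for every $R\in\Gamma$ and every pair of free variables, the number of common $R$-ancestors; existential quantification distributes over this disjunction. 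For a single lcd-type, $\delta$ fixes the joint ancestor-closure shape $f$ of $\bar x\cup\{y\}$ across all forests simultaneously. Choose $x_1$ as pivot. Precompute, at every node, unary marks recording its depth and label pattern in every forest --- this is purely local. Declare a node a "candidate" if these marks agree with those required of $y$ in every forest by $(\gamma,\delta)$; this is again a unary label. Then, for each forest $R$ independently, apply the Lemma~\ref{lem:qe-trees} counting gadget (tracking $\kappa(v)$ for the relevant ancestor $v$ of $x_1$ in $R$, and the "ancestor-of-a-candidate" flag in $R$). Since the candidate predicate is shared across forests but $\kappa$-style counting is done inside each forest, the only joint constraint --- that $y$ sit at prescribed positions in all forests at once --- is already enforced by the shared candidate label. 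The formula $\widehat\beta(\bar x)$ consults only labels at $x_1,\ldots,x_k$ together with the lcd-type of $\bar x$, and is therefore lcd-reduced.

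The main obstacle is precisely this multi-forest lift, because one might worry that counting witnesses across many forests forces nonlocal reasoning; the key insight that unblocks it is that the "coordinate" constraints across forests collapse into a single unary candidate predicate, after which one reuses the single-forest counting verbatim per forest. Effectiveness is then routine: the relabeling is a composition of constantly many applications of the constructions from Lemma~\ref{lem:qe-trees}, Lemma~\ref{lem:lcd-evaluate}, and Theorem~\ref{thm:threshold}, each in $\linFPT$ and $\paraAC{0\uparrow}$; combined with the single $\paraAC{1}$ cost incurred once in the atomic step by Theorem~\ref{thm:lowtd-colorings}, the whole induction stays in $\linFPT$ and in $\paraAC{1}$, as the induction depth is bounded by $|\varphi|$ and the $f(|\varphi|)$ terms are absorbed into the parameter-dependent slack of $\paraAC{1}$.
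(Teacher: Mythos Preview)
Your multi-forest lift of Lemma~\ref{lem:qe-trees} has a genuine gap. For a single forest, the counting gadget works because the binary constraint between $y$ and $x_1$ (namely, the depth of their lowest common ancestor) is captured by locating $y$ in the right subtree below the depth-$h$ ancestor of $x_1$; counting candidates per subtree suffices. In a skeleton with several forests $R_1,\ldots,R_m$, however, the lcd-type fixes a \emph{separate} binary constraint between $y$ and each $x_i$ in \emph{each} forest. Your shared ``candidate'' label only enforces the unary part (depth of $y$ in each forest and its label pattern); it does not enforce that the \emph{same} node $y$ simultaneously lies in the prescribed subtree relative to $x_1$ in $R_1$ and in the prescribed subtree relative to $x_1$ in $R_2$, etc. Running the $\kappa$-counting independently per forest tells you, for each $R_j$, how many candidates sit in the correct $R_j$-subtree, but these counts refer to possibly disjoint sets of candidates; there is no way to conclude from them whether some single candidate satisfies all constraints at once. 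A concrete failure: with two forests and $\bar x=(x_1)$, you can arrange that the unique candidate in the correct $R_1$-subtree is different from the unique candidate in the correct $R_2$-subtree, yet both per-forest counts are positive.

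The paper sidesteps this entirely: instead of lifting Lemma~\ref{lem:qe-trees} to skeletons, it converts the lcd-reduced $\beta(\bar x,y)$ back into an \emph{existential} first-order formula over the skeleton vocabulary via Lemma~\ref{lem:bnf-existential} (quantifying the ancestor chains in each forest), and then re-applies the full machinery of Lemma~\ref{lem:existential}: recompute a treedepth-$p$ coloring of the skeleton's Gaifman graph (possible because the skeleton is guarded by $\Cc$), and eliminate all quantifiers within each new single DFS forest, where the cross-forest constraints now become ordinary ancestor constraints in one tree. This is why a fresh low-treedepth coloring is needed at every quantifier, not just once at the atomic base. The paper also notes that this naive sequential composition would only yield $\paraAC{1\uparrow}$, and fixes this by precomputing all required colorings in parallel at the outset; your remark that the $\paraAC{1}$ cost is ``incurred once in the atomic step'' does not survive once the correct existential step is used.
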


The proof of Theorem~\ref{thm:qe-main} is by induction on the structure of the formula.
We find it most convenient to directly solve the case of existential formulas first, from which both the induction base and the induction step will follow.

\begin{lemma}\label{lem:existential}
Every existential formula with at least one free variable is effectively reducible.
\end{lemma}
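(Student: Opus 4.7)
The plan is to reduce the problem to repeated applications of the forest quantifier elimination Lemma~\ref{lem:qe-trees}, by using a low treedepth coloring as scaffolding. By Proposition~\ref{prop:prenex} I may assume $\varphi(\bar x) = \exists_{\bar y}\,\psi(\bar x,\bar y)$ with $\psi$ quantifier-free; set $p \coloneqq |\bar x| + |\bar y|$. First, given $\As$ guarded by $\Cc$, I compute a treedepth-$p$ coloring $\lambda\colon V(\As)\to [M]$ with $M \coloneqq M(p)$ colors via Theorem~\ref{thm:lowtd-colorings}. For every color subset $T \subseteq [M]$ with $|T| \leq p$, the induced subgraph $G(\As)[\lambda^{-1}(T)]$ has treedepth at most $p$, so Lemma~\ref{lem:DFS-compute} provides a DFS forest $F_T$ of depth at most $d \coloneqq 2^p$, which I extend to all of $V(\As)$ by making the remaining vertices singleton trees. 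The skeleton $\Bs$ will have, as binary relations, one parent relation $\parent_T$ per $T$, and as unary labels: (i) the color classes $\lambda^{-1}(c)$ for each $c \in [M]$; (ii) the original unary relations of $\Sigma$; (iii) for each triple $(T,h,R)$ with $h \in [1,d]$ and $R$ a binary relation of $\Sigma$, a label at each vertex $v$ recording whether $R(a,v)$ or $R(v,a)$ holds in $\As$, where $a$ is the ancestor of $v$ at depth $h$ in $F_T$ (if it exists). Since each $F_T$ has depth at most $d$ and its edges are edges of $G(\As)$, the result is a $\Gamma$-skeleton of depth $d$ guarded by $\As$.

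The key observation is that $\exists_{\bar y}\,\psi(\bar x,\bar y)$ decomposes as $\bigvee_T \exists_{\bar y}\,\psi_T(\bar x,\bar y)$, where $T$ ranges over color subsets of size at most $p$ and $\psi_T$ enforces that the colors of all the variables lie in $T$ and rewrites the atoms of $\psi$ using only labels of $\Bs$ and lcd-atoms of the single forest $\parent_T$. Indeed, whenever $(\bar u,\bar v)$ satisfies $\psi$, its color set has size at most $p$ and is thus some admissible $T$; and any binary atom $R(z,z')$ appearing in $\psi$ forces its arguments to be Gaifman-adjacent in $\As$, hence ancestor-descendant inside $F_T$, so the ancestor-relation labels of $\Bs$ record exactly whether these atoms hold. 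The formula $\psi_T(\bar x,\bar y)$ is thus lcd-reduced over the single-forest labeled structure obtained from $\Bs$ by keeping $\parent_T$ and all unary labels. I can therefore apply Lemma~\ref{lem:qe-trees} iteratively $|\bar y|$ times to eliminate the existential quantifiers of $\exists_{\bar y}\,\psi_T$: each application yields a relabeling of $F_T$ (which keeps all previous labels as a subset, since the lemma allows $\wLambda \supseteq \Lambda$) together with an lcd-reduced formula on one fewer quantified variable. Combining the results across all $T$ gives an lcd-reduced formula $\alpha(\bar x)$ over the union of all these relabelings, which forms the final skeleton $\Bs^\star$; since relabelings only modify unary labels, $\Bs^\star$ still has depth $d$ and is guarded by $\As$.

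For the effectiveness assertions, $d$, the vocabulary $\Gamma$, and the formula $\alpha$ are visibly computable from $\varphi$. The transformation producing $\Bs^\star$ from $\As$ chains: the coloring step (in $\paraAC{1}$ and $\linFPT$ by Theorem~\ref{thm:lowtd-colorings}); the computation of the forests $F_T$ (in $\paraAC{0\uparrow}$ and linear time by Lemma~\ref{lem:DFS-compute}); the computation of the ancestor-relation labels (reading off the $d$-th boolean power of $\parent_T$, easily done in $\paraAC{0\uparrow}$ and linear time); and the iterated relabelings from Lemma~\ref{lem:qe-trees}, each in $\paraAC{0\uparrow}$ and linear time. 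The number of subsets $T$ is bounded by $\binom{M}{p}$, which depends only on $\varphi$, so the total circuit has size $f(\varphi)\cdot n^{\Oh(1)}$ and depth $f(\varphi) + \Oh(\log n)$, dominated by the coloring step; sequentially the cost is $f(\varphi)\cdot n$. The main conceptual obstacle is bridging the gap between the multi-forest skeleton and the single-forest setting required by Lemma~\ref{lem:qe-trees}; this is resolved precisely by the case split on the color set $T$, which confines every witness to a single bounded-treedepth component and thus to a single forest, at the price of an $f(\varphi)$-sized disjunction over $T$.
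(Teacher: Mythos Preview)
Your proposal is correct and follows essentially the same approach as the paper: compute a treedepth-$p$ coloring, build a DFS forest for each bounded color subset, encode the binary relations of $\As$ via ancestor labels (this is the paper's Claim~\ref{cl:encode}), rewrite $\psi$ per color subset into an lcd-reduced formula, eliminate the $\bar y$ quantifiers by iterating Lemma~\ref{lem:qe-trees}, and take the disjunction over color subsets. The only cosmetic differences are that you extend each forest to all of $V(\As)$ by singletons (the paper keeps each forest on $V^C$) and you range over subsets of size at most $p$ rather than exactly $p$; neither affects the argument.
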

\begin{proof}
Let $\varphi(\bar x)\in \FO[\Sigma]$ be the formula in question, where~$\Sigma$ is a vocabulary and $\bar x=(x_1,\ldots,x_k)$ are the free variables of $\varphi$. 
We may assume that $\varphi$ is in prenex existential form, say,
$\varphi(\bar x)=\exists_{\bar y}\ \psi(\bar x,\bar y)$,
where $\bar y=(y_1,\ldots,y_\ell)$ and $\psi(\bar x,\bar y)$ is quantifier-free.
Suppose $\As$ is the given $\Sigma$-structure guarded by $\Cc$.
We describe how a suitable skeleton $\Bs$ guarded by $\As$ should be constructed, while its depth $d$, its vocabulary $\Gamma$, and the final lcd-reduced formula $\alpha(\bar x)$ will be constructed along the~way.

Let $p=k+\ell$ and let $G=G(\As)$.
Since $G\in \Cc$, there is a treedepth-$p$ coloring $\lambda\colon V(\As)\to [M]$ of $G$, where $M=M(p)$, which can be computed in $\paraAC{1}$ and in $\linFPT$ using Theorem~\ref{thm:lowtd-colorings}
and the results of~\cite{NesetrilM08a,nevsetvril2008grad}, respectively.
Let $\Classes$ be the family of all subsets of $[M]$ of size $p$.
For $C\in \Classes$, let $V^C=\lambda^{-1}(C)$ be the set of elements with colors from~$C$, and let $G^C=G[V^C]$ be the subgraph induced by them. Then~$G^C$ has treedepth at most $p$.
As observed before, we may compute, for each $C\in \Classes$, a DFS forest $F^C$ of the induced subgraph~$G^C$ of depth at most $d\coloneqq 2^p-1$ in $\paraAC{0\uparrow}$ (and in $\linFPT$ by just running a depth-first search).

Using Lemma~\ref{lem:DFS-compute}, we may compute, for each $C\in \Classes$, a DFS forest $F^C$ of the induced subgraph~$G^C$ of depth at most $d\coloneqq 2^p-1$. 
This can be done in $\paraAC{0\uparrow}$ by Lemma~\ref{lem:DFS-compute} and in $\linFPT$ by just running a depth-first search.

Fix any $C\in \Classes$ and let $\Ts^C$ be the unlabeled forest of depth at most $d$ with node set $V^C$ and binary relation $\parent^C$ interpreted as the parent relation of $F^C$.
We now prove that the substructure induced in $\As$ by $V^C$ can be entirely encoded in a labeling of $\Ts^C$.

\begin{claim}\label{cl:encode}
There exists a label set $\Lambda^C$, a $\Lambda^C$-labeling $\Ss^C$ of~$\Ts^C$, and, for every relation $R\in \Sigma$,
an lcd-reduced formula $\eta^C_R$ with as many free variables as the arity of $R$ such that $R(\As[V^C])=\eta^C_R(\Ss^C)$, where $\As[V^C]$ denotes the substructure of~$\As$ induced by $V^C$.
\end{claim}
\begin{proof}\renewcommand\proofSymbol{\ensuremath{\lrcorner}}
For each $u\in V^C$, record the following information using labels at $u$:
\begin{enumerate}
\item[(a)] \label{p:unary} For each unary relation $R\in \Sigma$, record whether $R(u)$ holds.
\item[(b)] \label{p:binary} For each binary relation $R\in \Sigma$ and each $i\in [d]$ not larger than the depth of $u$, let $v$ be the unique ancestor of $u$ at depth $i$.
Record whether $R(u,v)$ holds and whether $R(v,u)$ holds.
\end{enumerate}
Note that to record the information above we may use a set $\Lambda^C$ consisting of at most $2d|\Sigma|$ labels: one label for each unary relation in $\Sigma$,
and two labels for each binary relation in $\Sigma$ and each $i\in [d]$.
Let $\Ss^C$ be the obtained $\Lambda^C$-labeling of~$\Ts^C$.

We now write the lcd-reduced formula $\eta^C_R$ for a relation $R\in \Sigma$.
If $R(x)$ is unary, we may simply check an appropriate new label of $x$, added in point~\ref{p:unary}.
If $R(x,y)$ is binary, we make a disjunction over two cases: either $x$ is an ancestor of $y$ in $\Ss^C$ or vice versa.
If $x$ is an ancestor of~$y$, then denoting the depth of $x$ as $i$,
it can be checked whether $R(x,y)$ holds by reading one of the two labels added in point~\ref{p:binary} at $y$ for $R$ and depth $i$.
The case when $y$ is an ancestor of $x$ is symmetric.
If neither of $x$, $y$ is an ancestor of the other, then $R(x,y)$ is surely false, because $F^C$ is a DFS forest of the $G^C$.
This concludes the construction of $\eta^C_R$; note that the above verification can be indeed encoded using an lcd-reduced formula. 
\end{proof} 

Now consider formula
$\psi^C(\bar x,\bar y)\in \FO[\{\parent^C\}\cup \Lambda^C]$
obtained from $\psi(\bar x,\bar y)$ by replacing each relation symbol $R$ with the corresponding formula $\eta_R^C$.
Since $\psi$ was quantifier-free, $\psi^C$ is lcd-reduced.
Let
$$\varphi^C(\bar x)\coloneqq \exists_{\bar y}\ \psi^C(\bar x,\bar y).$$
Since $k\geq 1$, we may iteratively apply Lemma~\ref{lem:qe-trees} to consecutive quantifiers in $\varphi^C$, starting with the deepest.
This yields a new label set $\wLambda^C$, a $\wLambda^C$-relabeling $\Us^C$ of $\Ss^C$, and an lcd-reduced formula $\alpha^C(\bar x)$ such that $\alpha^C(\Us^C)=\varphi^C(\Ss^C)$.

We now build $\Bs$ and its vocabulary $\Gamma$. Start by setting the universe of $\Bs$ to be equal to the universe of $\As$, and $\Gamma$ is so far empty.
For each $C\in \Classes$ add a unary relation $\clr^C$ to $\Gamma$, and interpret it in $\Bs$ so that it selects the vertices of~$V^C$.
Next, import all the relations from all structures $\Us^C$ to $\Bs$.
That is, for each $C\in \Classes$ we add $\{\parent^C\}\cup \wLambda^C$ to the vocabulary $\Gamma$, while the interpretations of these relations are taken from~$\Ss^C$.
Note that thus elements outside of $V^C$ do not participate in relations $\parent^C$.

This concludes the construction of $\Gamma$ and $\Bs$. Note that the only binary relations in $\Gamma$ are the relations $\parent^C$ for $C\in \Classes$, and in $\Bs$ each of them
induces a forest of depth at most $d$.
Moreover, since each $F^C$ is a DFS forest of $G^C$, it follows that $\Bs$ is guarded by $\As$.
Hence, $\Bs$ is a $\Gamma$-skeleton of depth $d$ guarded by $\As$, as requested.
Consider the formula
$$\alpha(\bar x)\coloneqq \bigvee_{C\in \Classes}\ \left(\alpha^C(\bar x)\wedge \bigwedge_{i=1}^k\ \clr^C(x_i)\right).$$
Observe that $\alpha(\bar x)$, as a boolean combination of lcd-reduced formulas, is lcd-reduced.
We claim that $\alpha(\Bs)=\varphi(\As)$.
On one hand, by the construction it is clear that $\alpha$ selects only tuples that satisfy $\varphi$, thus $\alpha(\Bs)\subseteq \varphi(\As)$.
To see the reverse inclusion, observe that whenever we have some valuation $\bar u$ of $\bar x$ such that $\varphi(\bar u)$ holds, this is witnessed by the existence of some valuation $\bar v$ of $\bar y$
such that $\psi(\bar u,\bar v)$ holds. Since $|\bar u|+|\bar v|=k+\ell=p$ and $\lambda$ was a treedepth-$p$ coloring, there exists $C\in \Classes$ such that all elements of $\bar u$ and $\bar v$
belong to $V^C$. For this $C$ the formula $\alpha^C(\bar u)\wedge \bigwedge_{i=1}^k \clr^C(u_i)$ will be satisfied and, consequently, $\bar u$ will be included in $\alpha(\Bs)$.

This proves reducibility. Effectiveness follows immediately from complexity bounds provided by the invoked results and a straightforward implementation of the construction of Claim~\ref{cl:encode}. 
\end{proof}

We now use Lemma~\ref{lem:existential} to give all the ingredients needed for the inductive proof of Theorem~\ref{thm:qe-main}.

\begin{lemma}\label{lem:induction}
The following assertions hold:
\begin{enumerate}
\item[(a)] \label{a:qf} Every quantifier-free formula with at least one free variable is effectively reducible.
\item[(b)] \label{a:ng} The negation of an effectively reducible formula is effectively reducible.
\item[(c)] \label{a:eq} Every formula of the form $\varphi(\bar x)=\exists_{y}\,\psi(\bar x,y)$ for an effectively reducible $\psi$ and with $|\bar x|\geq 1$ is also effectively reducible.
\end{enumerate}
\end{lemma}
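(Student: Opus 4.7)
The plan is to handle the three parts separately, leveraging Lemma~\ref{lem:existential} for parts~(a) and~(c), and a purely syntactic observation for part~(b).

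For part~(a), observe that any quantifier-free formula is trivially existential (it is a positive Boolean combination of formulas in prenex existential form with an empty quantifier prefix, namely itself). Since $|\bar x|\geq 1$, Lemma~\ref{lem:existential} applies directly and yields effective reducibility.

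For part~(b), suppose $\varphi(\bar x)$ is effectively reducible with data $d$, $\Gamma$, lcd-reduced $\alpha(\bar x)\in \FO[\Gamma]$, and a transformation $\As\mapsto \Bs$ producing a $\Gamma$-skeleton of depth at most $d$ guarded by $\As$ such that $\varphi(\As)=\alpha(\Bs)$. For $\neg\varphi$, use the very same $d$, $\Gamma$, and transformation, but replace $\alpha$ by $\neg\alpha$. Since lcd-reduced formulas are explicitly closed under Boolean combinations, $\neg\alpha$ is lcd-reduced; since $V(\Bs)=V(\As)$, complementation on tuples commutes with the identification of universes, so $(\neg\varphi)(\As)=V(\As)^{|\bar x|}\setminus \varphi(\As)=V(\Bs)^{|\bar x|}\setminus \alpha(\Bs)=(\neg\alpha)(\Bs)$. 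Effectiveness is inherited verbatim.

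For part~(c), the idea is to chain two reductions. By effective reducibility of $\psi(\bar x,y)$, fix $d_0$, $\Gamma_0$, an lcd-reduced $\beta(\bar x,y)\in \FO[\Gamma_0]$, and a transformation producing, for each $\Sigma$-structure $\As$ guarded by $\Cc$, a $\Gamma_0$-skeleton $\Bs$ of depth at most $d_0$ guarded by $\As$ with $\psi(\As)=\beta(\Bs)$. Since $V(\As)=V(\Bs)$, we have $\varphi(\As)=(\exists_y\,\beta(\bar x,y))(\Bs)$. Now apply Lemma~\ref{lem:bnf-existential} to convert $\beta$ into an equivalent existential formula $\xi(\bar x,y)\in \FO[\Gamma_0]$ over all $\Gamma_0$-skeletons of depth at most $d_0$; then $\exists_y\,\xi(\bar x,y)$ is again existential (in $\FO[\Gamma_0]$) and has at least one free variable. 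Crucially, $\Bs$ is guarded by $\As$, which is guarded by $\Cc$, so by transitivity $\Bs$ is itself guarded by $\Cc$. Hence Lemma~\ref{lem:existential} applies to the existential formula $\exists_y\,\xi(\bar x,y)\in \FO[\Gamma_0]$ and yields $d'$, a vocabulary $\Gamma'$, an lcd-reduced $\alpha(\bar x)\in \FO[\Gamma']$, and a $\Gamma'$-skeleton $\Bs'$ of depth at most $d'$ guarded by $\Bs$ with $(\exists_y\,\xi)(\Bs)=\alpha(\Bs')$. Transitivity of the guarding relation makes $\Bs'$ guarded by $\As$, and $\varphi(\As)=\alpha(\Bs')$, proving reducibility with data $(d',\Gamma',\alpha)$.

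The main subtlety lies in~(c): one must carefully confirm that the two applications of the reducibility machinery compose correctly. The crucial closure properties are that guarding is transitive (so $\Bs'$ is guarded by $\As$ once the inner reduction places it under $\Bs$, and $\Bs$ is under $\As$), and that lcd-reduced formulas translate back into existential ones via Lemma~\ref{lem:bnf-existential} so that an extra existential quantifier can legitimately trigger another round of Lemma~\ref{lem:existential}. Effectiveness is preserved because the composition of two $\linFPT$ (resp.~$\paraAC{1}$) transformations, with parameters depending computably on the input formulas, is again in $\linFPT$ (resp.~$\paraAC{1}$): the skeleton $\Bs$ has size $\Oh(|\As|)$ and its Gaifman graph lies in $\Cc$ by the guarding property, so the second stage runs within the same bounds, and the depth bounds of $\paraAC{1}$ circuits add up to $\Oh(\log n)$ overall.
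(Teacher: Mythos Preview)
Your proof is correct and follows essentially the same approach as the paper: part~(a) by observing that quantifier-free formulas are existential and invoking Lemma~\ref{lem:existential}; part~(b) by negating the output lcd-reduced formula; part~(c) by first reducing $\psi$, converting the resulting lcd-reduced formula back to an existential one via Lemma~\ref{lem:bnf-existential}, and then applying Lemma~\ref{lem:existential} a second time to the skeleton (using transitivity of guarding). Your discussion of effectiveness is slightly more explicit than the paper's, but the argument is the same.
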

\begin{proof}
Assertion~\ref{a:qf} follows immediately from Lemma~\ref{lem:existential}, because every quantifier-free formula is in particular existential.
For assertion~\ref{a:ng}, after performing the constructions given by the effective reducibility of the formula $\varphi(\bar x)$ in question,
we may just negate the output lcd-reduced formula. Here we use that lcd-reduced formulas are closed under negation.

For assertion~\ref{a:eq}, apply first the effective reducibility of $\psi(\bar x,y)$, yielding an lcd-reduced formula $\beta(\bar x,y)$ working over some $\Gamma$-skeleton $\Cs$ guarded by $\As$.
Then apply Lemma~\ref{lem:bnf-existential} and Proposition~\ref{prop:prenex} to rewrite $\beta(\bar x,y)$ as an equivalent formula $\gamma(\bar x,y)$ in the prenex existential form.
Finally, observe that formula $\exists_y\ \gamma(\bar x,y)$ is also in the prenex existential form, so we may apply Lemma~\ref{lem:existential} to it and $\Cs$, 
yielding the final lcd-reduced formula $\alpha(\bar x)$ and skeleton $\Bs$.
\end{proof}

We may now conclude the proof of Theorem~\ref{thm:qe-main}.

\begin{proof}[of Theorem~\ref{thm:qe-main}]
It is well-known that every formula can be rewritten into prenex normal form, i.e., to the form where there is a sequence of quantifiers (existential or universal) followed by a quantifier-free formula.
Further, such a formula in the prenex normal form can be constructed from the quantifier-free formula using a sequence of negations and existential quantifications.
Reducibility now follows from Lemma~\ref{lem:induction}: assertion \ref{a:qf} states that the quantifier-free formula is reducible,
while assertions \ref{a:ng} and \ref{a:eq} imply that during this construction procedure we encounter only reducible formulas.

For effectiveness, a sequential composition of the constructions yields both computability of the final depth, vocabulary, and formula from the input vocabulary and formula, as well as containment in $\linFPT$
of the transformation.
For containment in $\paraAC{1}$, there is a slight caveat: a sequential composition of constructions would yield only containment in $\paraAC{1\uparrow}$, and not in $\paraAC{1}$.
We may solve this issue as follows. Observe that the only place in the proof where we need to use the computational power of $\paraAC{1}$ is when we invoke Theorem~\ref{thm:lowtd-colorings}
to compute a treedepth-$p$ coloring of the Gaifman graph; this happens in the beginning of the proof of Lemma~\ref{lem:existential}.
Since the final depth $d_{\mathrm{max}}$ is computable from the input vocabulary and formula, we may compute it beforehand. It is easy to see that during the construction procedure
we never use treedepth-$p$ colorings for any $p>d_{\mathrm{max}}$. Therefore, we may in the very beginning apply Lemma~\ref{lem:existential} on the Gaifman graph of the input structure 
for every value of $p\leq d_{\mathrm{max}}$ in {\em{parallel}}; this can be done in $\paraAC{1}$. Then all further applications of Lemma~\ref{lem:existential} can be replaced with the usage of a pre-computed
treedepth-$p$ coloring for an appropriate $p$. As the remainder of the construction actually works in $\paraAC{0\uparrow}$, the result follows.
\end{proof}

\subsection{Piecing together the proof of Theorem~\ref{thm:main}}

With quantifier elimination in place, we may conclude the proof of our main result, Theorem~\ref{thm:main}.

\begin{proof}[of Theorem~\ref{thm:main}]
Let $\As$ be an input $\Sigma$-structure on $n$ elements, for a vocabulary $\Sigma$ of arity at most $2$, and let $\varphi\in \FO[\Sigma]$ be the input sentence.
We would like to apply Theorem~\ref{thm:qe-main} to~$\varphi$. However there is a slight mismatch: Theorem~\ref{thm:qe-main} assumes that the input formula has at least one free variable.
To circumvent this, let $z$ be a fresh variable that is not used in $\varphi$ and let us consider $\varphi$ as a formula $\varphi(z)$ with one free variable $z$ that is never used.
Note that $\varphi(z)$ is true either for every element of $\As$ or for no element of $\As$, depending on whether $\varphi$ is true or false in $\As$.
Now apply Theorem~\ref{thm:qe-main} to $\varphi(z)$, yielding a $\Gamma$-skeleton $\Bs$ of some depth $d$ guarded by $\As$ and an lcd-reduced formula $\alpha(z)\in \FO[\Gamma]$ such that 
$\alpha(\Bs)=\varphi(\As)$. It remains to evaluate $\alpha$ on any element of the structure using Lemma~\ref{lem:lcd-evaluate}.
\end{proof}

The same reasoning allows to reprove the result of Dvo\v{r}\'ak et al.~\cite{dvovrak2013testing} that for every class of effectively 
bounded expansion $\Cc$, it can be verified in linear-FPT time whether an input sentence~$\varphi$ holds in a given structure whose Gaifman graph belongs to $\Cc$. 

A cautious reader might be a bit worried by the strange-looking introduction of the dummy variable $z$ in the proof above. Let us explain its combinatorial meaning.
Unraveling the proof of Theorem~\ref{thm:qe-main}, the need for this workaround is the assumption $k\geq 1$ in Lemma~\ref{lem:qe-trees}.
This assumption is essential for the proof of Lemma~\ref{lem:qe-trees} to work: the information about the existence of a suitable evaluation of $y$ is encoded in the new label of $x_1$, and we need to have this
variable $x_1$ in order to store the information somewhere. Moreover, the assumption is actually necessary for Lemma~\ref{lem:qe-trees} to hold as stated, as
an lcd-reduced formula with no free variables can only say ``true'' or ``false'', while whether a suitable evaluation of $y$ exists depends on the forest $\Ts$, and not just on the input formula $\varphi$.
A way to overcome the issue would be to generalize the notion of a labeled forest by allowing additional arity-$0$ relations (aka {\em{flags}}); then 
we could store the information in a flag in $\Ss$ and the output lcd-reduced sentence $\wphi$ would just output this flag.
The implemented resolution by adding a dummy free variable $z$ is a variant of this: the variable $z$ serves as a ``placeholder'' for storing the relevant information, which boils down to encoding an arity-$0$
relation as an arity-$1$ relation that is satisfied either in all or in no element of the structure.

\section{Conclusions}\label{sec:conclusions}

In this paper we showed that the model-checking problem for first-order logic on classes of effectively bounded expansion is in $\paraAC{1}$, 
which means that it can be solved by a family of \AC-circuits of size $f(\varphi)\cdot n^{\Oh(1)}$ and
depth $f(\varphi)+\Oh(\log n)$, where $f$ is a computable function. This can be regarded as a parallelized variant of the result of Dvo\v{r}\'ak et al.~\cite{dvovrak2013testing} stating that the problem is
fixed-parameter tractable.

By the result of Grohe et al.~\cite{grohe2014deciding}, model-checking $\FO$ is fixed-parameter tractable even on every nowhere dense class of structures. When trying to generalize our result to the nowhere dense 
setting, the main issue is that the proof of Grohe et al.~\cite{grohe2014deciding} does not yield a robust quantifier elimination procedure, but a weak variant of Gaifman local form that is sufficient for 
fixed-parameter tractability of model-checking, but not variations of the problem. 

Our techniques uncover tight connections between the paradigms of distributed computing and circuit complexity in the context of sparse graphs classes.
Methods of the theory of sparsity seem very well-suited for the design of distributed algorithms, yet so far little is known.
Ne\v{s}et\v{r}il and Ossona de Mendez gave a logarithmic-time distributed algorithm to compute low treedepth colorings on classes of bounded expansion~\cite{NesetrilM16}. 
In the light of this paper, it is very natural to repeat the question asked by Ne\v{s}et\v{r}il and Ossona de Mendez~\cite{NesetrilM16} of whether on every class of bounded expansion,
model-checking local first-order formulas
can be performed by a distributed algorithm with running time $f(\varphi)\cdot \log n$ in the {\em{local broadcast}} model. As computation of low treedepth colorings is already settled~\cite{NesetrilM16}, it remains to examine the
quantifier elimination procedure; we hope that our presentation of this argument may help with this. Stronger models of communication (such as the so-called {\em{congested clique}} model) may allow efficient
distributed algorithms for more general problems, like model-checking of first-order formulas that are not necessarily local.


\paragraph*{Acknowledgements.}
The authors thank 
Thomas Zeume for discussions on dynamic $\FO$ in the context of sparse graphs, which inspired this work.

\bibliographystyle{abbrv}

\pagebreak
\bibliography{ref} 

\newpage

\appendix

\section{Hardness of computing degeneracy exactly}\label{app:degeneracy}

In this section we prove \cref{thm:degeneracy-Phard}, that is, we show that the problem of determining whether an input graph has degeneracy at most $2$ is $\mathsf{P}$-hard under logspace reductions.
We reduce from the following {\sc{Circuit Evaluation}} problem.
We are given an \AC-circuit $C$, with one output gate and no restrictions on depth, and an evaluation of the input gates of $C$.
The task is to determine whether the only output gate of $C$ evaluates to $1$.
It is well-known that this problem is $\mathsf{P}$-complete under logspace reductions, 
since the computation of a polynomial-time deterministic Turing machine can be encoded as a polynomial-size \AC-circuit.

\paragraph*{Intuition.}
We first present the intuition behind the reduction.
Suppose we are given a graph $G$ and we are interested in finding out whether its degeneracy is at most $2$.
Consider the following {\em{elimination procedure}}: starting with the original graph $G$, iteratively remove a vertex of degree at most $2$ from the current graph as long as there is one.
If we manage to exhaust the whole vertex set of $G$ by the elimination procedure, then by ordering the vertices according to the time of their removal we obtain an ordering of degeneracy
at most $2$. Otherwise, if the elimination procedure gets stuck at a subgraph with minimum degree at least $3$, then by \cref{prop:subgraph} this witnesses that the degeneracy of the graph is at least $3$.

The idea for the reduction is as follows: given the input circuit $C$ we construct a graph $G$ by replacing each gate by an appropriate gadget so that the elimination procedure on $G$ 
corresponds to the natural bottom-up evaluation procedure for the gates of $C$. More precisely, a gadget gets removed in the elimination procedure if and only if the corresponding gate evaluates to~$1$. 
If the output gate of $C$ evaluates to $1$ --- which means that the corresponding gadgets gets removed --- then this triggers a special ``switch'' that makes all gadgets removable, and hence~$G$ has degeneracy
at most $2$. Otherwise, all gadgets corresponding
to gates that evaluate to $0$ induce a subgraph of minimum degree at least $3$, thus certifying that $G$ has degeneracy at least $3$.

\paragraph*{Preprocessing.}
We first make some preprocessing of the input circuit $C$ in order to streamline the construction; it will be straightforward to see that this pre-processing can be done in logarithmic space.
Suppose the input gates of $C$ are $x_1,\ldots,x_n$, the output gate of $C$ is $y$, and we are also given an evaluation $\eta\colon \{x_1,\ldots,x_n\}\to \{0,1\}$
of the input gates. We now introduce $\TRUE$ and $\FALSE$ gates; such gates have always fan-in $0$ and evaluate to $1$ and $0$, respectivley.
For each $x_i$ such that $\eta(x_i)=1$, replace the input gate $x_i$ with a $\TRUE$ gate.
Similarly, if $\eta(x_i)=0$ then replace $x_i$ with a $\FALSE$ gate.
Moreover, every $\OR$ gate with no input is replaced with a $\FALSE$ gate and every $\AND$ gate with no input is replaced by a $\TRUE$ gate.
Further, by the standard technique of eliminating negation using de Morgan's laws we may assume that $C$ also has no $\NOT$ gates.
Thus, from now on we may assume that $C$ has only $\AND$, $\OR$, $\TRUE$, and $\FALSE$ gates, and every $\AND$ and $\OR$ gate has fan-in at least $1$.

Next, since we are not concerned with the depth of the circuit, 
we may assume that every gate has fan-in at most $2$ by replacing each gate of larger fan-in with a binary tree of gates of the same type of fan-in $2$.
Similarly, we may further assume that every gate has fan-out at most~$2$, i.e. it is wired as the input to at most $2$ other gates. 
For this, we modify every gate $u$ with fan-out $d>2$ by adding a path consisting of $d-2$ $\AND$-gates $u_1,u_2,\ldots,u_{d-2}$ with fan-in $1$ and fan-out $2$ arranged as follows:
denoting $u_0=u$, every gate $u_i$ for $i\geq 1$ takes as the only input the gate $u_{i-1}$.
Thus, the additional $\AND$-gates copy the evaluation of $u$ and in total all the gates $u,u_1,\ldots,u_{d-2}$ have fan-out $d$; this fan-out $d$ can be used to re-wire the $d$ wires originally going out of $u$.

Finally, every $\OR$-gate with fan-in $1$ is replaced by an $\AND$-gate with the same input and output --- they have exactly the same functionality.
All in all, we have achieved the following properties:
\begin{itemize}
\item circuit $C$ has only $\AND$, $\OR$, $\TRUE$, and $\FALSE$ gates, out of which there is one output gate $y$;
\item each $\AND$ gate of $C$ has fan-out at most $2$ and fan-in $1$ or $2$;
\item each $\OR$ gate of $C$ has fan-out at most $2$ and fan-in $2$.
\end{itemize}
The problem is to determine whether the output gate $y$ evaluates to $1$.

\paragraph*{Gadgets.}
We now present the gadgets for the $\OR$/$\AND$ gates; they are depicted in \cref{fig:gadgets}.

\begin{figure}[htbp!]
                \centering
		\def\svgwidth{0.8\textwidth}
                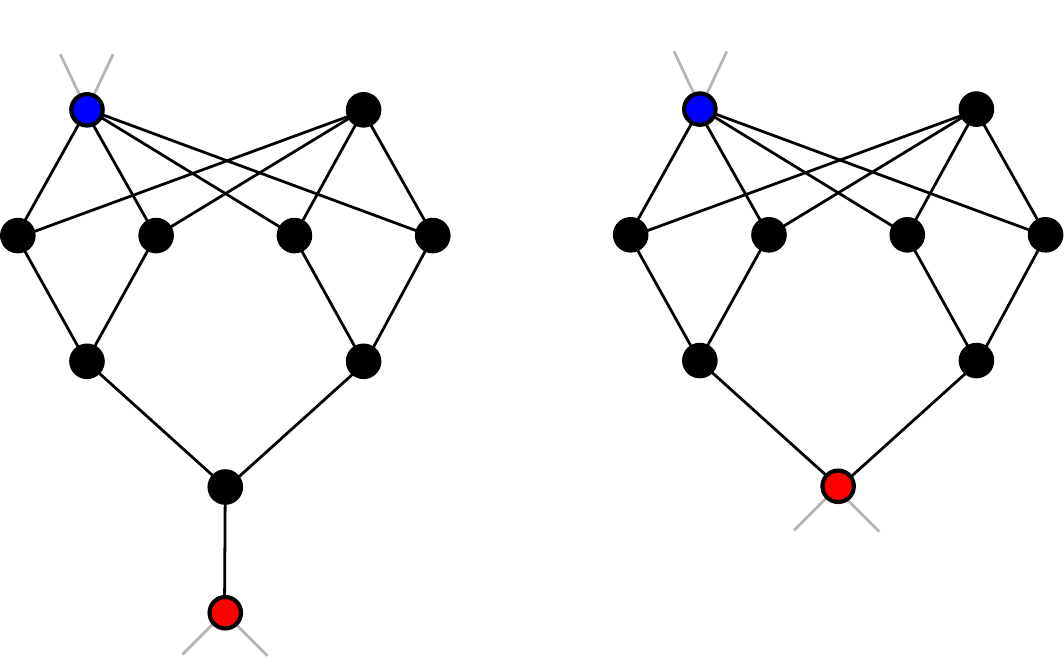
        \caption{Gadgets for the $\OR$/$\AND$ gates. Input vertices are depicted in red, output vertices are depicted in blue. The grey edges outgoing from the input and output vertices depict places where connections
        to other gadgets are attached.}\label{fig:gadgets}
\end{figure}

The $\OR$ gadget has $10$ vertices out of which there are two named: one {\em{input vertex}} named ``$\inpv$'', and one {\em{output vertex}} named ``$\outv$''.
Similarly for the $\AND$ gadget. We make the following two observations. First, in each gadget there is only one vertex of degree smaller than $3$, which is the input vertex; in the $\OR$ gadget it
has degree $1$, and in the $\AND$ gadget it has degree $2$. Further, each gadget has a vertex ordering with degeneracy $2$ --- just order the vertices in a top-down manner.

In the final construction we will use many copies of these gadget. For a copy $A$ of the $\OR$/$\AND$ gadget, by $\inpv[A]$ and $\outv[A]$ we denote the input and output vertiex in $A$, respectively,
and by $\sigma[A]$ we denote the abovementioned top-down vertex ordering of $A$ with degeneracy $2$.

\paragraph*{Construction.}
We now present the construction of the graph $G$ out of the circuit $C$. We first introduce gadgets to reflect the structure of $C$.
\begin{enumerate}
\item For each $\AND$ gate $x$ in $C$, introduce a copy $A_x$ of the $\AND$ gadget.
\item For each $\OR$ gate $x$ in $C$, introduce a copy $A_x$ of the $\OR$ gadget.
\item For each $\TRUE$ or $\FALSE$ gate $x$ in $C$, introduce a copy $A_x$ of the $\AND$ gadget.
\item Whenever there is a wire from gate $z$ to gate $x$ (i.e. $z$ is an input to $x$), add an edge between $\outv[A_z]$ and $\inpv[A_x]$.
\end{enumerate}
Next, we perform the following construction similar to the reduction of fan-out.
Let $k$ be the number of $\FALSE$ gates.
Introduce $k$ $\AND$ gadgets $B_1,B_2,\ldots,B_k$, and denoting $A_y=B_0$ (recall that $y$ is the output gate of $C$), add an edge between $\outv[B_{i-1}]$ and $\inpv[B_i]$ for each $i\in [k]$.
Finally, to each $\FALSE$ gate $x$ assign a different integer $i\in [k]$ and add an edge between $\outv[B_i]$ and $\inpv[A_x]$. This concludes the construction; it is clear that it can be implemented
in logspace.

\paragraph*{Correctness.}
To verify the correctness of the reduction we need to prove the following lemma.

\begin{lemma}\label{lem:correctness}
Gate $y$ evaluates to $1$ in $C$ if and only if the degeneracy of $G$ is at most $2$.
\end{lemma}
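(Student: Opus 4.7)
I will prove the two implications separately, exploiting the two structural properties of the gadgets recalled in the excerpt: within each gadget the input vertex is the unique vertex of degree below $3$ (degree $1$ for $\OR$, degree $2$ for $\AND$), and each gadget has a degeneracy-$2$ vertex ordering with the output vertex first and the input vertex last.

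\textbf{Direction ``$y$ evaluates to $0$ $\Rightarrow$ degeneracy at least $3$''.} I will exhibit an induced subgraph of $G$ of minimum degree~$3$ and conclude via Proposition~\ref{prop:subgraph}. Let $H$ be the subgraph of $G$ induced by the vertices of all $A_x$ for gates $x$ evaluating to $0$, together with the vertices of $B_1,\ldots,B_k$ (note $B_0=A_y$ is already included since $y$ evaluates to $0$). Non-input vertices of any gadget already have degree $\ge 3$ inside the gadget, hence also in $H$. For input vertices I do a case analysis: a $\FALSE$-gate input vertex has the gadget's $2$ neighbors plus its wire to $\outv[B_{i_x}]\in H$; each $\inpv[B_i]$ with $i\ge 1$ has $2$ gadget neighbors plus the wire to $\outv[B_{i-1}]\in H$; the input vertex of an $\OR$-gate evaluating to $0$ has $1$ gadget neighbor plus both wires to $0$-predecessors (which are in $H$); the input vertex of an $\AND$-gate evaluating to $0$ has $2$ gadget neighbors plus at least one wire to a $0$-predecessor. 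In every case the degree in $H$ is at least $3$.

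\textbf{Direction ``$y$ evaluates to $1$ $\Rightarrow$ degeneracy at most $2$''.} I will construct an explicit elimination order in three phases and show that every removed vertex has degree at most $2$ at the moment of removal. In \emph{Phase~1}, I iterate through the gates of $C$ that evaluate to~$1$ in bottom-up evaluation order; for each such gate $x$ I remove $A_x$ starting from $\inpv[A_x]$ and continuing along the reverse of $\sigma[A_x]$ (so $\outv[A_x]$ is last). The input vertex has degree at most $2$ because the wires from its $1$-evaluating predecessors have already disappeared (those gadgets have been fully eliminated, including their output vertices): for a $\TRUE$ gate there are no wires; for an $\AND$-gate evaluating to $1$ all predecessors evaluate to $1$; for an $\OR$-gate evaluating to $1$ at least one predecessor does, and $\inpv$ contributes only $1$ gadget-edge. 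After $\inpv[A_x]$ is gone, each subsequent vertex of $A_x$ has at most $2$ unremoved gadget neighbors by the choice of $\sigma[A_x]$, and no external wires (external wires only touch $\inpv$ and $\outv$). Finally $\outv[A_x]$ has no remaining gadget neighbors and at most fan-out-many external wires, i.e.\ at most $2$.

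In \emph{Phase~2}, having eliminated $B_0=A_y$ in Phase~1, I peel off $B_1,B_2,\ldots,B_k$ in order: $\inpv[B_i]$ has its $2$ gadget edges and the wire to $\outv[B_{i-1}]$, which is already removed, so its current degree is $2$; the rest of $B_i$ is handled as in Phase~1 (the external wires of $\outv[B_i]$ leaving $B_i$ go to $\inpv[B_{i+1}]$ and the assigned $\FALSE$ input vertex, both at most two). In \emph{Phase~3}, I remove the remaining gadgets $A_x$ for gates $x$ evaluating to $0$, again in bottom-up order: for a $\FALSE$ gate $\inpv[A_x]$ has the wire to $\outv[B_{i_x}]$, which was removed in Phase~2, so only $2$ gadget edges survive; for a $0$-evaluating $\AND$/$\OR$ gate, by induction all $0$-predecessors are gone (Phase~3) and all $1$-predecessors were eliminated in Phase~1, so every external wire at $\inpv[A_x]$ has already vanished. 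Once $\inpv[A_x]$ is removed, the rest of $A_x$ is disposed of exactly as in Phase~1.

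The main combinatorial obstacle is the careful interleaving of Phase~1 and the fact that an $\OR$ input vertex has gadget-degree only~$1$: this is precisely why one external wire is allowed to persist there, matching the semantics that a single $1$-input suffices. The fan-in/fan-out reductions performed in the preprocessing are essential to keep all external degrees at input and output vertices bounded by $2$, which is what makes the peeling argument tight.
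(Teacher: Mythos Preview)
Your proposal is correct and follows essentially the same approach as the paper. The paper also splits into the same two implications: for ``$y$ evaluates to $0$'' it exhibits exactly the induced subgraph you describe and runs the same case analysis on input vertices; for ``$y$ evaluates to $1$'' it builds a degeneracy-$2$ vertex ordering in three blocks ($0$-gates in reverse topological order, then $B_k,\ldots,B_1$, then $1$-gates in reverse topological order), which is precisely the reverse of your three-phase elimination order.
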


The proof of \cref{lem:correctness} is divided into two lemmas, each showing one implication.

\begin{lemma}\label{lem:eval-deg}
If gate $y$ evaluates to $1$ in $C$, then $G$ has degeneracy at most $2$.
\end{lemma}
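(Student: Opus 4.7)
The plan is to exhibit a vertex ordering of $G$ of degeneracy at most $2$ by describing an elimination sequence: a sequence of vertex removals in which each removed vertex has at most $2$ neighbors in the remaining graph. The order of removals will mirror the bottom-up evaluation of $C$ and will be organized in three phases. In \textbf{Phase~A} we process the gadgets $A_x$ for gates $x$ evaluating to $1$, in topological order of $C$. In \textbf{Phase~B} we process the chain gadgets $B_1, \ldots, B_k$ in index order. In \textbf{Phase~C} we process the gadgets $A_x$ for gates $x$ evaluating to $0$, again in topological order of $C$. Within each gadget $A_x$, vertices are removed in reverse of the ordering $\sigma[A_x]$, so that $\inpv[A_x]$ is removed first and $\outv[A_x]$ last.

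The degeneracy-$2$ property of $\sigma[A_x]$ immediately takes care of all internal vertices of $A_x$ other than $\inpv[A_x]$ and $\outv[A_x]$: such a vertex carries no external edges, and at the moment of its removal it has at most $2$ remaining internal neighbors. The crux is to verify, at each step, that $\inpv[A_x]$ and $\outv[A_x]$ also satisfy the degree-$2$ bound at removal time. The output vertex is easy: when we remove $\outv[A_x]$, all other vertices of $A_x$ are already gone, so its internal degree is $0$, and the only remaining neighbors are the input vertices of parents of $x$, of which there are at most the fan-out of $x$, namely at most $2$.

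The input vertex is where the Boolean logic enters: by design, an $\OR$ gadget's input has internal degree $1$ and an $\AND$ gadget's input has internal degree $2$, so $\inpv[A_x]$ becomes removable exactly when enough external edges have been cleared to match the Boolean semantics of the gate. Concretely, for a true $\OR$ gate at least one input evaluates to $1$ and so has been processed in Phase~A, leaving at most $1$ external edge at $\inpv[A_x]$ and total degree at most $2$; for a true $\AND$ gate all inputs evaluate to $1$ and have been processed, leaving external degree $0$ and total degree $2$; a $\TRUE$ gadget has no external edges at its input to begin with. The chain gadgets are handled analogously, using that $\outv[B_{i-1}]$ is removed just before $\inpv[B_i]$ is processed. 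In Phase~C, the $\FALSE$ gates can be processed first since Phase~B has already removed their one external edge from the chain, and the remaining false gates have all their inputs processed either in Phase~A (true inputs) or earlier in Phase~C (false inputs, by the topological order), so the external degree at $\inpv[A_x]$ is $0$ in both the $\AND$ and $\OR$ cases.

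The main obstacle will be the careful bookkeeping of this case analysis: confirming, for every combination of gate type, fan-in, and truth value of the gate and its inputs, that the phase scheduling described above ensures the required degree bound at each removal. Each individual check is routine and reduces to counting internal and external edges at $\inpv[A_x]$; they all work because the $\AND$ and $\OR$ gadgets were constructed precisely so that removability corresponds to the Boolean evaluation of the gate, and the hypothesis that $y$ evaluates to $1$ is exactly what guarantees that this propagation eventually reaches every vertex of $G$.
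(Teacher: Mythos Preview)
Your proposal is correct and takes essentially the same approach as the paper: the paper explicitly writes down a vertex ordering $\sigma$ of degeneracy at most $2$ by concatenating the gadget orderings $\sigma[A_x]$ for false gates (in reverse topological order), then $\sigma[B_k],\ldots,\sigma[B_1]$, then $\sigma[A_x]$ for true gates (in reverse topological order), and your three-phase elimination sequence is exactly the reversal of this ordering. The case analysis you outline for $\inpv[A_x]$ and $\outv[A_x]$ matches the paper's verification vertex by vertex; the only cosmetic difference is that you phrase it as ``remove a vertex of degree $\le 2$'' while the paper phrases it as ``at most two neighbors earlier in $\sigma$''.
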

\begin{proof}
Since $C$ is a circuit, it is acyclic and therefore there exists a topological ordering $\tau$ on the gates of $C$; that is, for every gate $x$ of $C$, all the inputs of $x$ are before $x$ in $\tau$.
Construct a vertex ordering $\sigma$ of $G$ as follows:
\begin{enumerate}
\item First, concatenate the vertex orderings $\sigma[A_x]$ for all gates $x$ of $C$ that evaluate to $0$, where the order of concatenation is the reverse of $\tau$.
\item Next, append the vertex orderings $\sigma[B_k],\sigma[B_{k-1}],\ldots,\sigma[B_1]$, in this order.
\item Finally, append the concatentaion of the vertex orderings $\sigma[A_x]$ for all gates $x$ of $C$ that evaluate to $1$, where again the order of concatenation is the reverse of $\tau$.
\end{enumerate}
We now verify that $\sigma$ has degeneracy $2$.
Consider any vertex $u$ of $G$; we check that at most two neighbors of $u$ are placed before $u$ in $\sigma$. 
If $u$ is an internal (i.e. not input or output) vertex of any gadget, then all its neighbors are within the same gadget, and in the vertex ordering of this gadget at most two
neighbors of $u$ were placed before $u$. 
If $u$ is an output vertex of any gadget, then it has four neighbors within this gadget and at most two outside of this gadget. However, all $4$ neighbors within the gadget are placed after $u$ in the vertex
ordering of this gadget, hence $u$ can have at most two neighbors placed before it in $\sigma$.

We are left with the case when $u$ is the input vertex of some gadget.
Consider first the case when $u=\inpv[B_i]$ for some $i\in [k]$.
Since $B_i$ is an $\AND$ gadget, $u$ has two neighbors within $B_i$, both placed before it in $\sigma[B_i]$, and one other neighbor $\outv[B_{i-1}]$. 
By the construction of $\sigma$ and since $y$ evaluates to $1$ in $C$, the vertex $\outv[B_{i-1}]$
is placed after $u$ in $\sigma$, hence $u$ has only two neighbors placed before it in $\sigma$.

Next consider the case when $u=\inpv[A_x]$ for some gate $x$ that evaluates to $0$ in $G$.
If $x$ is a $\FALSE$ gate, then $A_x$ is a copy of the $\AND$ gadget and $u$ has two neighbors within $A_x$, both placed before it in $\sigma$, and one neighbor in some $B_i$, which is placed after it in $\sigma$.
If $x$ is an $\OR$/$\AND$ gate, then $u$ has either one or two neighbors within $A_x$, placed before it in $\sigma$, and one or two neighbors in some other gadgets, say $A_z$ and possibly $A_{z'}$.
However, then $x\geq_\tau z$ and $x\geq_\tau z'$ and hence both $A_z$ and $A_{z'}$ are placed entirely after $A_x$ in $\sigma$ --- regardless whether $z$ and~$z'$ evaluate to $0$ or $1$ in $C$.

Finally, consider the case when $u=\inpv[A_x]$ for some gate $x$ that evaluates to $1$ in $G$.
If $x$ is a $\TRUE$ gate, then $u$ has only two neighbors, both lying within $A_x$ and placed before $u$ in $\sigma$.
If $x$ is an $\OR$ gate, then $u$ has one neighbor within $A_x$, placed before $u$ in $\sigma$, and two neighbors in other gadgets, say $A_z$ and $A_{z'}$, with $x\geq_\tau z$ and $x\geq_\tau z'$. 
Since $x$ evaluates to $1$, either $z$ or $z'$ evaluates to $1$ as well, and consequently the corresponding gadget $A_z$ or $A_{z'}$ is placed entirely after $A_x$ in $\sigma$.
Hence again $u$ has at most two neighbors placed before $u$ in $\sigma$.
If $x$ is an $\AND$ gate, then $u$ has two neighbors within $A_x$, both placed before $u$ in $\sigma$, and one or two neighbors in other gadgets, say $A_z$ and possibly $A_{z'}$, 
with $x\geq_\tau z$ and $x\geq_\tau z'$. 
Since $x$ evaluates to $1$, both~$z$ and $z'$ also need to evaluate to $1$, and hence the corresponding gadgets $A_z$ and $A_{z'}$ are both placed entirely after $A_x$ in $\sigma$.
So again $u$ has at most two neighbors placed before it in $\sigma$.

Having considered all the cases, we conclude that indeed the degeneracy of $\sigma$ is at most $2$.
\end{proof}

\begin{lemma}\label{lem:deg-eval}
If gate $y$ evaluates to $0$ in $C$, then the subgraph of $G$ induced by all gadgets $B_i$, $i\in [k]$, and all gadgets $A_x$ for gates $x$ that evaluate to $0$ in $C$ has minimum degree at least~$3$.
\end{lemma}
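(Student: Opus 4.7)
\medskip\noindent\textbf{Proof plan for \cref{lem:deg-eval}.}
Let $H$ denote the induced subgraph in question: the vertex set of $H$ is the union of the vertex sets of all $B_i$ for $i\in[k]$ and of all $A_x$ for gates $x$ that evaluate to $0$ in $C$. The plan is to show that every vertex $u$ of $H$ has $\deg_H(u)\geq 3$ by a case analysis on where $u$ sits in the gadget structure. The key observation is that each individual gadget is already dense away from its input vertex: by the construction recalled just before \cref{lem:eval-deg}, within any single $\OR$/$\AND$ gadget the only vertex of degree less than $3$ is its input vertex, which has internal degree $1$ in the $\OR$ gadget and internal degree $2$ in the $\AND$ gadget.

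Using this, I would handle all non-input vertices first, as they are immediate. If $u$ is an internal vertex of some gadget $A\in H$ (i.e.\ neither its input nor its output vertex), then all edges incident to $u$ in $G$ already lie inside $A$, and since $A\subseteq H$ they survive in $H$; hence $\deg_H(u)=\deg_A(u)\geq 3$. If $u$ is the output vertex of some gadget $A\in H$, then $u$ has at least three neighbors inside $A$, all of which are in $H$, so again $\deg_H(u)\geq 3$ regardless of what happens to its external wires.

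The substantive part of the argument is the analysis of input vertices. I would go case by case through the possible gates and verify that every ``missing'' degree inside the gadget is compensated by a surviving external edge in $H$:
\begin{itemize}
\item If $u=\inpv[A_x]$ for a $\FALSE$ gate $x$, then $A_x$ is an $\AND$ gadget contributing internal degree $2$, and the external wire goes to $\outv[B_i]$ for the index $i\in[k]$ assigned to $x$; since every $B_i\in H$, this edge survives.
\item If $u=\inpv[A_x]$ for an $\OR$ gate $x$ evaluating to $0$, then both inputs $z,z'$ of $x$ evaluate to $0$, so both $A_z,A_{z'}\in H$ and both external wires survive, giving $1+2=3$.
\item If $u=\inpv[A_x]$ for an $\AND$ gate $x$ evaluating to $0$, then its internal degree is $2$ and at least one input $z$ of $x$ (for either fan-in $1$ or fan-in $2$) evaluates to $0$, so $A_z\in H$ and at least one external wire survives.
\item If $u=\inpv[B_i]$, then internal degree is $2$ and the single external wire goes to $\outv[B_{i-1}]$, where we set $B_0:=A_y$; for $i\geq 2$ this neighbor lies in $B_{i-1}\subseteq H$, and for $i=1$ we use exactly the hypothesis that $y$ evaluates to $0$, so $A_y\in H$.
\end{itemize}

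The main (only) obstacle is really bookkeeping: making sure the fan-in normalization done during preprocessing is respected in every case (in particular that $\AND$ gates may have fan-in $1$, whereas $\OR$ gates have been normalized to fan-in exactly $2$) and that the gadget adjacencies between $\outv[\cdot]$ and $\inpv[\cdot]$ exactly match the wiring of $C$. Once the case analysis is laid out as above, each subcase is a one-line check, and putting them together yields the lemma.
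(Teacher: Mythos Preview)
Your proposal is correct and follows essentially the same approach as the paper: reduce to input vertices using the gadget property that every non-input vertex already has internal degree at least $3$, then do the same case split on $B_i$, $\FALSE$, $\OR$, and $\AND$ gates, using in each case exactly the wiring and evaluation argument you outline. The only cosmetic difference is that the paper treats output and internal vertices together as ``non-input'' rather than splitting them.
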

\begin{proof}
Let $H$ be this induced subgraph of $G$. Observe that every non-input vertex of every gadget has at least three neighbors already within this gadget.
Since gadgets are included in $H$ in entirety, it follows that all non-input vertices contained in $H$ have degree at least $3$ in $H$.

It remains to show that every input vertex $u$ in $H$ also has degree at least $3$ in $H$.
Suppose first that $u=\inpv[B_i]$ for some $i\in [k]$.
Since $B_i$ is a copy of the $\AND$ gadget, $u$ has two neighbors within $B_i$ and one neighbor being $\outv[B_{i-1}]$.
All gadgets $A_y=B_0,B_1,\ldots,B_k$ are included in $H$, because $y$ evaluates to $0$ in $C$, so in particular $\outv[B_{i-1}]$ is also included in $H$.

Now suppose that $u=\inpv[A_x]$ for some gate $x$ that evaluates to $0$ in $C$.
If $x$ is a $\FALSE$ gate, then $u$ has two neighbors within $A_x$ and one neighbor in some $B_i$, which is also included in $H$.
If $x$ is an $\OR$ gate, then $u$ has one neighbor within $A_x$ and two neighbors in other gadgets $A_z$ and $A_{z'}$, where $z$ and $z'$ are the inputs to $x$.
Since $x$ evaluates to $0$ in $C$, both $z$ and $z'$ have to evaluate to $0$ in $C$ as well, hence both $A_z$ and $A_{z'}$ are included in $H$.
Consequently, $u$ has three neighbors in $H$.
Finally, if $x$ is an $\AND$ gate, then $u$ has two neighbors within $A_x$ and one or two neighbors in other gadgets, say $A_z$ and possibly $A_{z'}$, where $z$ and $z'$ are the inputs to $x$.
Again, since $x$ evaluates to $0$ in $C$, either $z$ or $z'$ have to evaluate to $0$ in $C$ as well, hence at least one of $A_z$ and $A_{z'}$ is included in $H$.
So again $u$ has three neighbors in $H$.

Having considered all the cases, we conclude that $H$ indeed has minimum degree at least $3$.
\end{proof}

\cref{lem:correctness} now directly follows from \cref{lem:eval-deg} and \cref{lem:deg-eval}: \cref{lem:eval-deg} provides the left-to-right implication, 
while \cref{lem:deg-eval} in combination with \cref{prop:subgraph}
provides the right-to-left implication. This concludes the proof of \cref{thm:degeneracy-Phard}.

At the end we would like to remark that in \cref{thm:degeneracy-Phard} the constant $2$ can be replaced by any integer $c\geq 2$. To see this, consider adding $c-2$ universal vertices 
(i.e. adjacent to all other vertices) to the graph $G$ obtained in the reduction. It is not hard to see that this increases the degeneracy of the graph by exactly $c-2$.

\section{Proof of \cref{lem:adm-half}}\label{app:adm-half}

Our presentation is based on the proof of \cref{lem:adm-obstacle} in \cite[Lemma~8 of Chapter~2]{notes}, while the proof itself is a generalization of the proof of Grohe et al.~\cite{grohe2015colouring}.

\begin{aloneproof}[of \cref{lem:adm-half}]
For the sake of contradiction suppose $G$ does not admit any depth-$(r-1)$ topological minor with edge density larger than $d$.
Let $\ell\coloneqq 6\varepsilon^{-1}rd^3$.
Let $U\subseteq S$ be the set of those vertices $v\in S$ for which $\backconn_r(S,v)>\ell$.
We know that $|U|\geq \varepsilon|S|$.
For each $v\in U$, let us fix any path family $\Pp_v$ witnessing $\backconn_r(S,v)>\ell$.
That is, $\Pp_v$ consists of more than $\ell$ paths in $G$ such that each $P\in \Pp_v$ has length at most $r$, leads from $v$ to another vertex of $S$, and all its internal vertices are outside of $S$, 
and moreover all paths in $\Pp_v$ pairwise share only the vertex $v$.

Let $\Qq$ be an inclusion-wise maximal family of paths in $G$ satisfying the following conditions:
\begin{itemize}
\item Each path $Q\in \Qq$ has length at most $2r-1$, connects two different vertices in $S$, and all its internal vertices do not belong to $S$.
\item Paths from $\Qq$ are pairwise internally vertex-disjoint (i.e. they can share only the endpoints).
\item For every pair of distinct vertices $u,v\in S$, there is at most one path from $\Qq$ that connects $u$ and $v$.
\end{itemize}
Consider a graph $H$ on the vertex set $S$ where $u,v\in S$ are adjacent if and only if there is a path in $\Qq$ connecting $u$ and $v$.
Observe that paths in $\Qq$ witness that $H$ is a depth-$r$ topological minor of $G$. Therefore, the edge density of $H$ is at most $d$, implying that
$|\Qq|\leq d|S|$.

Observe further that every subgraph of $H$ is also a depth-$(r-1)$ topological minor of $G$, and hence has edge density at most $d$.
By the hand-shaking lemma, a graph of edge density at most $d$ contains a vertex of degree at most $2d$.
Consequently, every subgraph of $H$ has minimum degree at most $2d$, which means, by \cref{prop:subgraph}, that $H$ is $2d$-degenerate.
By \cref{prop:greedy}, $H$ admits a proper coloring $\lambda$ with $(2d+1)$-colors.

Coloring $\lambda$ partitions $U$ into $2d+1$ color classes. Let $I\subseteq U$ be the largest of them; then $|I|\geq \frac{|U|}{2d+1}\geq \frac{\varepsilon|S|}{2d+1}\geq \frac{\varepsilon|S|}{3d}$.
Note that $I$ is an independent set in $H$.

Let $K\coloneqq S\cup \bigcup_{Q\in \Qq} V(Q)$. Since each path $Q\in \Qq$ contains at most $2r-2$ internal vertices lying outside of $S$, and $|\Qq|\leq d|S|$, we have
\begin{equation}\label{eq:K}
|K|\leq (1+(2r-2)d)|S|\leq 2rd|S|.
\end{equation}
For every $v\in I$, we construct a path family $\Pp'_v$ from $\Pp_v$ by trimming paths as follows. 
For a path $P\in \Pp$, let $u$ be the first (closest to $v$) vertex of $P$ that belongs to $K$; such a vertex always exists since the other endpoint of $P$ belongs to $S$.
Then we define $P'$ as the prefix of $P$ from $v$ to $u$, and we let $\Pp'_v$ to consist of all paths $P'$ for $P\in \Pp_v$. The following assertions follow directly from the construction:
\begin{itemize}
\item $|\Pp_v'|=|\Pp_v|$;
\item each path $P'\in \Pp_v'$ has length at most $r$, connects $v$ with another vertex of $K$, and all its internal vertices do not belong to $K$; and
\item paths from $\Pp_v'$ pairwise share only $v$, and in particular their endpoints other than $v$ are pairwise different.
\end{itemize}
Let $\Rr\coloneqq \bigcup_{v\in I} \Pp'_v$.
The following claim is the crucial step in the proof.

\begin{claim}\label{cl:crux}
Any two distinct paths $R,R'\in \Rr$ are internally vertex-disjoint and do not have the same endpoints. 
\end{claim}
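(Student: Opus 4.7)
The plan is to contradict the maximality of $\Qq$ in each bad case. Write $R\in \Pp'_v$ and $R'\in \Pp'_{v'}$ for some $v,v'\in I$. The crucial structural observation, immediate from the trimming construction, is that every endpoint of a path in $\Rr$ lies in $K$, while every internal vertex of such a path lies outside $K$ (hence outside $S$ and outside every $V(Q)$ for $Q\in \Qq$). Consequently, an internal vertex of $R$ cannot equal an endpoint of $R'$, and vice versa, so whatever overlap $R$ and $R'$ have is either at common endpoints or at common internal vertices.

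First, the case $v=v'$ is immediate: the original paths of $\Pp_v$ pairwise share only $v$, so their trimmed prefixes also share only $v$, giving disjoint interiors and distinct other endpoints. From now on assume $v\neq v'$. Since $v,v'\in I$ and $I$ is independent in $H$, no $Q\in \Qq$ connects them --- this is the hook for applying the maximality of $\Qq$.

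Suppose first that $R$ and $R'$ share an internal vertex $w$. Concatenating the subpath of $R$ from $v$ to $w$ with the subpath of $R'$ from $w$ to $v'$ gives a walk of length at most $(r-1)+(r-1)=2r-2$, from which we extract a simple path $N$ from $v$ to $v'$. Every internal vertex of $N$ is internal to $R$ or to $R'$ and hence lies outside $K$; in particular $N$ has length $\leq 2r-1$, connects two distinct vertices of $S$ not joined by any path of $\Qq$, has no internal vertex in $S$, and is internally vertex-disjoint from every $Q\in\Qq$. Adding $N$ to $\Qq$ preserves all defining conditions and strictly enlarges it, contradicting maximality. The remaining case --- that $R,R'$ have the same unordered pair of endpoints --- forces, because $v\neq v'$, the trimming endpoint of $R$ to equal $v'$; then $R$ itself is a path of length at most $r\leq 2r-1$ from $v$ to $v'$ with internal vertices outside $K$, which can be added to $\Qq$ by exactly the same argument.

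The main thing to be careful about is verifying that the path $N$ meets every requirement of $\Qq$ simultaneously: length at most $2r-1$, internal vertices outside $S$, no duplication of an $S$-pair already covered by $\Qq$, and internal disjointness from every existing $Q\in \Qq$. Each clause is delivered by a distinct ingredient --- the length-$r$ bound on paths of $\Pp_v$, the trimming keeping internal vertices outside $K$, the independence of $I$ in $H$, and the inclusion $\bigcup_{Q\in\Qq}V(Q)\subseteq K$ respectively --- and only their simultaneous combination yields the contradiction.
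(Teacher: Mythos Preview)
Your proof is correct and follows essentially the same approach as the paper: contradict the maximality of $\Qq$ by exhibiting, in each bad case, a short $v$--$v'$ path with internal vertices outside $K$, using that $I$ is independent in $H$ so no such path already lies in $\Qq$. Your write-up is in fact slightly more careful than the paper's, as you explicitly verify all four conditions required to add the new path to $\Qq$ and make the useful preliminary observation that internal vertices and endpoints of paths in $\Rr$ live in disjoint sets.
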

\begin{clproof}
Suppose $R\in \Pp'_v$ and $R'\in \Pp'_{v'}$ for some $v,v'\in I$. If $v=v'$ then the claim follows from the properties of $\Pp'_v$ stated above, hence suppose otherwise.
Suppose first that $R$ and $R'$ intersect at some vertex $w\notin K$; in particular $w$ is an internal vertex of both $R$ and $R'$. 
Consider the union of the prefix of $R$ from $v$ to $w$ and the prefix of $R'$ from $v'$ to $w$. 
This union contains a path of length at most $2r-2$ connecting $v$ and $v'$, whose all internal vertices do not belong to $K$; call this path $T$.
Since $v,v'\in I$, $v$ and $v'$ are non-adjacent in $H$, which means that in $\Qq$ there is no path connecting $v$ and $v'$.
It follows that $T$ could be added to $\Qq$ without spoiling any of the conditions imposed on $\Qq$, a contradiction with the maximality of $\Qq$.

We are left with verifying that it is not the case that $R$ and $R'$ have exactly the same endoints $v$ and $v'$.
But in this case $R$ would be a path of length at most $r$ connecting $v$ and $v'$ that would be disjoint from $K$. So again $R$ could be added to $\Qq$ without spoiling any of the conditions
imposed on $\Qq$, which would contradict the maximality of $\Qq$.
\end{clproof}

Consider now a graph $J$ on the vertex set $K$ where two vertices $v,w$ are adjacent if and only if there is a path in $\Rr$ that connects them.
By \cref{cl:crux} the paths in $\Rr$ witness that $J$ is a depth-$(r-1)$ topological minor of $G$, so in particular the edge density of $J$ is at most $d$, implying $|E(J)|\leq d|K|$.
On the other hand, by \cref{cl:crux} every path in $\Rr$ gives rise to a different edge in $E(J)$, implying that $|E(J)|\geq |\Rr|$.
By combining this with \cref{eq:K} we infer that
$$2rd^2|S|\geq d|K|\geq |\Rr|>|I|\cdot \ell\geq \frac{\varepsilon|S|}{3d}\cdot 6\varepsilon^{-1}rd^3=2rd^2|S|.$$
This is a contradiction.
\end{aloneproof}

\end{document}